%
%
%
%

%
\RequirePackage{fix-cm}
\documentclass[smallextended]{svjour3}       
\smartqed  

\usepackage{amsmath,amsfonts,amssymb,ifthen,graphicx}
\usepackage{bbm}
\usepackage{algpseudocode}
\usepackage{colortbl,xcolor}
\usepackage{thmtools}
\usepackage{thm-restate}
\usepackage{url}
\usepackage{bm}
\usepackage{algorithm}
\usepackage{algorithmicx}
\usepackage[]{todonotes}

\newcommand{\Location}{\ell}
\newcommand{\Cost}{\mathrm{cost}}
\newcommand{\sCost}{\Cost^{s}}
\newcommand{\tCost}{\Cost^{t}}
\newcommand{\A}{\mathcal{A}}

\newcommand{\Space}{\mathcal{S}}
\newcommand{\ALG}{\A}

\newcommand{\OPT}{\adv{\A}}

\newcommand{\adv}[1]{{#1}^*}
\newcommand{\Round}{\pi}
\newcommand{\Phase}{\phi}

\newcommand{\Match}{\mathfrak{m}}
\newcommand{\DFunction}{\mu}
\newcommand{\Distance}{\delta}


\begin{document}

\title{Online Matching with Convex Delay Costs\thanks{A short version appears in the proceedings of ISAAC 2018 under the title Impatient Online Matching. This full version includes all proofs and new lower bounds.}}

\author{Xingwu Liu \and 
        Zhida Pan\footnote{Corresponding author} \and 
        Yuyi Wang \and 
        Roger Wattenhofer
}

\institute{X. Liu \at
           SKL Computer Architecture, ICT, CAS\\
University of Chinese Academy of Sciences, Beijing, China \\
           \email{liuxingwu@ict.ac.cn}
           \and
           Z. Pan \at
           X-Order Lab, Shanghai, China \\
           \email{zhidapan@gmail.com}
           \and
           Y. Wang \at
           X-Order Lab, Shanghai, China \\
           \email{yywang@xorder.ai}
           \and
           R. Wattenhofer \at
           ETH Z\"{u}rich \\
           \email{wattenhofer@ethz.ch}
}
\date{Received: date / Accepted: date}

\maketitle

\begin{abstract}
We investigate the problem of Min-cost Perfect Matching with Delays (MPMD) in which requests are pairwise matched in an online fashion with the objective to minimize the sum of space cost and time cost. Though linear-MPMD (i.e., time cost is linear in delay) has been thoroughly studied in the literature, it does not well model impatient requests that are common in practice. Thus, we propose convex-MPMD where time cost functions are convex, capturing the situation where time cost increases faster and faster. Since the existing algorithms for linear-MPMD are not competitive any more, we devise a new deterministic algorithm for convex-MPMD problems. For a large class of convex time cost functions, our algorithm achieves a competitive ratio of $O(k)$ on any $k$-point uniform metric space, or $O(k\Delta)$ when the metric space has aspect ratio $\Delta$. Moreover, we prove lower bounds for the competitive ratio of deterministic and randomized algorithms, indicating that our deterministic algorithm is optimal. This optimality uncover a substantial difference between convex-MPMD and linear-MPMD, since linear-MPMD allows a deterministic algorithm with constant competitive ratio on any uniform metric space. 


\end{abstract}
\keywords{online algorithm, online matching, convex function, competitive analysis, lower bound}

\section{Introduction}
\label{sec:intro}

Online matching has been studied frantically in the last years. 
Emek et al. \cite{EmekKW2016} 
started the renaissance by introducing delays and optimizing the trade-off between timeliness and quality of the matching. This new paradigm leads to the problem of Min-cost Perfect Matching with Delays (MPMD for short), where requests arrive in an online fashion and need to be matched with one another up to delays. Any solution involves two kinds of costs or penalties. One is for quality: Requests that are paired up should in some sense be as compatible as possible, so incompatibility leads to penalties. The other is for timeliness: Delay in matching a request causes a cost that is an increasing function, called the time cost function, of the waiting time. The overall objective is to minimize the sum of the two kinds of costs. 

Tractable in theory and fascinating in practice, the MPMD problem has attracted more and more attention and inspired an increasing volume of literature \cite{EmekKW2016,Emek2017Minimum,AzarCK2017,AzarCK2017arXiv,AACCGKMWW2017}. 
However, existing work in this line only studied linear time cost functions, meaning that the time penalty grows at a constant rate no matter how long the delay is. This sharply contrasts to much of our real-life experience. Just imagine a hungry customer at a restaurant: waiting a short time is no problem -- but eventually, every additional minute becomes more annoying than ever. The discontentment is experiencing convex growth, an omnipresent concept in biology, physics, engineering, or economics. 

Actually, such convex growth of discontentment appears in various real-life scenarios of online matching. For instance, online game platforms often have to match pairs of players before starting a game (consider chess as an example). Players at the same, or at least similar, level of skills should be paired up so as to make a balanced game possible. Then it would be better to delay matching a player in case of no ideal candidate of opponents. Usually it is acceptable that a player waits for a short time, but a long delay may be more and more frustrating and even make players reluctant to join the platform again. 
Another example appears in organ transplantation: 
An organ transplantation recipient may be able to wait a bit, but waiting an extended time will heavily affect its health. One may think that organ transplantation would be better modeled by bipartite matching rather than regular matching as considered in this paper; however, organ-recipients and -donors usually come in incompatible pairs that will be matched with other pairs, e.g., two-way kidney exchange
\footnote{\url{https://www.hopkinsmedicine.org/transplant/programs/kidney/incompatible/paired_kidney_exchange.html}}
. 
More real-life examples include ride sharing (match two customers), joint lease (match two roommates), just mention a few.

On this ground, we study the convex-MPMD problem, i.e., the MPMD problem with convex time cost functions. To the best of our knowledge, this is the first work on online matching with non-linear time cost. 

Convexity of the time cost poses special challenges to the MPMD problem. An important technique in solving linear-MPMD, namely, MPMD with linear time cost function, is to minimize the total costs while sacrifice some requests by possibly delaying them for a long period (see, e.g., the algorithms in \cite{AzarCK2017,Emek2017Minimum,AACCGKMWW2017}). 
Because the time cost increases at a constant rate, it is the total waiting time, rather than waiting time of individual requests, that is of interest. Hence, keeping a request waiting is not too harmful. The case of convex time costs is completely different, since we cannot afford anymore to delay old unmatched requests, as their time costs grow faster and faster. Instead, early requests should be matched early. For this reason, existing algorithms for the linear-MPMD problem do not work any more for convex-MPMD, as confirmed by examples in Section \ref{sec:alg}.

Another element of the MPMD problem is a metric space which intuitively captures how compatible the requests are. We focus on uniform metric spaces, i.e., the points in the spaces are equal-distant. This is due to the fundamental role of uniform metrics in the MPMD problem. 
A common technique in MPMD literature is to embed a general metric into a probabilistic hierarchical separated tree (HST). Since every level of an HST can be considered as a uniform metric, the task is essentially reduced to algorithm design on uniform metrics (or aspect-ratio-bounded metrics which can also be handled by our results). 
Uniform metrics are known to be tricky, e.g., Emek et al.\ \cite{Emek2017Minimum} study linear-MPMD with only two points. 
Uniform metrics also play an important role in other online problems \cite{karlin1994competitive}. For example, the $k$-server problem restricted to uniform metrics is the well-known paging problem.

In this paper, we devise a novel algorithm $\A$ for the convex-MPMD problem which is deterministic and solves the problem optimally. More importantly, our results uncover a separation: the convex-MPMD problem, even when the cost function is just a little different from linear, is strictly harder than its linear counterpart. Specifically, our main results are as follows, where $f$-MPMD stands for the MPMD problem with time cost function $f$:

\begin{restatable}{theorem}{CROfTrivialMetric}
\label{CROfTrivialMetric}
For any $f(t)=t^\alpha$ with constant $\alpha>1$, the competitive ratio of $\A$ for $f$-MPMD on $k$-point uniform metric space is $O(k)$. 
\end{restatable}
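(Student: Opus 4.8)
The starting point is to extract the two structural features of $\A$ that the analysis really needs, so let me phrase the plan in terms of them. Write $\Distance$ for the common distance of the uniform metric and $\tau=f^{-1}(\Distance)$, the time at which a single request's accumulated delay cost $f(\tau)$ reaches $\Distance$; for fixed $\alpha>1$ this is an $O(1)$ quantity. The features I would rely on are: (i) $\A$ never keeps a request open much longer than $\Theta(\tau)$ once \emph{any} other open request exists, and it disposes of co-located open requests as cheaply as it can; and (ii) at any instant $\A$ has at most $k$ pairwise-distinct open locations, since co-located pairs are cleared away — hence at most $k$ "bad" (positive-cost) matches can be pending at once. The overall plan is a charging argument: split $\Cost(\A)=\sCost(\A)+\tCost(\A)$, bound each piece against $\Cost(\OPT)$, and show that a factor $\Theta(k)$ is lost in exactly one place.

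\textbf{Step 1: cost per match is $O(\Distance)$, up to unavoidable idling.} First I would show that every pair $\A$ forms is charged $O(\Distance)$: its space cost is at most $\Distance$ because the metric is uniform, and its time cost is $O(f(\tau))=O(\Distance)$ by feature (i), since neither endpoint stays open beyond $O(\tau)$ while a partner is available. The lone exception is a request that $\A$ keeps open a long time only because it is momentarily the \emph{only} request in the system; but at such moments $\OPT$ has likewise exhausted every other request, so this "forced idle" part of $\tCost(\A)$ can be charged directly to $\Cost(\OPT)$ up to a constant. Hence $\Cost(\A)=O(\Distance\cdot m)+O(\Cost(\OPT))$, where $m$ is the (common) number of pairs formed by $\A$ and by $\OPT$.

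\textbf{Step 2 (the crux): $\Cost(\OPT)=\Omega(\Distance\cdot m/k)$.} I would cut the timeline into phases $\Phase_1,\Phase_2,\dots$ delimited by the moments at which $\A$ is forced to act. Within one phase, feature (ii) bounds the number of positive-cost matches of $\A$ by $O(k)$, so $\A$'s cost in a phase is $O(\Distance k)$. For the lower bound, the requests that trigger phase $\Phase_i$ cannot all be matched by $\OPT$ for free and instantaneously — if they could, $\A$ would have matched them the same way and would never have been forced — so $\OPT$ must spend $\Omega(\Distance)$ attributable to $\Phase_i$. The delicate point, which I expect to be the main obstacle, is making this attribution $O(1)$-to-one across phases, i.e.\ ruling out that $\OPT$ amortizes one expensive action over many phases; I would control this either by spacing the phase boundaries by $\Omega(\tau)$ (convexity forbids postponing the "witness" requests beyond that) or via a potential that releases credit only when $\A$ is actually forced. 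Summing $\Cost(\OPT)=\sum_i\Omega(\Distance)$ against $\Cost(\A)=\sum_i O(\Distance k)$, and combining with Step 1, gives $\Cost(\A)=O(k)\,\Cost(\OPT)$ plus an $O(\Distance)$ additive term from the final, possibly incomplete, phase.

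\textbf{Loose ends.} I would handle separately the degenerate instances with $\Cost(\OPT)=0$ (every request occurs in a simultaneous co-located group, where $\A$ also pays $0$ by feature (i)); check that all hidden constants depend on $\alpha$ only through $f^{-1}(\Distance)$ and are therefore $O(1)$ for fixed $\alpha>1$; and note that on a uniform metric every "$\Distance$" above is a single scalar, so no rescaling across distance scales occurs — this is precisely why the bound is $O(k)$ here and degrades to $O(k\Delta)$ once the aspect ratio $\Delta$ enters.
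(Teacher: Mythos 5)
There is a genuine gap, and it sits exactly at what you call the crux (Step 2). Your justification for ``$\OPT$ must spend $\Omega(\Distance)$ attributable to each phase'' is: if the triggering requests could be matched by $\OPT$ for free and instantaneously, then $\A$ would have matched them the same way and would never have been forced. This inference is invalid for an online algorithm and is precisely the failure mode the paper exhibits in Example~3 (Strategy III): there the adversary arranges requests so that the online algorithm is forced into an unbounded number of external matches of cost $\Theta(\Distance+\theta)$ each, while the offline optimum matches everything internally at vanishing time cost, because the online algorithm's earlier (locally reasonable) decisions leave it with pending requests at the ``wrong'' points. Your two structural features (i) and (ii) are satisfied by Strategy~III as well, so any argument relying only on them would ``prove'' that Strategy~III is $O(k)$-competitive, which is false. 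The paper's proof escapes this by machinery your sketch never touches: the alignment/parity notion (comparing the parity of pending requests at each point under $\A$ and $\OPT$), the set $\Psi$ with its priority rule, and the round structure of $2k$ external matches, from which a pigeonhole argument extracts a point that initiates two external matches in one good complete round; only for that second match (Lemmas on aligned initiation, on misaligned points in $\Psi$, and on matches crossing $\Psi$) can one show $\OPT$ pays $\Omega(\Distance)$ per round, with bad rounds separately charged to $\OPT$'s own external matches. Note also that the resulting guarantee is $\Omega(\Distance)$ of $\OPT$ cost per $2k$ matches of $\A$, not per ``forced action,'' and your proposal gives no mechanism (potential or spacing) that actually delivers the per-phase credit you need.

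A secondary gap: in Step 1 you charge the ``forced idle'' time cost of a lone long-pending request to $\OPT$ ``up to a constant.'' This is not automatic. When $\A$ has a single old pending request, $\OPT$ also has a pending request by parity, but possibly a much fresher one; by convexity $\A$'s cost accrues at a far higher rate. The paper needs the truncation argument ($trun(z;2\Distance)=f(f^{-1}(z)-f^{-1}(2\Distance))$, Lemmas on $\Gamma_{v,t}$) together with the ratio lemma $\frac{z-2\Distance}{trun(z;2\Distance)}=O(1)$, which uses the specific form $f(t)=t^\alpha$ (or the extra conditions listed in the conclusion); for a general convex $f$ this ratio can be unbounded. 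So both the per-phase lower bound on $\OPT$ and the constant-factor charging of long delays require arguments your proposal does not supply.
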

Theorem \ref{CROfTrivialMetric} can be easily generalized. On an arbitrary metric space with aspect ratio $\Delta$, the competitive ratio of $\A$ is $O(k\Delta)$.

One may wonder whether the result in Theorem \ref{CROfTrivialMetric} can be further improved because of the known result: 
\begin{theorem}[\cite{AzarCK2017,AACCGKMWW2017}]
There exists a deterministic online algorithm that solves linear-MPMD on $k$-point uniform metric space and reaches an $O(1)$ competitive ratio. 
\end{theorem}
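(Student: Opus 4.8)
The plan is to recall the threshold algorithm of Azar, Chiplunkar and Kaplan (equivalently the one of Ashlagi et al.) instantiated on a uniform metric, which is a one-level hierarchically separated tree, and to analyse it by comparing against an optimal matching. Normalize so that distinct points are at distance $1$ and coinciding points at distance $0$. The algorithm $\AON$ works in continuous time: for every pair $\{p,q\}$ of requests that are currently present and still unmatched, maintain $w(p,q)$, the total delay accumulated by $p$ plus that accumulated by $q$ since the later of their two arrival times; at the instant $w(p,q)$ first reaches $\beta\cdot d(p,q)$ for a fixed constant $\beta$, match $\{p,q\}$ and remove both, breaking simultaneous ties by a fixed rule (say, favour the pair containing the oldest request). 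This is deterministic and well defined, and since $w$ of any surviving pair grows without bound it eventually outputs a perfect matching.

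The easy half of the analysis bounds the matching cost of $\AON$ by its delay cost. A matched pair $\{p,q\}$ adds $d(p,q)$ to the matching cost, but by construction $p$ and $q$ jointly accumulated at least $\beta\,d(p,q)$ units of delay during the interval from the arrival of the second of them until they were matched; assign that chunk of delay to the edge $\{p,q\}$. Since each request is matched exactly once, the chunks assigned to distinct edges are disjoint portions of the total delay of $\AON$, so $\sCost(\AON)\le\tfrac1\beta\,\tCost(\AON)$ and hence $\Cost(\AON)\le(1+\tfrac1\beta)\,\tCost(\AON)$. It therefore remains to prove $\tCost(\AON)=O(1)\cdot\Cost(\OPT)$.

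The hard half charges $\AON$'s delay to the optimum. Fix an optimal solution with matching $M^*$, and for each request $p$ let $p^*$ be its partner in $M^*$ (the symmetric difference of $M^*$ and $\AON$'s matching is a disjoint union of alternating even cycles, which organizes the charging but is not essential). The crux is a local claim: for every request $p$, the delay $\AON$ incurs for $p$ is $O(1)$ times the sum of the delay $\OPT$ incurs for $p$, the delay $\OPT$ incurs for $p^*$, and the distance $d(p,p^*)$; summing this over all requests and using that each request is $M^*$-matched once gives the desired bound. The reason the local claim should hold: if $p$ waits a long time in $\AON$, then $p^*$ cannot have been present and unmatched in $\AON$ for long (otherwise $w(p,p^*)$ would have crossed $\beta\,d(p,p^*)$ and $\AON$ would already have matched $p$ to $p^*$), so either $p^*$ arrives late, which is paid for by $\OPT$'s delay on $p^*$, or $\AON$ matched $p^*$ early to some other request $q$, in which case $q$ becomes an available partner for $p$ and the argument repeats.

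Taming this cascade — proving that all the requests which wait long in $\AON$ only because their natural partners keep getting matched away can be charged to $\OPT$ with a constant loss — is the main obstacle, and it is precisely here that uniformity is used: all distances, hence all thresholds, are equal, so each "partner-stealing" event removes two requests and can absorb only $O(\beta)$ units of extra waiting, which lets a straightforward amortization close the argument with a constant independent of $k$. For an arbitrary $k$-point metric one would instead embed into an HST of height $O(\log k)$ and run $\AON$ level by level, which is how the $O(\log k)$ bound for general metrics arises; on a uniform metric the tree has depth one, so the logarithmic overhead vanishes and the ratio is $O(1)$.
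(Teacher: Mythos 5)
Your algorithm, read as written, is not $O(1)$-competitive --- in fact it is not competitive at all --- even on a two-point uniform metric with \emph{linear} delay, because the pair clock $w(p,q)$ is measured from the later of the two arrivals and is therefore effectively reset every time a request's current potential partner gets matched away. Concretely, take points $u,v$ at distance $1$, a request $\rho'$ at $v$ at time $0$, a request $x_0$ at $u$ at time $0$, and for $i=1,\dots,n$ two requests at $u$ at times $a_i$ and $a_i+\epsilon$, where $a_1=\beta/2-2\epsilon$ and $a_{i+1}=a_i+\beta/2-\epsilon$. Your rule matches $x_0$ with the request arriving at $a_1$ (co-located, threshold $0$), then the request arriving at $a_1+\epsilon$ with the one at $a_2$, and so on: every clock $w(\cdot,\rho')$ is abandoned strictly below $\beta$ and restarted with a fresh later-arrival time, so $\rho'$ is matched only against the very last request, and your total cost is $\Theta(n\beta)$, while the offline solution that pairs $\rho'$ with $x_0$ and pairs the $\epsilon$-separated couples internally pays $1+n\epsilon$; the ratio is unbounded as $\epsilon\to 0$, $n\to\infty$. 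This is exactly the reset/starvation mechanism behind Examples 1 and 2 of the paper, and it already bites for $f(t)=t$. If you instead intend $w(p,q)$ to be the sum of the two requests' full ages (each measured from its own arrival), this instance is handled, but then you must say so explicitly, since your ``easy half'' charging is phrased in terms of the later-arrival window.

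Even granting the age-sum reading, the ``hard half'' is not a proof. The local claim (the delay your algorithm incurs for $p$ is $O(1)$ times the optimum's delay on $p$, plus its delay on $p^*$, plus $d(p,p^*)$) is precisely what the partner-stealing cascade threatens; you identify this as ``the main obstacle'' and then dispose of it with an unspecified ``straightforward amortization,'' giving no bound on how often a waiting request's candidate partners can be matched away and no account of who pays while your algorithm's lone old request waits with no other request present. Closing this gap needs a genuine argument, e.g.\ a parity/alignment observation (when your algorithm has a single pending request, an odd number of requests have arrived, so the optimum also has a pending request accruing delay simultaneously) combined with an $O(\beta)$ slack bound per produced match --- in spirit the linear-cost specialization of the bookkeeping the paper develops in Section 4. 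Note also that the paper does not prove this theorem; it cites \cite{AzarCK2017,AACCGKMWW2017}, and the uniform-metric building blocks there are counter/one-sided-budget schemes rather than your pairwise-deadline rule, so the citation cannot be invoked to cover the missing step. (A minor slip: when $p^*$ arrives after $p$, the forced waiting is the optimum's delay on $p$, not on $p^*$.)
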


However, for a large family of functions $f:\mathbb{R}^+\rightarrow \mathbb{R}^+$, the $f$-MPMD problem has no deterministic algorithms of competitive ratio $o(k)$. 

\begin{restatable}{theorem}{UniformMetricLowerBound}
\label{UniformMetricLowerBound}
Suppose that the time cost function $f$ is nondecreasing, unbounded, continuous and satisfies $f(0)=f'(0)=0$. 
Then any deterministic algorithm for $f$-MPMD on $k$-point uniform metric space has competitive ratio $\Omega(k)$. 
\end{restatable}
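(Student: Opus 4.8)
The plan is to run an online adversary argument. Fix an arbitrary deterministic algorithm; since it is deterministic the adversary may simulate it and choose every request as a function of the algorithm's entire past behaviour, so it suffices to construct \emph{one} request sequence on the $k$-point uniform metric (with unit distances, by scaling) on which the algorithm pays $\Omega(k)$ times the offline optimum. Two time scales are extracted first from the hypotheses on $f$: continuity, monotonicity, unboundedness and $f(0)=0$ give a ``short'' duration $\varepsilon>0$ with $f(\varepsilon)$ as tiny as wanted and a ``long'' duration $D$ with $f(D)$ at least a prescribed value; $D$ is taken large enough that carrying a single request across the whole (long) instance would by itself already cost the algorithm $\Omega(k)$. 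The role of $f'(0)=0$ --- the feature absent in linear-MPMD --- is that it lets the offline solution \emph{consolidate} essentially for free: it may pair two requests at the same point that arrive within $\varepsilon$ of each other at negligible cost, even when summed over many such pairs, which is how the offline optimum will be held at $O(1)$; the unboundedness of $f$ is what forbids the algorithm from postponing matches indefinitely.

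The instance is built over $\Theta(k)$ phases, one per point, each of length $\Theta(D)$. The invariant I would carry across phases is, informally: (i) entering phase $i$ the algorithm is committed to at least $\Omega(i)$ units of unavoidable cost --- cross-matches it has already made, plus time cost already accrued by requests it is still holding --- and it is carrying a single ``stale'' request that has already waited $\Omega(D)$; (ii) the requests released so far admit an offline matching of cost $O(1)$, in which all but $O(1)$ pairs are same-point, $\varepsilon$-close consolidation pairs. Phase $i$ opens at a fresh point with a consolidation pair there, together with exactly the clean-up request that lets the offline solution retire the previous stale request cheaply; the adversary then reads off the algorithm's reaction and continues accordingly. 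The phase is arranged so that, whatever the algorithm does, it is charged $\Omega(1)$: if it resolves its stale request it pays $\Omega(1)$, either because that request has waited $\Omega(D)$ (time cost $\Omega(f(D))$) or because the only available partner lies at another point (space cost $1$, that request's same-point partner having been ``used up'' earlier in the construction), and if instead it postpones, its held request only grows staler, so the eventual bill only grows; and in every branch exactly one request survives at the next fresh point, restoring (i). Summing the per-phase $\Omega(1)$ charges over $\Theta(k)$ phases, plus a closing move that retires the last stale request for $O(1)$, yields algorithm cost $\Omega(k)$ against offline cost $O(1)$; letting $\varepsilon\to 0$ removes the lower-order terms, and since Theorem~\ref{CROfTrivialMetric} supplies a matching $O(k)$ upper bound no quantitative slack is to be expected.

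The crux --- and where essentially all the work lies --- is that the two halves of the invariant pull against each other and must be made to coexist: to force the algorithm to pay for the request it is carrying, the adversary must withhold a cheap same-point partner for it, yet the offline optimum must nonetheless retire that very request for $O(1)$. The resolution is that the offline solution never carries the request at all --- it pairs it into the short consolidation pair formed at the start of its phase --- whereas the online algorithm, lacking foresight when that consolidation pair arrived, had already spent the partner on rescuing the \emph{previous} carried request, and so on down a chain of length $\Omega(k)$, only the single far endpoint of which genuinely costs the offline optimum anything. Making this chain provably unbreakable demands a careful exhaustive case analysis of the algorithm's options in a phase --- resolve the carried request in place, resolve it across, stall, or rearrange its other pending requests --- each measured against both parts of the invariant, together with a convexity/telescoping estimate showing that no mixture of ``stall'' and ``advance'' moves lets the algorithm dodge the $\Omega(k)$ bound (concentrating the postponement blows up a time cost, spreading it forces the cross-matches). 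I expect these two ingredients to be the entire difficulty; the limiting argument and the reduction from unit distances to an arbitrary uniform metric are then routine.
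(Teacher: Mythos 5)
Your proposal locates the difficulty correctly but then defers exactly the step that constitutes the theorem: the claim that in every phase, whatever the algorithm does, the adversary can charge it $\Omega(1)$ while simultaneously keeping the exhibited offline solution at $O(1)$ and keeping the algorithm saddled with a single misaligned ``stale'' request is asserted, with the case analysis and the telescoping estimate postponed, and as architected I do not believe the invariant can be maintained. At the start the algorithm's and the offline's pending requests coincide (they are literally the same request at $v_0$), so misalignment must be manufactured; but the algorithm can always re-align by mimicking the intended offline chain (match its stale request with the first member of the fresh consolidation pair) or simply by waiting to see what arrives. Any clean-up request you issue to retire the offline's carried request cheaply is co-located with that request, so whenever the algorithm is aligned it is rescued for free and the phase charges it nothing; to punish mimicry you must issue the clean-up only after the algorithm has irrevocably committed, but an algorithm that waits until it has accrued roughly $\delta$ of time cost before committing (as the paper's own algorithm does) forces that clean-up to be equally late, so the offline's side ages by the same amount and either the ``offline $=O(1)$ over $\Theta(k)$ phases'' half of the invariant collapses or the per-phase online/offline ratio collapses to $O(1)$. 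There is also a bookkeeping problem with one pair plus one clean-up per phase (after the algorithm mirrors the single external match of the intended offline solution it carries nothing, so later phases charge it nothing, and the parity of the arrived requests does not support ``exactly one survivor'' every phase), and a single-shot instance with offline cost $O(1)$ proves nothing about the ratio anyway, since the additive constant $b=b(k,\delta)$ can absorb an $\Omega(k)$ online cost; the construction must be repeated indefinitely, which your sketch does not do.

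The paper's proof avoids all of this by not insisting on an $O(1)$ offline and not tracking a carried request. Fix $T$ with $f(T)=k\delta$; within a round, flood every point outside the connected component containing $v_0$ in the graph of the algorithm's own matches among this round's requests with requests every $\tau$, and recompute the component every $T$ time units. Since only $v_0$ receives an odd number of requests in the round, either the component stops growing, in which case an odd-parity request inside it stays pending for at least $T$ and costs $f(T)=k\delta$, or the component grows until it spans all $k+1$ points, which forces at least $k$ external matches and hence $k\delta$ of space cost; meanwhile the offline pays at most $\delta+\frac{k^2 n}{2}f(\tau)+f(\tau)$ per round. Repeating the round $m$ times and letting $m\to\infty$ kills the additive constant, and letting $\tau\to 0$ (this is where $f'(0)=0$ enters: $k^2 n f(\tau)\to 0$ while $f(n\tau)=k\delta$ stays fixed) gives the ratio $\ge k$. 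This component/parity mechanism is the idea missing from your plan: it converts ``the algorithm must eventually act'' into a clean per-round dichotomy of one $f(T)=k\delta$ delay versus $k$ external matches, charged against an offline that is allowed to pay $\Theta(\delta)$ per round, rather than against a global-$O(1)$ offline that an adaptive but foresight-free adversary cannot actually sustain.
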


We also get a lower bound for randomized algorithms.
\begin{restatable}{theorem}{ranlowerbound}
\label{ranlowerbound}
Suppose that the time cost function $f$ is nondecreasing, unbounded, continuous and satisfies $f(0)=f'(0)=0$. 
Then any randomized algorithm for $f$-MPMD on $k$-point uniform metric space has competitive ratio $\Omega(\ln k)$ against oblivious adversaries. 
\end{restatable}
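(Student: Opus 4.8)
The plan is to go through Yao's minimax principle: to lower-bound the competitive ratio of randomized algorithms against an oblivious adversary it is enough to produce a single distribution $\mathcal{D}$ over $f$-MPMD instances on the $k$-point uniform metric for which \emph{every} deterministic online algorithm $\B$ satisfies $\Expect_{I\sim\mathcal{D}}[\B(I)] \;\ge\; c\,(\ln k)\cdot \Expect_{I\sim\mathcal{D}}[\mathrm{OPT}(I)]$ for an absolute constant $c>0$. I would take $\mathcal{D}$ to be a ``randomized, softened'' version of the adaptive construction behind Theorem~\ref{UniformMetricLowerBound}: organize the $k$ points as the leaves of a balanced binary tree and run $m=\lfloor\log_2 k\rfloor$ stages; stage $i$ acts inside a surviving subtree $P_i$ of size $\approx k/2^{i-1}$ (with $P_1$ the whole metric), issues a bounded number of requests inside $P_i$ on a dedicated time scale $T_i$ obtained as $f^{-1}$ of a suitably chosen small value, and then flips a fair coin to choose which half of $P_i$ becomes $P_{i+1}$. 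Two hypotheses on $f$ are what make this work: continuity together with $f(0)=f'(0)=0$ lets every stage live on an arbitrarily shrunken time scale, so that the offline optimum --- which, knowing all the coins, matches every request with a same-point partner issued a moment later --- pays only $O(1)$ in total; while monotonicity and \emph{unboundedness} of $f$ mean that an algorithm deferring a request indefinitely eventually pays an $\Omega(1)$ waiting cost, so deferral is not a free escape but merely postpones a forced, and now costly, cross-point match.

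The steps, in order: (1) fix the stage geometry $T_1\gg T_2\gg\cdots\gg T_m$ and the recursive request pattern inside each stage so that, to avoid incurring $\Omega(1)$ of space cost within stage $i$, $\B$ must in effect ``bet'' which of the two halves of $P_i$ survives, and since the coin is fair and independent of $\B$'s past it bets wrong with probability exactly $1/2$; (2) prove the per-stage lemma: conditioned on \emph{any} behaviour of $\B$ through the end of stage $i-1$, the expected cost attributable to stage $i$ --- its own space cost, plus the delay charged by stage-$i$ requests, plus a ``hand-off'' cost forced into some later stage --- is $\Omega(1)$; (3) sum the per-stage bounds by linearity of expectation to get $\Expect_{\mathcal{D}}[\B(I)]=\Omega(m)=\Omega(\ln k)$; (4) check $\Expect_{\mathcal{D}}[\mathrm{OPT}(I)]=O(1)$ via the same-point offline matching and the choice of the $T_i$; (5) combine to conclude.

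The hard part is step (2) and the charging it rests on. Delay cost is fundamentally non-local: a request left unmatched keeps accruing cost on every later, smaller time scale, so one must design a charging scheme that assigns each unit of $\B$'s cost to exactly one stage and argue that no stage can be ``robbed'' by an algorithm that interleaves and postpones decisions across all $m$ stages, or that exploits the precise modulus of continuity of $f$ near $0$. Concretely, within stage $i$ one must show $\B$ faces a genuine dilemma --- commit now and (with probability $1/2$) pay a cross match, or defer and be driven, by the unboundedness of $f$ relative to the later tiny scales, into an $\Omega(1)$ cross match in a later stage, which the scheme charges back to stage $i$ --- and one must choose the thresholds defining $T_i$ so that ``$\B$ is forced'' and ``$\mathrm{OPT}$ stays $O(1)$'' hold simultaneously, using that $f^{-1}$ of any positive value exists ($f$ continuous, $f(0)=0$, unbounded). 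This dilemma is also exactly where $\ln k$ enters rather than $k$: against an \emph{oblivious} adversary the branching at each stage can only be binary, so at most $\Theta(\log k)$ forced commitments can be extracted in expectation, in contrast to the $\Theta(k)$ available to the adaptive adversary of Theorem~\ref{UniformMetricLowerBound}.
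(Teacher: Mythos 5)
Your high-level framing (Yao's principle, a distribution of instances on the uniform metric, OPT kept at $O(1)$ via $f(0)=f'(0)=0$ and tiny time scales, the algorithm punished via unboundedness of $f$) matches the paper, but the core of your argument --- the per-stage lemma in step (2) for a binary-tree halving construction --- is exactly the part you leave unproven, and as sketched it faces a real obstruction rather than a mere technicality. On a \emph{uniform} metric the only way to force an online algorithm to pay space cost is through parity imbalances at points, and there is a tension you do not resolve: if, at stage $i$, the dying half of $P_i$ is left with odd parity at many of its points (so that the algorithm loses its ``bet'' with probability $1/2$ and pays $\Omega(1)$), then the offline optimum must also resolve those same imbalances by external matches, and $\Expect[\Cost_{\OPT}]$ is no longer $O(1)$ --- it grows with the number of odd points, killing the ratio. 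If instead you confine the imbalance to $O(1)$ points per stage (so OPT stays cheap), the fair-coin ``which half survives'' dilemma no longer yields an expected $\Omega(1)$ loss per stage by itself; you then need an argument about where the single odd-parity point is hiding among indistinguishable points, which is a different mechanism from binary branching. Your closing heuristic, that oblivious adversaries only permit binary branching and hence only $\Theta(\log k)$ forced commitments, is not a substitute for this missing lemma. The cross-stage charging of delay costs that you flag is a second unresolved gap: nothing in the proposal pins down how a deferred request's cost is attributed to a unique stage.

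For comparison, the paper avoids halving altogether. It fixes one ``staircase'' instance: one point receives a single request at time $0$, point $i$ (for $1\le i\le k-1$) receives $ni$ requests at spacing $\tau$, and one point receives an odd number $kn+1$ of them; the instance is then relabeled by a uniformly random permutation of the $k+1$ points (and repeated $m$ times to absorb the additive constant). OPT pays $\Distance+o_\tau(1)$ using one external match and consecutive internal matches, where $f'(0)=0$ makes the $O(nk^2)$ internal delays vanish as $\tau\to 0$. For the algorithm, either some request stays pending for time $T$ with $f(T)=k!\,\Distance\ln k$, which already contributes $\Distance\ln k$ in expectation since each permutation has probability $1/k!$, or every request is matched within $T$, and then a sequential sampling argument applies: whenever the set of points whose request streams have ended carries odd total parity, the algorithm must match externally to one of the remaining points, which are exchangeable under the random permutation, so the rank of the chosen point is uniform among those remaining; the expected number of such forced external matches is the harmonic number $H_k=\Theta(\ln k)$, each costing $\Distance$. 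So the $\ln k$ in the paper comes from a harmonic-sum analysis of a uniform-jump process, not from $\log_2 k$ independent coin flips; to rescue your route you would have to supply both the concrete per-stage request pattern that keeps OPT at $O(1)$ while still creating a genuine $1/2$-probability loss, and the delay-charging scheme across stages, neither of which is currently in place.
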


Numerous natural convex functions over the domain of nonnegative real numbers satisfy the conditions of Theorems \ref{UniformMetricLowerBound} and \ref{ranlowerbound}. Examples include monomial $f(t)=t^\alpha$ with $\alpha>1$, $f(t)=e^{\alpha t}-\alpha t-1$ with $\alpha>1$, and so on. This, together with Theorem \ref{CROfTrivialMetric}, establishes the optimality of our deterministic algorithm. Note that family of functions satisfying the conditions  is closed under multiplication and linear combination where the coefficients are positive. Hence, the lower bounds are of general significance.

\section{Related Work}\label{sec:related}

Matching has become one of the most extensively studied problems in graph theory and computer science since the seminal work of Edmonds \cite{Edmonds1965a,Edmonds1965b}. 
Karp et al.~\cite{KarpVV1990} studied the matching problem in the context of online computation which inspired a number of different versions of online matching, e.g., \cite{KalyanasundaramP1993,KhullerMV1994,MehtaSVV2005,MeyersonNP2006,BirnbaumM2008,GoelM2008,AggarwalGKM2011,DevanurJK2013,Mehta2013,Miyazaki2014,NaorW2015}. 
In these online matching problems, underlying graphs are assumed bipartite and requests of one side are given in advance. 

A matching problem where \emph{all} requests arrive in an online manner was introduced by \cite{EmekKW2016}, which  also introduced the idea that requests are allowed to be matched up to delays that need to be paid as well, 
so the problem is called Min-cost Perfect Matching with Delays (MPMD). 
The authors presented a randomized algorithm with competitive ratio $O(\log^2 k + \log \Delta)$ where $k$ is the size of the underlying metric space known before the execution and $\Delta$ is the aspect ratio of the metric space. 
Later, Azar et al.~\cite{AzarCK2017} proposed an almost-deterministic algorithm with competitive ratio $O(\log k)$. 
Ashlagi et al.\ \cite{AACCGKMWW2017} 
extended these algorithms to bipartite matching with delays (MBPMD), and achieved a competitive ratio of $O(\log k)$, and also proved a lower bound of $\Omega(\log k/\log\log k)$ for MPMD and a lower bound of $\Omega(\sqrt{\log k/\log\log k})$ for MBPMD. 
Bienkowski et al.\ \cite{bienkowski2017match} proposed the first deterministic algorithm for MBPMD problems and obtained a competitive ratio of $O\left(m^{2.46}\right)$ where $m$ is the total number of arriving requests, and this result was improved to $O\left(m\right)$ in \cite{bienkowski2018primal}. Azar et al.\ \cite{azar2018deterministic} proposed an $O\left(m^{\log (2/3+\epsilon)}\right)$-competitive deterministic algorithm, based on these two previous results, for both MPMD and MBPMD problems. 
In contrast to our work, all these papers assume that the time cost of a request is linear in its waiting time.

The idea of delaying decisions has been around for a long time in the form of rent-or-buy problems (e.g., ski rental \cite{karlin1988competitive,karlin1994competitive} and TCP acknowledgment delay \cite{dooly1998tcp,dooly2001line}), but \cite{EmekKW2016} is the first to show how to use delays in the context of combinatorial problems such as matching. 
After this, there have been a number of generalizations of classical online problems, which allow decisions with delays. To name a few, Azar et al.\ \cite{azar2017online} considered online service with delay, which generalizes the $k$-server problem. As mentioned in their paper, delay penalty functions are not restricted to be linear and even different requests can have different penalty functions. However, different delay penalty functions there do not make the problem essentially different, and there is a universal way to deal with all these functions, in contrast to the online matching problems we consider in the present paper. 
Other problems with delays include the online bin-packing \cite{azar2019price} and the online facility allocation \cite{azar2019general}.


\section{Preliminaries}
\label{sec:model}
In this section, we formulate the problem and introduce notations.
\subsection{Problem Statement}
Let $\mathbb{R}^+$ stands for the set of nonnegative real numbers. 

A metric space $\Space = (V,\DFunction)$ is a set $V$, whose members are called points, equipped with a distance function $\DFunction: V^2\rightarrow \mathbb{R}^+$ which satisfies the following conditions
\begin{description}
\item[Positive definite:] $\DFunction(x,y)\ge 0$ for any $x,y\in V$, and ``='' holds if and only if $x=y$;
\item[Symmetric:] $\DFunction(x,y)=\DFunction(y,x)$ for any $x,y\in V$;
\item[Triangle inequality
:] $\DFunction(x,y)+\DFunction(y,z)\ge\DFunction(x,z)$ for any $x,y,z\in V$.
\end{description}

Given a function $f:\mathbb{R}^+\rightarrow\mathbb{R}^+$, the problem $f$-MPMD is defined as follows, and $f$ is called the time cost function. 

For any finite metric space $\Space = (V,\DFunction)$, an online input instance over $\Space$ is a sequence $R$ of requests. Each request $\rho \in R$ is characterized by its location $\ell(\rho) \in V$ and arrival time $t(\rho) \in \mathbb{R}^+$. Assume that $|R|$ is an even number that is not pre-determined. 
The goal is to pair up the requests over time without revocation, resulting in a perfect matching of the requests. 

Specifically, suppose an algorithm $\A$ pairs up $\rho,\rho'\in R$ at time $T$. It pays the space cost $\DFunction(\Location(\rho), \Location(\rho'))$ and the time cost $f(T-t(\rho))+f(T-t(\rho'))$. The space cost of $\A$ on input $R$, denoted by $\sCost_{\A}(R)$, is the total space cost caused by all the matched pairs, and the time cost $\tCost_{\A}(R)$ is defined likewise.  The objective of the $f$-MPMD is to find an online algorithm $\A$ such that $\Cost_{\A}(R)=\sCost_{\A}(R)+\tCost_{\A}(R)$ is minimized for all $R$. 

As usual, the online algorithm $\ALG$ is evaluated by competitive analysis. Let $\OPT$ be an optimum offline algorithm\footnote{An offline algorithm knows the whole input instance at the beginning and outputs any pair $\rho,\rho'\in R$ at time $\max\{t(\rho),t(\rho')\}$.}. For any finite metric space $\Space$, if there are $a,b\in \mathbb{R}^+$ such that $\Cost_{\A}(R)\le a \cdot \Cost_{\OPT}(R)+b$ for any online input instance $R$ over $\Space$, then $\ALG$ is said to be $a$-competitive on $\Space$. The minimum such $a$ is called the competitive ratio of $\ALG$ on $\Space$. Note that both $a$ and $b$ can depend on $\Space$. If no such $a,b$ exist for some $\Space$, we say that $\ALG$ is not competitive.

This paper will focus on uniform metric spaces and monomial time cost functions $f(t)=t^\alpha, \alpha>1$. A metric space $(V,\DFunction)$ is called $\Distance$-uniform and simply denoted by $(V,\delta)$, if $\DFunction(u,v)=\Distance$ for any $u\neq v\in V$. 

\subsection{Notation and Terminology}
Any requests $\rho,\rho'$ that are paired up in the perfect matching is called a match between $\rho$ and $\rho'$ and denoted by $\langle\rho,\rho'\rangle$ or $\langle\rho',\rho\rangle$ interchangeably. A match $\langle\rho,\rho'\rangle$ is said to be \emph{external} if $\Location(\rho)\neq\Location(\rho')$, and \emph{internal} otherwise. For any request $\rho$, let $T(\rho)$ be the time when $\rho$ gets matched; $\rho$ is said to be \emph{pending} at any time $t\in (t(\rho),T(\rho))$ and \emph{active} at any time $t\in [t(\rho),T(\rho)]$. At any moment $t$, a point $v\in V$ is called \emph{aligned} if the number of pending requests at $v$ under $\mathcal{A}$ and that under $\OPT$ have the same parity,  and \emph{misaligned} otherwise. The derivative of any differentiable function $f:\mathbb{R}^+\rightarrow\mathbb{R}^+$ is denoted by $f'$.

\section{Algorithm and Analysis}\label{sec:alg}
\subsection{Basic Ideas} 

A natural idea to solve $f$-MPMD on uniform metrics is to prioritize internal matches and to create an external match only if both requests have waited long enough (say, as long as a period $\theta$). However, for any monomial time cost function $f(t)=t^\alpha, \alpha>1$, the strategy (called Strategy I) is not competitive, as illustrated in Example \ref{eg1}.

\begin{example}\label{eg1}

For any positive integer $n$ and small real number $\epsilon>0$, construct an online instance as follows. Assume that the metric space consists of two points $u$ and $v$ with distance $\Distance$.
A request $\rho_{2i}$ arrives at $u$ at time $i\cdot \theta$ for any $0\le i\le n$, while a request $\rho_{2i-1}$ arrives at $u$ at time $i\cdot\theta-\epsilon$ for any $1\le i\le n$. Point $v$ gets a request $\rho'$ at time $0$. By Strategy I, as in Figure \ref{fig:eg1}(a), each $\rho_{2i}$ is matched with $\rho_{2i+1}$ for any $0\le i< n$, and $\rho'$ and $\rho_{2n}$ are paired up, causing cost $n\cdot f(\theta-\epsilon)+f(n\theta)+\Distance$. Consider the offline solution consisting of $\langle\rho',\rho_0\rangle$ and $\langle\rho_{2i-1},\rho_{2i}\rangle$ for $1\le i\le n$, , as in Figure \ref{fig:eg1}(b), which has cost $\Distance+n\cdot f(\epsilon)$. 
When $n$ approaches infinity and $\epsilon$ approaches 0, $n\cdot f(\theta-\epsilon)+f(n\theta)+\Distance\gg \Distance+n\cdot f(\epsilon)$, meaning that Strategy I is not competitive.
\end{example}
\begin{figure}
  \centering
  \includegraphics[width=\textwidth]{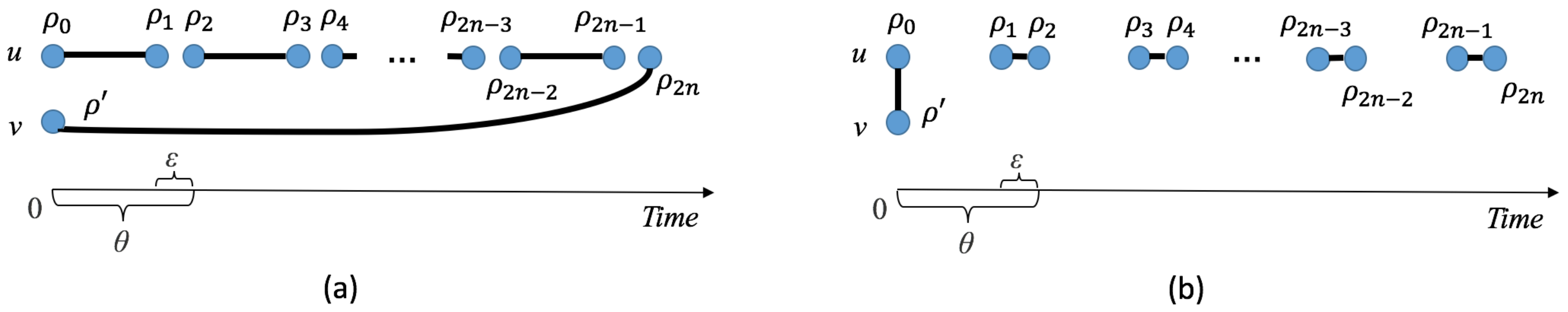}
  \caption{The input instance of Example \ref{eg1}. A blue dot stands for a request, and a thick line or curve for a match. (a) is the matching produced by Strategy I, while (b) is an offline solution.}\label{fig:eg1}
\end{figure}

Can we improve Strategy I so as to beat Example \ref{eg1}? A plausible way is as follows. At each point $v$ in the metric space, accumulate from scratch the time costs of all the requests which have arrived at $v$ since the last external match involving $v$. An external match between point $u,v$ is enable only if both have accumulated enough time costs (say, as large as $\theta$). Though beating Example \ref{eg1}, this improvement (called Strategy II) remains not competitive for any time cost function $f(t)=t^\alpha, \alpha>1$, as shown in the next example. 


\begin{example}\label{eg2}
Arbitrarily fix an even integer $n>0$ and a small real number $\epsilon>0$. Let $\tau\in\mathbb{R}^+$ be such that $\theta-\epsilon<\frac{n}{2} f(\tau)< \theta$. Again, assume that the metric space consists of two points $u$ and $v$ with distance $\Distance$. Suppose that a request $\rho'$ arrives at $v$ at time 0, while a request $\rho_i$ arrives at $u$ at time $i\tau$ for any $0\le i\le n$. Hence there are totally $n+2$ requests. As illustrated in Figure \ref{fig:eg2}(a), applying Strategy II results in the matches $\langle \rho',\rho_n\rangle$ and $\langle \rho_i,\rho_{i+1}\rangle$ for any even number $0\le i<n$, causing cost at least $\frac{n}{2} f(\tau)+f(n\tau)+\Distance$. On the other hand, consider the offline solution $\langle \rho',\rho_0\rangle$ and $\langle \rho_i,\rho_{i+1}\rangle$ for any odd number $0< i<n$, as shown in Figure \ref{fig:eg2}(b). It has cost $\frac{n}{2} f(\tau)+\Distance$. Thus the cost of $\OPT$ is at most $\frac{n}{2} f(\tau)+\Distance$. When $n$ approaches infinity and $\epsilon$ approaches 0, we have $\frac{n}{2} f(\tau)+f(n\tau)+\Distance\gg \frac{n}{2} f(\tau)+\Distance$, implying that Strategy II is not competitive.
\end{example} 
\begin{figure}
  \centering
  \includegraphics[width=\textwidth]{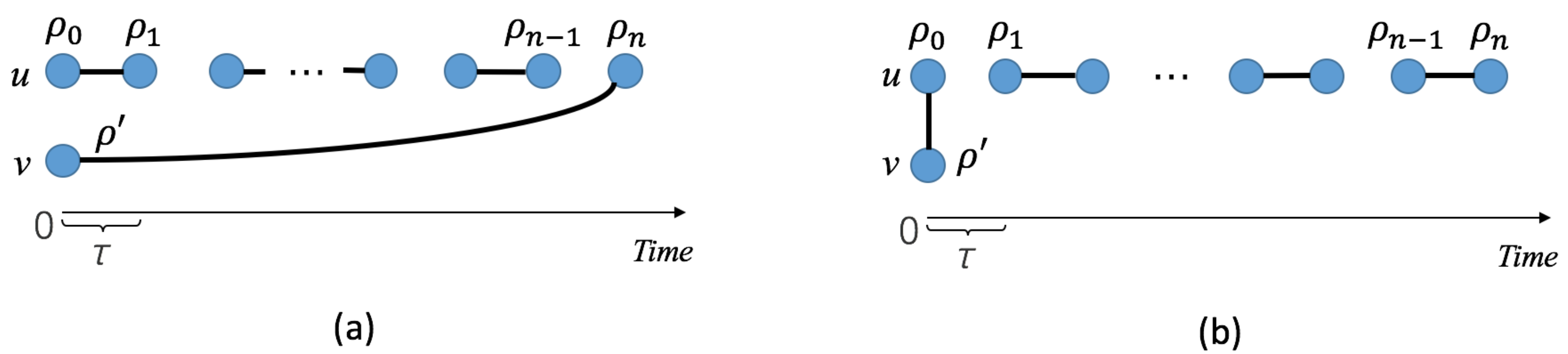}
  \caption{The input instance of Example \ref{eg2}.  A blue dot stands for a request, and a thick line or curve for a match. (a) is the matching produced by Strategy II, while (b) is an offline solution.}\label{fig:eg2}
\end{figure}

The failure in Example \ref{eg2} is rooted at the \emph{double-enabling} mechanism which enables an external match only if the cumulated costs at both points reach some threshold. 
Hence, we replace that mechanism with a looser one: an external match is enabled if one of the two points has high accumulated cost (say, as high as $\theta$). The new algorithm (called Strategy III) defeats both Examples \ref{eg1} and \ref{eg2}, but the following example shows that it remains not competitive for any monomial time cost function $f(t)=t^\alpha, \alpha>1$. 

\begin{example}\label{eg3}
Choose $\tau\in\mathbb{R}^+$ and odd integer $n>0$ such that $f(n\tau)=\theta$. Arbitrarily fix real number $T_0>n\tau$. Consider a $\Distance$-uniform metric space with points $u,v,w$. Let $m>0$ be an arbitrary integer. Construct an online input instance $R$ which is the catenation of $m+1$ parts $R_0,\cdots,R_m$, as illustrated in Figure \ref{fig:eg3}.

\begin{figure}
  \centering
  \includegraphics[width=\textwidth]{./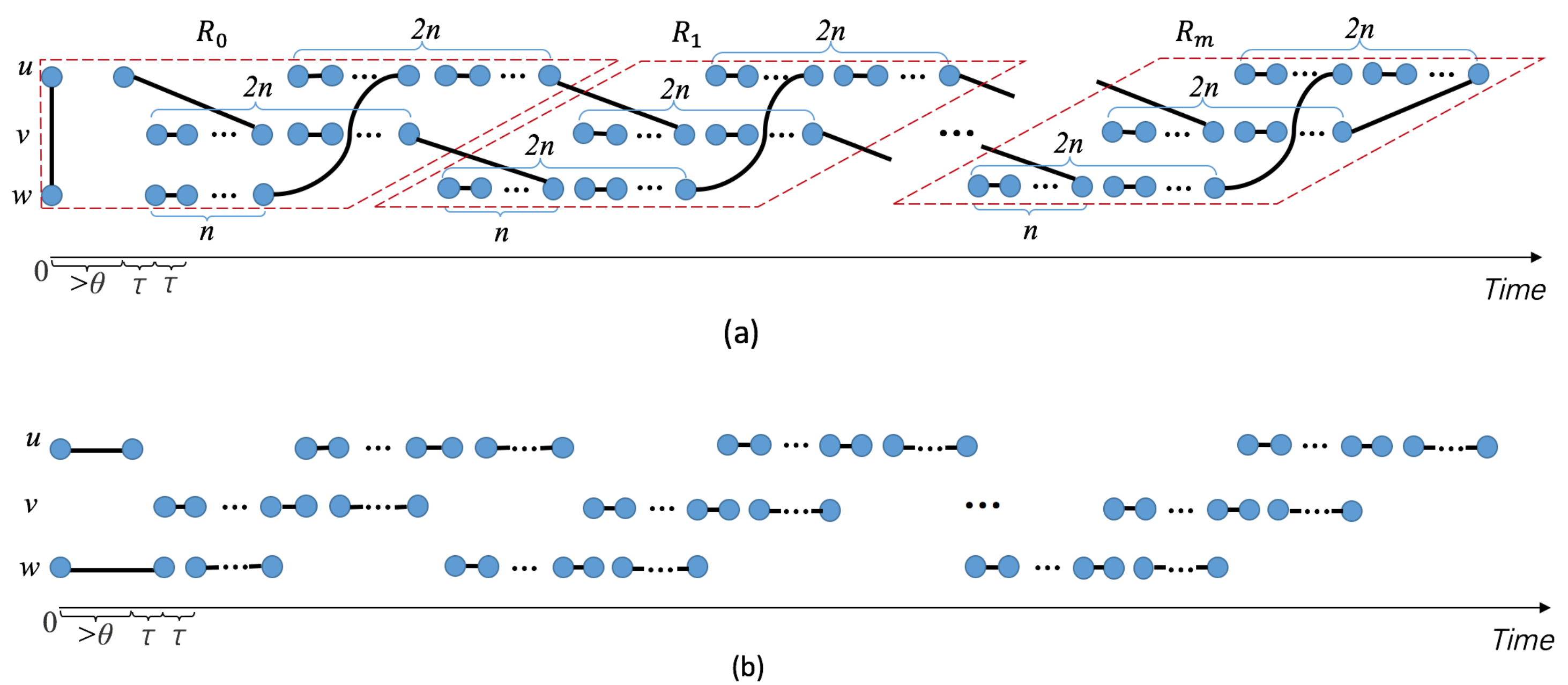}
  \caption{The input instance of Example \ref{eg3}. A blue dot stands for a request, an area surrounded by dash lines stands for a part of the instance, and a thick line or curve for a match.  (a) The matching produced by Strategy III. (b) An offline solution. \textcolor{blue}{Label $T_i$ is (a)}}\label{fig:eg3}
\end{figure}

The part $R_0$ has $5n+3$ requests. Specifically, $u$ receives a request $\rho^u_{0,-1}$ at time $0$, $\rho^u_{0,0}$ at time $T_0$, and $\rho^u_{0,i}$ at time $T_0+(n+i)\tau$ for any $1\le i\le 2n$. $v$ receives a request $\rho^v_{0,i}$ at time $T_0+i\tau$ for any $1\le i\le 2n$. $w$ receives a request $\rho^w_{0,-1}$ at time $0$ and a request $\rho^w_{0,{n+i}}$ at time $T_0+i\tau$ for any $1\le i\le n$. Let $T_1=T_0+(2n+1)\tau, T_j=T_{j-1}+3n\tau$ for any $2\le j\le m$. 

For any $1\le j\le m$, the part $R_j$ has $6n$ requests as follows: $\rho^u_{j,i}$ arrives at $u$ at time $T_j+(2n+i-1)\tau$, $\rho^v_{j,i}$ arrives at $v$ at time $T_j+(n+i-1)\tau$, and $\rho^w_{j,i}$ arrives at $w$ at time $T_j+(i-1)\tau$, for every $1\le i\le 2n$. 

Actually, we can slightly perturb the arrival time of some requests so that Strategy III results in exactly the following external matches: $\langle\rho^u_{0,-1},\rho^w_{0,-1}\rangle$, $\langle\rho^u_{0,0},\rho^v_{0,n}\rangle$, $\langle\rho^u_{j,n},\rho^w_{j,2n}\rangle$ for $1\le j\le m$, $\langle\rho^u_{i,2n},\rho^v_{i+1,n}\rangle$ and $\langle\rho^v_{i,2n},\rho^w_{i+1,n}\rangle$ for $1\le i< m$, and $\langle\rho^u_{m,2n},\rho^v_{m,2n}\rangle$, as illustrated in Figure \ref{fig:eg3}(a). The cost of Strategy III is at least $3m(\Distance+\theta)$. On the other hand, consider the offline solution SOL which has no external matches, as indicated in Figure \ref{fig:eg3}(b). It has cost at most $2f(T_0+\tau)+\frac{6mn+5n-1}{2}f(\tau)$. When $\tau$ approaches zero and $m$ approaches infinity, we have $3m(\Distance+\theta)\gg 2f(T_0+\tau)+\frac{6mn+5n-1}{2}f(\tau)$, implying that Strategy III is not competitive.
\end{example}

Let's look closer at the example. Consider an arbitrary (except the first) external match $\langle\rho,\rho'\rangle$ of Strategy III. It is of misaligned-aligned pattern in the sense that $\Location(\rho)$ and $\Location(\rho')$ have opposite alignment status when the match occurs. Suppose $\Location(\rho)$ is misaligned. Then it has accumulated high cost, mainly due to the long delay of $\rho$. On the contrary, SOL has accumulated little cost at $\Location(\rho)$, because SOL has no pending request there while $\rho$ is pending. Hence, a match of misaligned-aligned pattern can significantly enlarge the gap between online/offline costs. To be worse, such a match does not change the number of aligned/misaligned points, making it  possible that this pattern appears again and again, enlarging the gap infinitely. As a result, we establish a set which consists of points that are likely to be misaligned, and prioritize matching those requests that are located outside the set. The algorithm is described in detail as follows.

\subsection{Algorithm Description}\label{algorithm}
Our algorithm maintains a subset $\Psi \subseteq V$ to store points  tend to be misaligned. For every point $v\in V$, we use a timer $z_v\in \mathbb{R}^+$ to accumulate the time cost at $v$. The timer $z_v$ is initialized to be $0$, and increases at rate $f'(t-t_0)$ at time $t$ when there is an active request $\rho$ with $l(\rho)=v$ and $t(\rho)=t_0$. 
The algorithm proceeds round by round, and $\Psi$ is reset to be the empty set $\emptyset$ at the beginning of each round. 
The first round begins when the algorithm starts. 
Let $k=|V|$. Whenever $2k$ external matches are produced, the present round ends immediately and the next one begins. At any time $t$, the following operations are performed exhaustively, i.e., until there is no possible matching according to the rules.
\begin{enumerate}
\item Pair up active requests $\rho$ and $\rho'$ if  $\Location(\rho)=\Location(\rho')$.
\label{line2}
\item For active requests $\rho,\rho'$ with $u\triangleq\Location(\rho)\neq v\triangleq \Location(\rho')$, if there is $x\in\{u,v\}$ satisfying either (a) $\Distance\leq z_x< 2\Distance$ and  $\{u,v\}\bigcap \Psi=\emptyset$ or (b) $z_x\geq 2\Distance$, perform the three actions:
      \begin{itemize}
	\item Pair up $\rho$ with $\rho'$.
	\item Reset $z_{u}=z_{v}=0$.
	\item Set $\Psi$ to be $(\Psi\setminus \{u,v\})\bigcup\{x\}$ if either $u\notin \Psi$ or $v\notin \Psi$.
     \end{itemize}
     In this case, we say that $x$ initiates the match $\langle\rho,\rho'\rangle$. \label{externalmatch}
\end{enumerate}
Priority rule: in applying Operation \ref{externalmatch}, the requests located outside $\Psi$ are prioritized.

\subsection{Competitive Analysis}
Throughout this subsection, arbitrarily fix a time cost function $f(t)=t^\alpha$ with $\alpha>1$, a uniform metric space $\Space=(V,\Distance)$ of $k$ points, and an arbitrary online input instance $R$ over $\Space$. For ease of presentation, we assume that the arrival times of the requests are pairwise different. This assumption does not lose generality since the arrival times might be slightly perturbed and timing in practice is up to errors. Let $\ALG$ stand for our algorithm and $\OPT$ for an optimum offline algorithm solving $f$-MPMD. We start competitive analysis by introducing notation.
\subsubsection{Notation}
For any request $\rho\in R$ and subset $I\subseteq \mathbb{R}^+$ of time, the time cost of $\OPT$ incurred by $\rho$ during $I$ is defined to be $$C_{time}(\rho,I,\OPT)=
\int_{(t(\rho),T^*(\rho)]\bigcap I} f'(t-t(\rho))dt,$$ where $T^*(\rho)$ is the time when $\rho$ gets matched by $\OPT$. For any $v\in V$, define $$C_{time}(v,I,\OPT)=\sum_{\rho\in R,\Location(\rho)=v}C_{time}(\rho,I,\OPT).$$ Let $C_{space}(v,I,\OPT)$ be $\frac{\Distance}{2}$ times the number of requests at $v$ that are externally matched by $\OPT$ during $I$.

Define $$\Gamma=\{t\in\mathbb{R}^+: \textrm{at time } t, \mathcal{A} \textrm{ has a pending request } \rho \textrm{  with }z_{\Location(\rho)}>2\Distance\}.$$ We will analyse time cost of $\OPT$ inside and outside $\Gamma$ separately.

Our algorithm $\ALG$ runs round by round as described in Section \ref{algorithm}. Specifically, the \emph{round} starting at time $t_0$ and ending at time $t_1$ corresponds to the time period $(t_0,t_1]$. Let $\Pi$ be the set of rounds of $\ALG$. 

For any  $\Round\in\Pi$, define $$round\_cost_{time}(\pi,\OPT)=\sum_{v\in V}C_{time}(v,\pi\setminus \Gamma,\OPT)$$ which stands for the time cost of $\OPT$ during $\pi\setminus \Gamma$, and $$round\_cost_{space}(\pi,\OPT)=\sum_{v\in V}C_{space}(v,\pi,\OPT)$$ which is the space cost of $\OPT$ during $\pi$. 

Arbitrary fix a round $\pi$ and an external match $\Match=\langle \rho, \rho \rangle$ initiated by $l(\rho)\in V$. Define the phase of $\Match$, denoted by $\phi(\Match)$,  to be the period $(t,T(\rho)]$, where $t<T(\rho)$ is the time closest to $T(\rho)$ such that $z_{l(\rho)}=0$ at time $t$. Define the value of $\Match$, denoted by $\sigma(\Match)$, to be the value of $z_{l(\rho)}$ at time $T(\rho)$. If $T(\rho)\in \pi$, we say that $\Match$ is a match of round $\pi$. If $\phi(\Match)\subseteq \pi$, $\Match$ is called a native match of $\pi$. $\Match$ is called a good match, if the alignment status (namely, being aligned or misaligned) of $l(\rho)$ never change throughout $\phi(\Match)$. If all the external matches of $\pi$ are good, $\pi$ is called a \emph{good} round. Round $\pi$ is called \emph{complete} if it has $2k$ external matches. 

\subsubsection{Competitive Ratio of Our Algorithm}
\label{CompetitiveRatio}

Basically, we show that in every round, the increment of the cost of $\ALG$ does not differ too much from that of $\OPT$ . This is reduced into two tasks. First, if the timers are always small (say, no more than 4$\Distance$), the cost of $\ALG$ increases by $O(kd)$ in every round, while that of $\OPT$ increases by $\Omega(d)$. This is the main task of this subsection and presented in Lemma \ref{lemma: at most n exchanges}. Second, in case that some timer $z_v$ gets too large, we show that the cost of $\OPT$  increases simultaneously and almost proportionately, as claimed in Lemma \ref{totalcostlowerbound}. 

Let's begin with some technical lemmas that will be frequently used. 

\begin{lemma}\label{lemma:convexproperty}
Let $h:\mathbb{R}^+\mapsto\mathbb{R}^+$ be an invertible increasing convex function. The inequality $h(h^{-1}(\xi)-h^{-1}(\eta))+\zeta \ge h(h^{-1}(\xi+\zeta)-h^{-1}(\eta))$ holds for any $\xi,\eta,\zeta\in\mathbb{R}^+$ with $\xi\geq\eta$.
\end{lemma}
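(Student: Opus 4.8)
The plan is to set $a=h^{-1}(\xi)$, $b=h^{-1}(\eta)$, and $c=h^{-1}(\xi+\zeta)$, so that $a,b,c\in\mathbb{R}^+$ with $c\ge a\ge b$ (using that $h$ is increasing and $\xi\ge\eta$, $\zeta\ge 0$). In these terms the claimed inequality becomes
\begin{equation*}
h(a-b)+\bigl(h(c)-h(a)\bigr)\ \ge\ h(c-b),
\end{equation*}
since $\zeta=(\xi+\zeta)-\xi=h(c)-h(a)$. So it suffices to prove $h(c)-h(a)\ge h(c-b)-h(a-b)$ whenever $c\ge a$ and $b\ge 0$ (the constraint $a\ge b$ just guarantees $a-b$ lies in the domain $\mathbb{R}^+$).

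The key step is to recognize this as a statement that the increments of a convex function are monotone in the base point: shifting the interval $[a,c]$ left by $b$ (to $[a-b,c-b]$) does not increase the rise of $h$ across it. First I would handle the differentiable case, where this is immediate: writing $h(c)-h(a)=\int_a^c h'(s)\,ds$ and $h(c-b)-h(a-b)=\int_{a-b}^{c-b} h'(s)\,ds=\int_a^c h'(s-b)\,ds$, the inequality reduces to $h'(s)\ge h'(s-b)$ for all $s\in[a,c]$, which holds because $h'$ is nondecreasing (convexity) and $s-b\le s$ (both arguments being in $\mathbb{R}^+$). For general convex $h$ I would instead invoke the standard fact that for a convex function the difference quotient $\frac{h(y)-h(x)}{y-x}$ is nondecreasing in each variable; applying this with the two intervals $[a-b,a]$ and $[c-b,c]$ of equal length $b$ gives $h(c)-h(c-b)\ge h(a)-h(a-b)$, which rearranges to exactly what we need. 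One can also derive it directly from the definition of convexity by writing $a-b$ as a convex combination of $a$ and $c-b$ and $c$ as... but the difference-quotient route is cleanest and avoids case analysis.

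I do not expect a genuine obstacle here; the only thing to be careful about is the domain bookkeeping — verifying that all four arguments $a-b$, $c-b$, $a$, $c$ of $h$ actually lie in $\mathbb{R}^+$ so that $h$ and $h^{-1}$ are applied only where defined. This is where the hypotheses $\xi\ge\eta$ (giving $a\ge b$, hence $a-b\ge 0$) and $\zeta\ge 0$ (giving $c\ge a$, hence $c-b\ge a-b\ge 0$) get used, together with $h^{-1}$ mapping $\mathbb{R}^+$ into $\mathbb{R}^+$ since $h$ does and $h$ is increasing. After that the inequality is a one-line consequence of monotonicity of difference quotients of convex functions, so I would present the differentiable-case integral argument as the main line (it matches the setting $f(t)=t^\alpha$ used throughout the paper) and remark that it extends to the non-differentiable case by the difference-quotient monotonicity.
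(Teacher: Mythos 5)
Your proposal is correct and follows essentially the same route as the paper: the same substitution via $h^{-1}$, reducing the claim to the fact that the increment of $h$ over an interval does not increase when the interval is shifted left, proved by integrating the nondecreasing derivative $h'$ (the paper compares the two intervals of length $h^{-1}(\eta)$, which is just the rearranged form of your comparison over $[a-b,c-b]$ and $[a,c]$). Your extra remark on handling non-differentiable convex $h$ via monotone difference quotients is a minor strengthening the paper does not include, but the core argument is the same.
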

\begin{proof}
Let $x=h^{-1}(\xi), y=h^{-1}(\eta), z=h^{-1}(\xi+\zeta)$. Note that $y\le x\le z$. Then $h(z)-h(z-y)=\int_{(z-y,z]}h'(t)dt=\int_{(x-y,x]}h'(t+z-x)dt$. By convexity of $h$, $h'$ is increasing, implying that $h(z)-h(z-y)\ge \int_{(x-y,x]}h'(t)dt=h(x)-h(x-y)$. As a result, $h(x-y)+h(z)-h(x)\ge h(z-y)$, which is exactly the desired inequality. \qed
\end{proof}

\begin{lemma}\label{lemma:alignedlowerbound}
Suppose that $\rho_1,\cdots,\rho_n\in R$ with $T(\rho_i)<t(\rho_{i+1})$ for any $1\le i<n$ are successive pending requests at $v\in V$. 
Let $\gamma$ and $\lambda$ be the value of $z_v$ at some time  $t_1\in (t(\rho_1),T(\rho_1)]$ and $T_n\in (t(\rho_n),T(\rho_n)]$, respectively. Let $t_i=t(\rho_i)$ for $1<i\le n$ and $T_j=T(\rho_j)$ for $1\le j< n$. Then $\sum_{i=1}^n f(T_i-t_i)\ge f(f^{-1}(\lambda)-f^{-1}(\gamma))$.
\end{lemma}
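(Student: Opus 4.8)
\textbf{Proof plan for Lemma~\ref{lemma:alignedlowerbound}.}
The plan is to set up notation so that the timer values at the relevant moments telescope, and then apply Lemma~\ref{lemma:convexproperty} repeatedly with $h=f$. Concretely, I would let $\gamma_i$ denote the value of $z_v$ at the moment the request $\rho_i$ first becomes the (unique) active request governing the timer, i.e.\ the value of $z_v$ at time $t_i$ for $i>1$ (and $\gamma_1$ a suitable value $\le\gamma$ at $t_1$), and let $\lambda_i$ denote the value of $z_v$ at time $T_i$. Because the timer $z_v$ increases at rate $f'(t-t(\rho))$ while $\rho$ is the active request at $v$ and $f(0)=0$, the increment of $z_v$ over the life of $\rho_i$ satisfies $\lambda_i = f\bigl(f^{-1}(\gamma_i) + (T_i - t_i)\bigr)$ when $\gamma_i$ is the timer value at $t_i$; equivalently $T_i - t_i = f^{-1}(\lambda_i) - f^{-1}(\gamma_i)$, so $f(T_i - t_i) = f\bigl(f^{-1}(\lambda_i) - f^{-1}(\gamma_i)\bigr)$. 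Here I am using that $f(t)=t^\alpha$ is increasing, convex, invertible on $\mathbb{R}^+$, and $f(0)=0$, so that the ``elapsed effective time'' interpretation $f^{-1}(z_v\text{ at time }s)$ is exactly the total active duration accumulated.

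Next I would track the relation between consecutive phases. Since the requests are successive and pending, and $T(\rho_i) < t(\rho_{i+1})$, the timer is \emph{not reset} between $T_i$ and $t_{i+1}$ (a reset only happens at an external match, which would end the pending status); during the gap $(T_i, t_{i+1}]$ there is no active request at $v$, so $z_v$ stays constant. Hence $\gamma_{i+1} = \lambda_i$ for $1\le i < n$. Also $\gamma_1 \le \gamma$ (the timer is nondecreasing, and $t_1 \le$ the time $t_1$ at which $\gamma$ is read — here one must be careful: the statement reads $\gamma$ at some $t_1\in(t(\rho_1),T(\rho_1)]$, so I take the timer value at $t(\rho_1)$ to be $\le \gamma$, which only helps), and $\lambda_n = \lambda$ (reading at $T_n\in(t(\rho_n),T(\rho_n)]$; again monotonicity gives $\lambda_n \ge$ value at any earlier point, but I want a lower bound on the sum, so I should instead read $\gamma$ as an \emph{upper} proxy and $\lambda$ as what it is — I will phrase the chaining so the monotonicity direction is consistent). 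The upshot is a chain $\gamma \ge \gamma_1$, $\lambda_1 = \gamma_2$, \ldots, $\lambda_{n-1} = \gamma_n$, $\lambda_n = \lambda$.

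Now the core computation: I want $\sum_{i=1}^n f(T_i - t_i) = \sum_{i=1}^n f\bigl(f^{-1}(\lambda_i) - f^{-1}(\gamma_i)\bigr) \ge f\bigl(f^{-1}(\lambda) - f^{-1}(\gamma)\bigr)$. I would prove this by induction on $n$, peeling off the last term. Write $S_{n-1} = \sum_{i=1}^{n-1} f(T_i-t_i) \ge f\bigl(f^{-1}(\lambda_{n-1}) - f^{-1}(\gamma_1)\bigr) = f\bigl(f^{-1}(\gamma_n) - f^{-1}(\gamma_1)\bigr)$ by the induction hypothesis. Then $S_n = S_{n-1} + f\bigl(f^{-1}(\lambda) - f^{-1}(\gamma_n)\bigr) \ge f\bigl(f^{-1}(\gamma_n) - f^{-1}(\gamma_1)\bigr) + f\bigl(f^{-1}(\lambda) - f^{-1}(\gamma_n)\bigr)$. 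Applying Lemma~\ref{lemma:convexproperty} with $h = f$, $\xi = \gamma_n$, $\eta = \gamma_1$, $\zeta = f\bigl(f^{-1}(\lambda)-f^{-1}(\gamma_n)\bigr)$ — so that $\xi + \zeta$ has $f^{-1}(\xi+\zeta) = f^{-1}(\gamma_n) + \bigl(f^{-1}(\lambda) - f^{-1}(\gamma_n)\bigr) = f^{-1}(\lambda)$, using $f(a+b) \ge f(a)$ type convexity to identify $f^{-1}(f(x)+f(y)-f(\cdot))$ — wait, more directly: the lemma states $h(h^{-1}(\xi)-h^{-1}(\eta)) + \zeta \ge h(h^{-1}(\xi+\zeta) - h^{-1}(\eta))$. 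With these substitutions the left side is exactly $f\bigl(f^{-1}(\gamma_n)-f^{-1}(\gamma_1)\bigr) + f\bigl(f^{-1}(\lambda)-f^{-1}(\gamma_n)\bigr)$ and the right side is $f\bigl(f^{-1}(\lambda) - f^{-1}(\gamma_1)\bigr)$, which is $\ge f\bigl(f^{-1}(\lambda) - f^{-1}(\gamma)\bigr)$ since $\gamma_1 \le \gamma$ and $f^{-1}$ is increasing so $f^{-1}(\lambda) - f^{-1}(\gamma_1) \ge f^{-1}(\lambda) - f^{-1}(\gamma)$. This closes the induction; the base case $n=1$ is the identity $f(T_1-t_1) = f(f^{-1}(\lambda_1) - f^{-1}(\gamma_1)) \ge f(f^{-1}(\lambda)-f^{-1}(\gamma))$, again by $\gamma_1 \le \gamma$.

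\textbf{Main obstacle.} The routine calculus is easy; the delicate part is the bookkeeping of timer values — establishing cleanly that (i) over a single request's active life the timer increment corresponds to $f(T_i - t_i)$ \emph{only when the timer starts from $\gamma_i$ and rises as $f'$}, which relies on $f(0)=0$ and on $\rho_i$ being the unique active request at $v$ during the chunk in question; and (ii) that the timer genuinely carries over unchanged between consecutive requests, i.e.\ no reset occurs in the interstices $(T_i, t_{i+1})$. The hypothesis ``successive pending requests with $T(\rho_i) < t(\rho_{i+1})$'' is exactly what rules out an intervening external match (which is the only event resetting $z_v$), so I would make that implication explicit. A secondary subtlety is the one-sided nature of reading $\gamma$ and $\lambda$ at arbitrary points of the half-open intervals; I handle it by always using monotonicity of $z_v$ in the direction that weakens the bound, as indicated above.
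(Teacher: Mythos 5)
Your argument has a genuine gap, and it sits exactly at the two places you flagged as routine. First, the per-interval bookkeeping mis-models the timer: when $\rho_i$ ($i>1$) arrives, $z_v$ keeps its accumulated value $\gamma_i$ but its \emph{rate} restarts at $f'(0)=0$, so the increment over $(t_i,T_i]$ is exactly $f(T_i-t_i)$ and hence $\lambda_i=\gamma_i+f(T_i-t_i)$, not $\lambda_i=f\bigl(f^{-1}(\gamma_i)+(T_i-t_i)\bigr)$; your ``identity'' $f(T_i-t_i)=f\bigl(f^{-1}(\lambda_i)-f^{-1}(\gamma_i)\bigr)$ is only a one-sided inequality ($\ge$, by concavity of $f^{-1}$). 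That alone would be harmless, but the chaining step you then need, $f\bigl(f^{-1}(\gamma_n)-f^{-1}(\gamma_1)\bigr)+f\bigl(f^{-1}(\lambda)-f^{-1}(\gamma_n)\bigr)\ge f\bigl(f^{-1}(\lambda)-f^{-1}(\gamma_1)\bigr)$, is of the form $f(y-x)+f(z-y)\ge f(z-x)$ for $x\le y\le z$, i.e.\ the \emph{reverse} of superadditivity of a convex function vanishing at $0$, and it is false for $f(t)=t^\alpha$, $\alpha>1$ (with $f(t)=t^2$ and $x=0,y=1,z=2$: $1+1<4$). Your invocation of Lemma~\ref{lemma:convexproperty} there rests on the assertion $f^{-1}(\gamma_n+\zeta)=f^{-1}(\lambda)$ with $\zeta=f\bigl(f^{-1}(\lambda)-f^{-1}(\gamma_n)\bigr)$, i.e.\ on additivity of $f$; in truth $\gamma_n+\zeta\le\lambda$, so the lemma only yields a bound weaker than the target. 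A concrete instance: $f(t)=t^2$, $\gamma=1$ at $t_1$, $T_1-t_1=1$ (so $\lambda_1=4$), then $\rho_2$ waits one unit (so $\lambda=5$). Your terms sum to $f(1)+f(\sqrt5-2)\approx 1.06$, below the target $f(\sqrt5-1)\approx 1.53$, while the true sum $f(1)+f(1)=2$ does meet it: the two errors do not cancel into a valid proof.

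The fix is precisely \emph{not} to wrap every increment in the $f(f^{-1}(\cdot)-f^{-1}(\cdot))$ form, which is what the paper does. Keep the raw timer increments: for $i\ge 2$ the increment $c_i$ of $z_v$ over $(t_i,T_i]$ equals $f(T_i-t_i)$ exactly; only the first request needs care, because $t_1$ may lie strictly after $t(\rho_1)$. There, since $f(t_1-t(\rho_1))\le\gamma$, convexity gives $f(T_1-t_1)=f\bigl(f^{-1}(c_1+x)-f^{-1}(x)\bigr)\ge f\bigl(f^{-1}(c_1+\gamma)-f^{-1}(\gamma)\bigr)$ with $x=f(t_1-t(\rho_1))$. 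Then a single application of Lemma~\ref{lemma:convexproperty} with $\xi=c_1+\gamma$, $\eta=\gamma$ and the \emph{additive} quantity $\zeta=c_2+\cdots+c_n$ gives $f\bigl(f^{-1}(c_1+\gamma)-f^{-1}(\gamma)\bigr)+\sum_{i\ge2}c_i\ge f\bigl(f^{-1}(\gamma+\sum_i c_i)-f^{-1}(\gamma)\bigr)\ge f\bigl(f^{-1}(\lambda)-f^{-1}(\gamma)\bigr)$, the last step using $\lambda-\gamma\le\sum_i c_i$. Note that this inequality also disposes of your ``no reset in the interstices'' worry: resets can only decrease $z_v$, so one never needs to argue they cannot occur. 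The asymmetric treatment of the first request versus the plain additive $\zeta$ is the idea your proposal is missing.
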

\begin{proof}
For any $1\le i\le n$, let $c_i$ be the increment of $z_v$ during $I_i=(t_i,T_i]$, i.e. $c_i\triangleq\int_{I_i} f'(t-t(\rho_i))dt$. Then we have $\lambda-\gamma\le\sum_{i=1}^n c_i$.

When $i>1$, $c_i=f(T_i-t_i)$ because $t(\rho_i)=t_i$. 

Now it comes to $i=1$. Since $z_v=\gamma$ at time $t_1$, $f(t_1-t(\rho_1))=\int_{(t(\rho_1),t_1]}f'(t-t(\rho_1))dt\le \gamma$.
Because $c_1=\int_{(t_1,T_1]}f'(t-t(\rho_1))dt=f(T_1-t(\rho_1))-f(t_1-t(\rho_1))$, it holds that $T_1-t_1=f^{-1}(c_1+x)-f^{-1}(x)$ where $x=f(t_1-t(\rho_1))$. By convexity of $f$ and $x\le \gamma$, we have $f^{-1}(c_1+x)-f^{-1}(x)\ge f^{-1}(c_1+\gamma)-f^{-1}(\gamma)$. Then, 
\begin{align*}
\sum_{i=1}^n f(T_i-t_i)&\ge f(f^{-1}(c_1+\gamma)-f^{-1}(\gamma))+c_2+\cdots+c_n\\
                                   &\ge f(f^{-1}(\gamma+c_1+c_2+\cdots+c_n)-f^{-1}(\gamma))\\
                                   &\ge f(f^{-1}(\lambda)-f^{-1}(\gamma)),
\end{align*}
where the second inequality follows from Lemma \ref{lemma:convexproperty}. \qed
\end{proof}
\begin{lemma}\label{lemma:aligned initiation}
If an aligned point $v\in V$ initiates a good native match $\Match$ of round $\pi$, then $round\_cost_{time}(\Round,\OPT) \ge \min\{\sigma(\Match),2\Distance\}$.
\end{lemma}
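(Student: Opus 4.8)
\textbf{Proof proposal for Lemma~\ref{lemma:aligned initiation}.}
The plan is to relate the time cost $\OPT$ must pay at the initiating point $v$ during $\phi(\Match)$ to the value $\sigma(\Match)$ of the match, exploiting both the convexity lemmas and the fact that $v$ never changes its alignment status during the phase. First I would unfold the definitions: $\phi(\Match)=(t,T(\rho)]$ where $t$ is the last moment before $T(\rho)$ at which $z_v=0$, and $\sigma(\Match)$ is the value of $z_v$ at $T(\rho)$. So over $\phi(\Match)$ the timer $z_v$ grows from $0$ to $\sigma(\Match)$, and this growth is driven exactly by the (successive) active requests of $\ALG$ located at $v$ during this period — call them $\rho_1,\dots,\rho_n$ in arrival order. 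The key structural point is that, because $v$ is aligned and stays aligned throughout $\phi(\Match)$ (this is the ``good'' hypothesis) and $v$ initiates the match (so by the priority rule and the parity bookkeeping $v$'s pending count under $\ALG$ and under $\OPT$ agree in parity the whole time), $\OPT$ cannot have cleared all of $v$'s backlog for free: whenever $\ALG$ has an odd number of pending requests at $v$, so does $\OPT$, and hence $\OPT$ is also forced to keep a request pending at $v$ (or pay $\frac{\Distance}{2}$ for an external match there). So the requests $\ALG$ is accumulating on $v$ are, up to parity, also accumulating cost for $\OPT$.

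The second step is the quantitative estimate. I would split into two cases according to whether $\OPT$ makes an external match at $v$ during $\phi(\Match)$. If it does, then $C_{space}(v,\phi(\Match),\OPT)\ge \frac{\Distance}{2}$, and since each external match at $v$ also means $\OPT$ pays $\frac{\Distance}{2}$ on the other endpoint, this already contributes $\ge \Distance$; iterating or combining with one more such event gives $\ge 2\Distance$, which covers the $2\Distance$ side of the minimum. If $\OPT$ makes no external match at $v$ during $\phi(\Match)$, then $\OPT$ matches the requests at $v$ internally among themselves, and the parity argument forces that at least half of the timer's growth corresponds to requests $\OPT$ is genuinely delaying — more precisely, the requests $\OPT$ pairs internally at $v$ are (after a possible shift by one) a subsequence of $\rho_1,\dots,\rho_n$ whose waiting-time costs under $\OPT$ are bounded below using Lemma~\ref{lemma:alignedlowerbound} (applied with $\gamma$ small and $\lambda=\sigma(\Match)$, so the bound reads $\ge f(f^{-1}(\sigma(\Match))-f^{-1}(\gamma))$, which tends to $\sigma(\Match)$ as the slack $\gamma\to 0$; the clean way is to argue that the timer increment attributable to $\OPT$-delayed requests is itself $\ge \sigma(\Match)$ up to the first-request correction handled exactly as in Lemma~\ref{lemma:alignedlowerbound}). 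Summing $C_{time}(v,\phi(\Match)\setminus\Gamma,\OPT)$ over the relevant requests and using $\phi(\Match)\subseteq\pi$ (native) and $\phi(\Match)\cap\Gamma=\emptyset$ here (since $v$ initiated via case (a) or the timer at $v$ stayed below $2\Distance$ until $T(\rho)$; if it ever exceeded $2\Distance$ the point would be in $\Gamma$ and one reduces to counting outside $\Gamma$) yields $round\_cost_{time}(\pi,\OPT)\ge\sigma(\Match)$ in this case, hence $\ge\min\{\sigma(\Match),2\Distance\}$ overall.

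I would assemble these as: $round\_cost_{time}(\pi,\OPT)=\sum_{w}C_{time}(w,\pi\setminus\Gamma,\OPT)\ge C_{time}(v,\phi(\Match)\setminus\Gamma,\OPT)\ge\min\{\sigma(\Match),2\Distance\}$, where the first inequality drops all terms except $w=v$ and restricts the time interval to $\phi(\Match)\subseteq\pi$, and the second is the case analysis above.

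\textbf{Main obstacle.} The delicate part is the parity/alignment bookkeeping in the ``no external match at $v$'' case: I need to argue rigorously that the requests $\ALG$ piles onto $v$ during $\phi(\Match)$ force $\OPT$ to incur comparable delay there, i.e.\ that $\OPT$ cannot match them in a way that avoids waiting-time cost. This is exactly where the hypotheses ``$v$ aligned throughout $\phi(\Match)$'' and ``$v$ initiates $\Match$'' (so the priority rule did not externally match a request at $v$ earlier, and the parities are locked) must be used; getting the off-by-one ``first request'' correction to line up with the $\gamma\to 0$ slack in Lemma~\ref{lemma:alignedlowerbound} is the fiddly technical point. The $2\Distance$ truncation is precisely what lets us stop worrying about the regime where $z_v$ would enter $\Gamma$, so the minimum in the statement is not cosmetic — it is what makes the interval $\phi(\Match)\setminus\Gamma$ large enough to carry the bound.
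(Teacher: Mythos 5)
There is a genuine gap, and it sits exactly where you split on whether $\OPT$ makes an external match at $v$ during $\phi(\Match)$. First, that branch delivers a bound on $C_{space}$, but the lemma asserts a lower bound on $round\_cost_{time}(\pi,\OPT)$ alone; your final chain $round\_cost_{time}(\pi,\OPT)\ge C_{time}(v,\phi(\Match)\setminus\Gamma,\OPT)\ge\min\{\sigma(\Match),2\Distance\}$ silently pretends the time-cost bound also holds in that branch, and the auxiliary claim that ``iterating or combining with one more such event gives $\ge 2\Distance$'' is unsupported ($\OPT$ may make a single such match). Second, the branch is in fact empty under the hypotheses, and recognizing this is the real content of ``good'': an external match of $\OPT$ involving $v$ inside $\phi(\Match)$ changes the parity of $\OPT$'s pending requests at $v$ at an instant where $\ALG$'s parity at $v$ does not change (or, if the match consumes a request arriving at $v$ at that very instant, $\ALG$'s parity flips while $\OPT$'s does not, arrival times being pairwise distinct and $\ALG$ making no external match at $v$ in the interior of the phase since that would reset $z_v$). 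Either way $v$'s alignment status would change during $\phi(\Match)$, contradicting goodness. So the hypothesis you try to buy with space cost is already given to you for free, and the lemma needs no such case.

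The paper's proof is your ``main case'' carried out cleanly, and your quantitative step should be tightened to match it: since $v$ stays aligned, during each interval $(t(\rho_i),T(\rho_i)]$ in which $\ALG$ has its single pending request $\rho_i$ at $v$, $\OPT$'s pending count at $v$ is odd, and since no request arrives at $v$ strictly inside that interval ($\ALG$ would have matched it internally), some one request $\rho_i'$ of $\OPT$ with $t(\rho_i')\le t(\rho_i)$ stays pending through the whole interval; monotonicity of $f'$ then gives $C_{time}(v,(t(\rho_i),T(\rho_i)],\OPT)\ge f(T(\rho_i)-t(\rho_i))$, and summing yields exactly $\sigma(\Match)$ when $\sigma(\Match)\le 2\Distance$, while if $\sigma(\Match)>2\Distance$ one simply truncates at the moment $z_v$ reaches $2\Distance$ to get $2\Distance$ — that truncation, not $\Gamma$-bookkeeping per se, is where the minimum comes from. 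No $\gamma\to 0$ slack, no ``half of the growth'', and Lemma~\ref{lemma:alignedlowerbound} is not needed in the form you invoke it; what is used is the elementary comparison $t(\rho_i')\le t(\rho_i)$ plus $f'$ increasing. Your side remark about $\Gamma$ is fine in spirit: while $v$ has a pending request and some other timer exceeds $2\Distance$, rule (b) would fire immediately, so the counted intervals indeed lie outside $\Gamma$.
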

\begin{proof}
Let $\rho_1,\cdots,\rho_n\in R$ with $T(\rho_i)<t(\rho_{i+1})$ for any $1\le i<n$ be the requests at $v$ that are successively pending during $\phi(\Match)$. We proceed case by case.

Case 1: $\sigma(\Match)\le 2\Distance$. Since $v$ is aligned throughout $\phi(\Match)$, $\OPT$ has requests $\rho'_1,\cdots,\rho'_n\in R$ at $v$ with $t(\rho'_i)\le t(\rho_i)$ and $T(\rho'_i)\ge T(\rho_i)$ for any $1\le i\le n$. 
Then by Lemma \ref{lemma:alignedlowerbound}, $round\_cost_{time}(\Round,\OPT)\ge \sum_{i\ge 1}^n f(T(\rho_i)-t(\rho_i))=\sigma(\Match)$.

Case 2: $\sigma(\Match)> 2\Distance$. Suppose $z_v$ reaches $2\Distance$ exactly at time $\widehat{T}$ when $\rho_m$ is active. Adopting the notation in Case 1, we have $round\_cost_{time}(\Round,\OPT)\ge \sum_{i\ge 1}^{m-1} f(T(\rho_i)-t(\rho_i))+f(\widehat{T}-t(\rho_m))=2\Distance$. \qed
\end{proof}
%

\begin{lemma}\label{lemma:no misaligned points in Psi}
If a misaligned point appears in $\Psi$ in a good round $\Round$, then we have $round\_cost_{time}(\pi,\OPT) \ge \Distance$ or $round\_cost_{space}(\pi,\OPT) \ge \Distance$.
\end{lemma}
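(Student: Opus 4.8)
The plan is to trace how $p$ could have entered $\Psi$ and then to cash in the goodness of the round. Since $\Psi$ is reset to $\emptyset$ at the start of $\pi$ and $p\in\Psi$ at some time during $\pi$, the point $p$ must have been inserted into $\Psi$ by the third action of Operation~\ref{externalmatch}. Hence there is an external match $\Match=\langle\rho,\rho'\rangle$ of $\pi$ with $p=\Location(\rho)$ that is \emph{initiated by $p$}, after which $p\in\Psi$. I would take $\Match$ to be the last such insertion before the moment at which $p$ is observed to be misaligned, so that $p$ stays in $\Psi$ from $T(\rho)$ up to that moment. From the enabling conditions of Operation~\ref{externalmatch} (either $\Distance\le z_p<2\Distance$, or $z_p\ge 2\Distance$, at time $T(\rho)$) we get $\sigma(\Match)=z_p|_{T(\rho)}\ge\Distance$. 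Because $\pi$ is good, $\Match$ is a good match, so the alignment status of $p$ is constant throughout the phase $\phi(\Match)$.

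Now I would split on that constant status. If $p$ is \emph{aligned} throughout $\phi(\Match)$, then $p$ initiates the good match $\Match$ as an aligned point, and I would apply Lemma~\ref{lemma:aligned initiation} (after checking $\Match$ is a native match of $\pi$, or else restricting the Lemma~\ref{lemma:alignedlowerbound}-style comparison to the part of $\phi(\Match)$ inside $\pi$) to conclude $round\_cost_{time}(\pi,\OPT)\ge\min\{\sigma(\Match),2\Distance\}\ge\Distance$, which is the first alternative of the lemma. (This case is not vacuous to reconcile with the hypothesis: since $p$ is misaligned at the later observation time but aligned on $\phi(\Match)$, its status flipped after $T(\rho)$; a new arrival at $p$ flips the parity of both $\ALG$ and $\OPT$ and so cannot change the status, and by the choice of $\Match$ and the behaviour of Operation~\ref{externalmatch} the responsible external match at $p$ must be one of $\OPT$, which only reinforces the conclusion via $round\_cost_{space}$.)

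If instead $p$ is \emph{misaligned} throughout $\phi(\Match)$, I would compare the pending requests of $\ALG$ and $\OPT$ at $p$ during $\phi(\Match)$. On $\ALG$'s side there are the successive pending requests $\rho_1,\dots,\rho_n$ with $\rho_n=\rho$, whose cumulative timer growth equals $\sigma(\Match)\ge\Distance$; on $\OPT$'s side the pending count at $p$ has, at every instant, the opposite parity to $\ALG$'s. For each $\rho_i$, either $\OPT$ keeps some request at $p$ pending while $\rho_i$ is pending — so $\OPT$ accumulates time cost at $p$ in parallel with the growth of $z_p$ contributed by $\rho_i$ (bounded below via Lemma~\ref{lemma:alignedlowerbound}/convexity) — or $\OPT$ matches $\rho_i$ essentially at its arrival, which, arrival times being distinct, forces an \emph{external} match of $\rho_i$ by $\OPT$ and hence a $\tfrac{\Distance}{2}$ contribution to $round\_cost_{space}(\pi,\OPT)$; summing over $i$ gives $round\_cost_{time}(\pi,\OPT)\ge\Distance$ or $round\_cost_{space}(\pi,\OPT)\ge\Distance$.

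The dichotomy itself is clean; the bookkeeping is where the work lies. First, $\OPT$'s time cost is only counted outside $\Gamma$, so I must ensure the relevant waiting of $\OPT$'s requests at $p$ occurs while the timers are $\le 2\Distance$: this is automatic when $\sigma(\Match)\le 2\Distance$, and otherwise I would use the ``stop at $z_p=2\Distance$'' truncation exactly as in the proof of Lemma~\ref{lemma:aligned initiation}, at the cost of only weakening the bound from $\sigma(\Match)$ to $2\Distance\ge\Distance$. Second, I must handle the case where $\Match$ is not native (its phase reaching into an earlier round), either by arguing it is in fact native for the last insertion of $p$, or by showing the portion of $\phi(\Match)$ inside $\pi$ already carries enough timer growth. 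Third, in the space-cost branch I should make sure I really get a full $\Distance$ rather than $\tfrac{\Distance}{2}$, which is where I expect to have to lean on the priority rule and on the precise consequences of choosing $\Match$ as the last insertion; this, together with making the parity case split genuinely exhaustive around the instantaneous change at $T(\rho)$, is the main obstacle.
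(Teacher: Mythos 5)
Your aligned case is fine and is essentially the paper's Case 1 (modulo the nativity caveat, which the paper's own proof also glosses over). The genuine gap is your Case B, where $p$ is misaligned throughout $\phi(\Match)$. First, the dichotomy you rely on there is false: if $\OPT$ keeps no request pending at $p$ while $\rho_i$ is pending under $\ALG$, it does not follow that $\OPT$ matched $\rho_i$ externally --- $\OPT$ may have matched $\rho_i$ at its arrival \emph{internally} with an earlier request at $p$ that was still pending under $\OPT$, and whose own waiting time can be negligible. Misalignment only says the two pending counts have opposite parity, so $\OPT$'s pending count at $p$ can be zero throughout, with no time cost and no external match at $p$; this is exactly the adversarial pattern of Examples \ref{eg1}--\ref{eg3}, and it is precisely the situation the $\Psi$ machinery exists to sidestep: a purely local comparison at a misaligned point cannot lower-bound $\OPT$'s cost. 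Second, the case is structurally misdirected: an external match of $\ALG$ at $p$ flips $p$'s alignment status, so if $p$ is misaligned throughout $\phi(\Match)$ it enters $\Psi$ \emph{aligned}, and the misalignment you later observe is caused by some subsequent event, which your argument never examines.

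The paper closes exactly this hole by taking $v$ to be the \emph{first} misaligned point of $\Psi$ in the round and casing on the event that makes it misaligned: if $v$ is already misaligned at insertion, then (by the flip) it was aligned just before the inserting match, hence aligned on the whole phase by goodness, and Lemma \ref{lemma:aligned initiation} gives $round\_cost_{time}(\pi,\OPT)\ge\Distance$; if the flip is caused by an external match of $\OPT$, then $round\_cost_{space}(\pi,\OPT)\ge\Distance$ (two requests at $\Distance/2$ each, which also resolves your ``only $\Distance/2$'' worry); and if the flip is caused by an external match of $\ALG$, the $\Psi$-update rule forces its initiator to be $v$ or another point of $\Psi$, which is aligned precisely because $v$ is the first misaligned point of $\Psi$ --- and Lemma \ref{lemma:aligned initiation} is then applied to \emph{that} match, not to the inserting one. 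So the bound is always extracted from an aligned initiator or from an $\OPT$ external match, never from a local estimate at a misaligned point; the only misaligned-initiator analysis in the paper (Lemma \ref{lemma:phase included in round}) is global, using the evenness of the number of misaligned points plus the priority rule to locate a pending $\OPT$ request somewhere else, and it concerns a point outside $\Psi$, not the situation of this lemma. To repair your proof you would have to import that structure; the local argument in your Case B cannot be patched.
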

\begin{proof}
Let $v$ be the first misaligned point in $\Psi$ during the round $\Round$, namely, during $\Round$, any point in $\Psi$  is aligned before $v\in\Psi$ gets misaligned. Then we proceed case by case.

\textbf{Case 1}:  $v$ is misaligned when it goes into $\Psi$. By the rule of updating $\Psi$, $v$ goes into $\Psi$ due to an external match $\Match$ initiated by $v$. Hence, before $\Match$ occurs, $v$ is aligned. Then $round\_cost_{time}(\pi,\OPT) \ge \Distance$ by Lemma \ref{lemma:aligned initiation}.

\textbf{Case 2}:  $v$ is aligned when it goes into $\Psi$, but gets misaligned due to an external match of $\OPT$. Then $round\_cost_{space}(\pi,\OPT) \ge \Distance$.

\textbf{Case 3}:  $v$ is aligned when it goes into $\Psi$, but gets misaligned due to an external match $\Match$ of $\ALG$. Then before $\Match$ occurs, $v$ is aligned. Again by the rule of updating $\Psi$, $\Match$ must be initiated either by $v$ or by another point $u\in \Psi$. Anyway, the initiating point must be aligned before $\Match$ occurs, since $v$ is the first misaligned point in $\Psi$ during this round. As a result, $round\_cost_{time}(\pi,\OPT) \ge \Distance$ by Lemma \ref{lemma:aligned initiation}. \qed
\end{proof}

Roughly speaking, the next lemma claims that under some condition, even if $\ALG$ pairs up one request located inside $\Psi$ with another outside $\Psi$, the cost of $\OPT$ must increase substantially.
\begin{lemma}\label{lemma:phase included in round}
Suppose in a good round $\Round$, $v\notin \Psi$ initiates a native match $\Match$ between a request at $v$ and another at $v'\in \Psi$. Then one of the following two claims is true:
\begin{itemize}
\item $round\_cost_{time}(\pi,\OPT)\ge f(f^{-1}(2\Distance)-f^{-1}(\Distance))$,
\item $round\_cost_{space}(\pi,\OPT)\ge \Distance$.
\end{itemize}
\end{lemma}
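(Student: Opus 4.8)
The plan is to first determine why the match $\Match$ is performed at all, and then to split on the alignment status of $v$ during $\Phase(\Match)$. Since $v'\in\Psi$, the pair $\{v,v'\}$ meets $\Psi$, so branch~(a) of Operation~\ref{externalmatch} cannot be the rule under which $v$ initiates $\Match$; hence branch~(b) applies, i.e.\ $z_v\ge 2\Distance$ at time $T(\rho)$, so $\sigma(\Match)\ge 2\Distance$. I also record one arithmetic fact used below: applying Lemma~\ref{lemma:convexproperty} with $h=f$, $\eta=\xi$ and $\xi=\zeta=\Distance$ (so $f^{-1}(\xi)-f^{-1}(\eta)=0$ and $f(0)=0$) gives $f(f^{-1}(2\Distance)-f^{-1}(\Distance))\le\Distance$; consequently the first conclusion of the lemma follows already from $round\_cost_{time}(\pi,\OPT)\ge\Distance$, and a fortiori from $round\_cost_{time}(\pi,\OPT)\ge 2\Distance$.

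Since $\pi$ is a good round and $\Match$ is one of its external matches, $\Match$ is a good match, so the alignment of $v=\Location(\rho)$ is constant on $\Phase(\Match)$; thus $v$ is either aligned throughout $\Phase(\Match)$ or misaligned throughout it. In the first case $v$ is an aligned point initiating the good native match $\Match$, so Lemma~\ref{lemma:aligned initiation} yields $round\_cost_{time}(\pi,\OPT)\ge\min\{\sigma(\Match),2\Distance\}=2\Distance$, and we are done by the arithmetic fact above.

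The remaining case, $v$ misaligned throughout $\Phase(\Match)$, is the core of the proof; here $\OPT$ need not own any request at $v$, so I would shift attention to $v'$. If $v'$ is misaligned at some instant of $\pi$ at which it belongs to $\Psi$, then Lemma~\ref{lemma:no misaligned points in Psi} already delivers $round\_cost_{time}(\pi,\OPT)\ge\Distance$ or $round\_cost_{space}(\pi,\OPT)\ge\Distance$, and the arithmetic fact finishes it. Otherwise $v'$ is aligned throughout the portion of $\pi$ during which it belongs to $\Psi$; since $\Psi$ is emptied at the start of $\pi$, $v'$ entered $\Psi$ as the initiator of an earlier external match of $\pi$, which reset $z_{v'}$ to $0$, and $v'$ stays aligned up to $T(\rho)$. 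Let $J\subseteq\pi$ be the maximal interval ending at $T(\rho)$ on which $z_{v'}$ is not reset; then $v'$ is aligned on $J$ and $z_{v'}$ climbs from $0$ to some value $\sigma'$ just before $\Match$. I would then argue: if $\sigma'\ge\Distance$, then, $v'$ being aligned on $J$, $\OPT$ owns requests at $v'$ covering the whole life of $\ALG$'s requests there, so Lemma~\ref{lemma:alignedlowerbound} (together with $f^{-1}(0)=0$) lower-bounds $\OPT$'s time cost at $v'$ over $J$ by $f(f^{-1}(\sigma')-f^{-1}(0))=\sigma'\ge\Distance$, a cost incurred inside $\pi$ and outside $\Gamma$ (up to the moment $z_v$ reaches $2\Distance$ no $\ALG$-pending request sits at a point whose timer exceeds $2\Distance$, and once $z_v$ reaches $2\Distance$ the match $\Match$ is forced immediately); if instead $\sigma'<\Distance$, then the priority rule together with the exhaustive application of Operation~\ref{externalmatch} should force $\OPT$ to pay for the misalignment of $v$, since on the sub-interval of $\Phase(\Match)$ where $z_v\in[\Distance,2\Distance)$ every $\ALG$-pending request other than $\rho$ lies in $\Psi$, so the surplus request $\ALG$ keeps alive at $v$ but $\OPT$ does not can only be reconciled by $\OPT$ externally matching a request at $v$ during $\pi$ (or by tracing $v'$'s membership in $\Psi$ back to a configuration that is already charged), giving $round\_cost_{space}(\pi,\OPT)\ge\Distance$.

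The main obstacle is exactly this last case. Two points need care: (a) converting ``$v$ misaligned and $v\notin\Psi$ while $v'\in\Psi$'' into a concrete charge against $\OPT$, which relies on the priority rule and on exhaustiveness to confine all other pending requests to $\Psi$ during the $[\Distance,2\Distance)$ window of $z_v$, followed by a short case analysis on what $\OPT$ does at $v$; and (b) the $\Gamma$-bookkeeping, i.e.\ verifying that the time cost attributed to $\OPT$ genuinely lies in $round\_cost_{time}(\pi,\OPT)$, which by definition excludes $\Gamma$. Both are local to the behaviour of $\ALG$ and $\OPT$ on $\Phase(\Match)$, but getting them exactly right is the crux.
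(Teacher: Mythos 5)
Your setup (branch (b) forces $z_v\ge 2\Distance$, the reduction of the first bullet to a bound of $\Distance$, the aligned-$v$ case via Lemma~\ref{lemma:aligned initiation}, and the use of Lemma~\ref{lemma:no misaligned points in Psi} when a misaligned point sits in $\Psi$) matches the paper. But in the core case --- $v$ misaligned throughout $\Phase(\Match)$ and every point of $\Psi$ aligned --- your argument has a genuine gap. The paper's key idea there is a global parity argument: the number of misaligned points is always even, so whenever $v$ is misaligned there is a \emph{second} misaligned point $u\notin\Psi\cup\{v\}$; during the window where $z_v$ climbs from $\Distance$ to $2\Distance$, no request is pending or arrives outside $\Psi\cup\{v\}$ (your own observation, via rule (a) and the priority rule), so at $u$ the algorithm has no pending request while $\OPT$ must have one pending throughout each such interval, and Lemma~\ref{lemma:alignedlowerbound} turns this into the time-cost bound $f(f^{-1}(2\Distance)-f^{-1}(\Distance))$. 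Your proposal never locates this charge at a third point: you try to charge either $v'$'s accumulated timer $\sigma'$ or a space cost of $\OPT$ at $v$, and the second of these fails. The misalignment of $v$ need not be ``reconciled'' by $\OPT$ inside the round $\pi$: it may be inherited from external matches (of $\ALG$ or $\OPT$) in earlier rounds, so $\OPT$ can spend the whole of $\pi$ making only cheap internal matches, with $\sigma'<\Distance$ and $v'$ aligned, and $round\_cost_{space}(\pi,\OPT)=0$. In that scenario the lemma is still true, but only because $\OPT$ is paying waiting time at the other misaligned point $u$ --- a cost your case analysis never exhibits; the step ``the surplus request at $v$ can only be reconciled by $\OPT$ externally matching a request at $v$ during $\pi$'' is a non sequitur.

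Two smaller points. First, your $\sigma'\ge\Distance$ branch implicitly reproves a variant of Lemma~\ref{lemma:aligned initiation} at the non-initiating endpoint $v'$; this could be made to work, but the $\Gamma$-bookkeeping you sketch (phrased only in terms of $z_v$ reaching $2\Distance$) does not cover all of the interval $J$, during which some point's timer may exceed $2\Distance$ while its request is the sole active one. Second, even if both of your subcases on $\sigma'$ were repaired, the decomposition by the behaviour of $v'$ alone cannot substitute for the parity argument, since the residual scenario above defeats it independently of $v'$. The fix is to argue as the paper does: dispose of ``some misaligned point in $\Psi$'' by Lemma~\ref{lemma:no misaligned points in Psi}, then in the remaining case use evenness of the number of misaligned points together with your confinement observations to produce a pending $\OPT$ request outside $\Psi\cup\{v\}$ on each interval of the $[\Distance,2\Distance)$ window, and sum with Lemma~\ref{lemma:alignedlowerbound}.
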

Basic idea of the proof: Since $\Match$ is between $v\notin \Psi$ and $v'\in \Psi$ and initiated by $v$, it holds that $z_v\ge 2\Distance$ when $\Match$ occurs. All we need to prove is that in the process that $z_v$ increases from $\Distance$ to $2\Distance$, whenever $\ALG$ has a pending request $\rho$ at $v$, $\OPT$ also has a request that stays pending at least as long as $\rho$ does. Then the proof ends due to Lemma \ref{lemma:alignedlowerbound}. 
\begin{proof}
If there exists a misaligned point in $\Psi$ during $\Round$, according to Lemma \ref{lemma:no misaligned points in Psi}, the assertion follows. 
If $v$ is aligned throughout the period $\Phase(\Match)$, according to Lemma \ref{lemma:aligned initiation}, the assertion again follows.

Now we focus on the other case, namely, all points in $\Psi$ are aligned, and $v$ is misaligned in $\Phase(\Match)$.

Let $\rho_1,...,\rho_n$ with $T(\rho_i)\le t(\rho_{i+1})$ for each $i$ be the pending requests at $v$ that cause $z_v$ to increase from $\Distance$ to $2\Distance$. Choose $t(\rho_{1})\le a_1<T(\rho_{1})$ and $t(\rho_{n})<b_n\le T(\rho_{n})$ such that $z_v=\Distance$ at time $a_1$ and $z_v=2\Distance$ at time $b_n$. 
Let $a_i=t(\rho_{i})$ for any $1<i\le n$, $b_i=T(\rho_{i})$ for any $1\le i< n$, and $I_i=(a_i,b_i]$ for any $1\le i\le n$. Then $\sum_{i=1}^n \int_{I_i} f'(t-t(\rho_i))dt=2\Distance-\Distance=\Distance$. 

Now we have three observations.
\begin{enumerate}
\item \emph{During each time interval $I_i$, no request is pending outside $\Psi\bigcup\{v\}$.} \\For contradiction, suppose there is a pending request $\rho'$ in $I_i$ with $l(\rho')\notin\Psi\bigcup\{v\}$. Since $z_v\ge \Distance$ and there is a pending request $\rho$ at $v$ during $I_i$, $\ALG$ should match $\rho$ with $\rho'$, contradictory to the assumption that $\Match$ is between requests at $\Psi$ and $V\setminus \Psi$. \label{no_pending}
\item \emph{During each time interval $I_i$, no requests arrive at any point outside $\Psi\bigcup\{v\}$.} \\Suppose on the contrary that a request $\rho'$ arrives at $u\notin\Psi\bigcup\{v\}$ during $I_i$. By Observation \ref{no_pending}, only $v$ has a pending request outside $\Psi$, which must get matched with $\rho'$ due to the priority rule. This means that $\Match$ occurs outside $\Psi$, again contradictory to the assumption of the lemma. \label{no arrival}
\item \emph{During each time interval $I_i$, $\Psi$ remains unchanged.} \\First, we argue that no point is added to $\Psi$. Suppose on the contrary that some $u$ is added to $\Psi$ during $I_i$. This means that an external match $\Match'=\langle\rho,\rho'\rangle$ initiated by $u$ occurs during $I_i$. Without loss of generality, assume $u=\Location(\rho), w=\Location(\rho')$. Since at any moment at most one request arrives, either $\rho$ or $\rho'$ arrives before $\Match'$ occurs. By Observation \ref{no_pending}, the only possibility is that $\rho'$ arrives before $\Match'$ occurs and $w\in \Psi$, which contradicts the priority rule of $\ALG$. 

Second, we show that no point is removed from $\Psi$. Suppose on the contrary that some $u$ is removed from $\Psi$ during $I_i$. Since no point is added to $\Psi$ during $I_i$, the size of $\Psi$ decreases by one when $u$ is removed, which is contradictory to the rule of updating $\Psi$.
\end{enumerate}

Since the number of misaligned points is even, at any moment during $\bigcup_{i=1}^n I_i$ when $v$ is misaligned, there must be a misaligned point outside $\Psi\bigcup\{v\}$. By the above observations and the definition of alignment status, for any $1\le i\le n$, $\OPT$ must have a request $\rho'_i$ that is pending throughout $I_i$. Let $u_i=\Location(\rho'_i)$ for any $1\le i\le n$.

Since each $\rho'_i$ is pending throughout $I_i$ and $f'$ is increasing, $$C_{time}(u_i,I_i,\OPT)\ge\int_{I_i}f'(t-t(\rho'_i))dt\ge \int_{I_i}f'(t-a_i)dt=f(b_i-a_i).$$

As a result,
\begin{align*}
round\_cost_{time}(\pi,\OPT)&\ge \sum_{i=1}^n C_{time}(u_i,I_i,\OPT)\\
			                    &\ge \sum_{i=1}^n f(b_i-a_i)\\
			                    &\ge f(f^{-1}(2\Distance)-f^{-1}(\Distance)),
\end{align*}
where the last inequality follows from Lemma \ref{lemma:alignedlowerbound}. \qed
\end{proof}

It is time to prove the following key lemma, stating that in every good complete round of $\ALG$, the cost of the optimum offline algorithm $\OPT$ is not small.

\begin{lemma}\label{lemma: at most n exchanges}
In any good complete round $\Round$, one of the following three claims must be true:
\begin{itemize}
\item $round\_cost_{time}(\pi,\OPT)\ge f(f^{-1}(2\Distance)-f^{-1}(\Distance))$,
\item $round\_cost_{space}(\pi,\OPT)\ge \Distance$.
\end{itemize}
\end{lemma}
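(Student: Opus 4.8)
The plan is to argue by contradiction, using the structure of a good complete round together with the pigeonhole principle on the $2k$ external matches. Suppose that in the good complete round $\Round$ both $round\_cost_{time}(\pi,\OPT) < f(f^{-1}(2\Distance)-f^{-1}(\Distance))$ and $round\_cost_{space}(\pi,\OPT) < \Distance$. The first step is to extract a clean combinatorial picture of what the external matches of $\Round$ look like under these assumptions. By Lemma~\ref{lemma:no misaligned points in Psi}, no misaligned point ever enters $\Psi$ during $\Round$, so every point in $\Psi$ is aligned throughout the round. By Lemma~\ref{lemma:phase included in round}, no $v\notin\Psi$ initiates a native match with a partner in $\Psi$; and by Lemma~\ref{lemma:aligned initiation}, no aligned point initiates a good native match with $\sigma(\Match)\ge\Distance$ (which, by the rules of Operation~\ref{externalmatch}, is the case for every external match). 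Combining these, the initiating point of every good native match of $\Round$ must be misaligned and must lie \emph{outside} $\Psi$.

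The second step is to count how many times a point can be added to $\Psi$, and to bound the number of non-native matches. Each external match initiated by a point $x$ either leaves $\Psi$ unchanged or inserts $x$ into $\Psi$ (replacing the other endpoint if it was already there); since all of $\Psi$'s points stay aligned and the initiator of a good native match is misaligned-and-outside-$\Psi$, each such initiation adds a \emph{new} point to $\Psi$ and removes none, so there can be at most $k$ good native matches before $\Psi$ would have to exceed $V$. Separately, a match of $\Round$ that is \emph{not} native has its phase straddling the round boundary, i.e. its initiating point had $z_x\ne 0$ at the start of $\Round$; since the timers reset only via matches (and $\Psi$ resets to $\emptyset$ at the round start), only the points active at the round's start can carry such a straddling phase, giving at most $k$ non-native matches. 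Also, every match that is not \emph{good} has its initiator's alignment status change during its phase — but alignment changes are driven either by an external match of $\OPT$ at that point (each contributing $\Distance/2$ to $round\_cost_{space}$, so fewer than two such events total under our assumption) or by an external match of $\ALG$, and the latter feeds back into the counting above.

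The third step is to assemble the arithmetic. Under the contradiction hypothesis the round has at most (roughly) $k$ good native matches $+$ $O(k)$ matches that are either non-native or not good — which I must show is strictly fewer than $2k$, contradicting completeness of $\Round$. The delicate point is bookkeeping the matches that are simultaneously non-native and not good, and making sure the bounds "$\le k$ good native matches'' and "$\le O(1)$ bad matches, $\le k$ non-native matches'' do not double-count; I expect the right statement is that the number of external matches in $\Round$ is at most $k$ (native+good) $+ k$ (non-native) $+ 2$ (alignment flips from $\OPT$), which is $< 2k$ for $k$ large, or one argues more carefully that non-native matches are themselves good so the union is bounded by $k + k - 1$. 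A small edge case is when $v'$ was already in $\Psi$ and gets replaced by $x$: this does not increase $|\Psi|$, so it must be counted against the replaced-point budget rather than the growth budget.

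The main obstacle is precisely this counting: showing that once the two cost quantities are small, the round cannot accumulate $2k$ external matches. Lemmas~\ref{lemma:aligned initiation}--\ref{lemma:phase included in round} do the analytic heavy lifting (forcing every "cheap'' initiator to be misaligned-and-outside-$\Psi$), so what remains is a purely combinatorial invariant argument on $\Psi$: $\Psi$ is monotonically growing within a round except for in-place replacements, its members are always aligned, and every misaligned-outside-$\Psi$ initiation strictly enlarges the aligned set $\Psi$. I would formalize this as an invariant maintained across Operation~\ref{externalmatch} applications and then read off the $2k$ bound. If the clean "$|\Psi|\le k$ hence $\le k$ such matches'' bound does not immediately close the gap to $2k$, the fallback is to split each match of $\Round$ into the good-native ones (bounded via $\Psi$) and the rest (bounded via the two cost budgets being small, which caps $\OPT$'s alignment-changing activity), and verify the two bounds sum to less than $2k$.
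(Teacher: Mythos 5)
Your strategy is essentially the paper's argument run in contrapositive, using the same three ingredients in the same roles: under the assumption that both round costs are small, Lemma~\ref{lemma:aligned initiation}, Lemma~\ref{lemma:no misaligned points in Psi} and Lemma~\ref{lemma:phase included in round} force every native match to be initiated by a misaligned point outside $\Psi$ with partner also outside $\Psi$, hence to grow $\Psi$; and your observation that a point can initiate at most one non-native match per round (its timer is reset inside the round after its first match) is exactly the fact the paper uses when it notes that the \emph{second} match initiated by the same point is native. The paper packages the count as a pigeonhole (at least $k+1$ of the $2k$ matches do not increase $|\Psi|$, so some point initiates two of them, and its second one is native, at which point the three lemmas yield the cost bound), but the underlying counting is the same.

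The genuine gap is in your third step, which you leave unresolved. First, the whole ``not good'' bookkeeping is vacuous: by definition a good round is one in which \emph{all} external matches are good, so there are no bad matches to budget for, and the proposed ``$k+k+2<2k$ for $k$ large'' closing is arithmetically false ($2k+2$ is never $<2k$). Second, the bound ``at most $k$ good native matches'' is too weak: together with ``at most $k$ non-native matches'' it gives only $\le 2k$, which does not contradict completeness. The sharpening you gesture at but do not justify is the one that closes the argument: under your hypothesis a native match must have \emph{both} endpoints outside $\Psi$ (the partner-in-$\Psi$ case is excluded by Lemma~\ref{lemma:phase included in round}), which is only possible while $|\Psi|\le k-2$; since each such match increases $|\Psi|$ by one and no match ever decreases $|\Psi|$ within a round, there are at most $k-1$ native matches, hence at most $(k-1)+k=2k-1<2k$ external matches in total, contradicting completeness. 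You should also state explicitly that $f\bigl(f^{-1}(2\Distance)-f^{-1}(\Distance)\bigr)\le\Distance$ (superadditivity of a convex $f$ with $f(0)=0$), since otherwise the lower bound of $\Distance$ delivered by Lemmas~\ref{lemma:aligned initiation} and~\ref{lemma:no misaligned points in Psi} does not contradict your hypothesis on $round\_cost_{time}$. With these two repairs your proof is complete and equivalent to the paper's.
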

\begin{proof}
Let $\mathfrak{M}$ be the set of external matches $\ALG$ outputs during $\Round$. By definition, $|\mathfrak{M}|=2k$. Let $\mathfrak{M}'=\{\Match\in \mathfrak{M}: \Match \textrm{ causes }|\Psi| \textrm{ to increase by one}\}$ and $\mathfrak{M}''=\mathfrak{M}\setminus \mathfrak{M}'$. Since any $\Match\in \mathfrak{M}''$ does not change $|\Psi|$ and $|\Psi|\le k-1$, we have $|\mathfrak{M}'|\le k-1$, which in turn implies $|\mathfrak{M}''|\ge k+1$. There must be a point $v\in V$ which initiates at least two external matches in $\mathfrak{M}''$. Let $\Match\in \mathfrak{M}''$ be the second external match initiated by $v$. Obviously, $\Match$ is a native match of $\pi$. Now we proceed case by case. 

\textbf{Case 1}: $v\in \Psi$ during $\Phase(\Match)$. If $v$ is aligned in $\Phase(\Match)$, by Lemma \ref{lemma:aligned initiation}, we have $round\_cost_{time}(\pi,\OPT) \ge \Distance$ . Otherwise, either $round\_cost_{time}(\pi,\OPT) \ge \Distance$ or $round\_cost_{space}(\pi,\OPT) \ge \Distance$ by Lemma \ref{lemma:no misaligned points in Psi}.

\textbf{Case 2}: $v\notin \Psi$ during $\Phase(\Match)$. Assume $\Match=\langle\rho,\rho'\rangle$ and $v=\Location(\rho), u=\Location(\rho')$. Since $\Match\in \mathfrak{M}''$, it hold that $u\in\Psi$ when $\Match$ occurs. Applying Lemma \ref{lemma:phase included in round}, we finish the proof. \qed
\end{proof}

Up to now, we have focused on good rounds. The next lemma indicates that the cost of $\OPT$ in bad rounds can be \textit{ignored} in some sense. Here, a match/round is called bad if it is not good.

\begin{lemma}\label{lemma: bad round's bound}
The number of bad rounds of $\ALG$ is at most twice the number of external matches of $\OPT$. 
\end{lemma}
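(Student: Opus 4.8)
\textbf{Proof plan for Lemma \ref{lemma: bad round's bound}.}
The plan is to show that a bad round can be ``blamed'' on an external match of $\OPT$, with each such match blamed at most twice. Recall that a round is bad exactly when it contains a bad external match $\Match$, i.e.\ one whose initiating point $l(\rho)$ changes alignment status somewhere during its phase $\Phase(\Match)$. The alignment status of a point $v$ can change for only two reasons: either $\ALG$ performs an external match touching $v$ (changing the parity of the number of $\ALG$-pending requests at $v$), or $\OPT$ performs an external match touching $v$ (changing the parity on the $\OPT$ side). So I would first classify each bad round by the ``cause'' of the first status change that witnesses its badness.

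First I would dispose of the case where the witnessing change is caused by an external match of $\OPT$: then that $\OPT$-match can be charged directly, and I just need to argue each $\OPT$-match is charged by at most a bounded number of rounds this way. Since a round is a contiguous time interval and the relevant status change occurs at a well-defined instant inside the round's phase, an $\OPT$-match at a single point and time lies in at most one round's time span; being slightly careful, it can affect the alignment of at most the one point it touches on the $\OPT$ side — actually an $\OPT$-external match touches two points, so it can be the cause for at most two bad rounds if those two points' phases straddle different rounds, but since both phases overlap the same instant they lie in the same round, giving the ``twice'' slack room. The second, harder case is when the witnessing status change is caused by an external match of $\ALG$ itself. Here I would argue that such an $\ALG$-induced status change cannot, on its own, make the round bad: the phase $\Phase(\Match)$ of the bad match starts at the last time $z_{l(\rho)}=0$, and $z_{l(\rho)}$ is reset to $0$ precisely when an external match of $\ALG$ touches $l(\rho)$; so no $\ALG$-external match touches $l(\rho)$ strictly inside $\Phase(\Match)$ except possibly the initial reset, which is the phase endpoint, not an interior change. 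Hence an interior alignment change of $l(\rho)$ during $\Phase(\Match)$ must come from the $\OPT$ side, which reduces Case 2 to Case 1 and closes the argument.

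The main obstacle I expect is the bookkeeping in the charging map: making sure that the phase intervals of distinct bad rounds, together with the instants at which the blamed $\OPT$-matches occur, do not let a single $\OPT$-match be charged more than twice. The subtlety is that phases can extend backward across a round boundary (a match's phase begins at the last zero of its timer, which may predate the round's start if the round is not native for that match), so I would need to either restrict attention to native bad matches in each bad round, or show that any bad round contains a bad \emph{native} match, so that the witnessing $\OPT$-match falls inside the round's own time interval and the two-points-per-$\OPT$-match accounting gives the factor $2$. I would also double-check the degenerate possibility that $l(\rho)$ is misaligned at the very start of $\Phase(\Match)$ versus becoming misaligned strictly inside it — the definition of a good match forbids any change throughout the phase, so ``bad'' genuinely forces an interior transition, which is exactly what lets me invoke an $\OPT$-match as the culprit.
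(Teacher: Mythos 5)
Your plan is correct and takes essentially the same route as the paper's (much terser) proof: the key point is that the only event which can flip the alignment status of the initiating point strictly inside a phase is an external match of $\OPT$ — your timer-reset argument for ruling out $\ALG$'s own external matches is exactly the justification the paper leaves implicit — and since each $\OPT$-external match touches at most two points it can create at most two bad matches, hence at most two bad rounds. Your worries about native matches and about the two phases lying in the same round are unnecessary: each bad match belongs to exactly one round (the one containing its matching time), and phases of matches initiated at the same point are pairwise disjoint, so each $\OPT$-match is charged at most once per touched point and the factor $2$ follows directly.
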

\begin{proof}
The only event that causes a match of $\ALG$ to be bad is an external match of $\OPT$. 
An external match of $\OPT$ changes the alignment status of at most two points, hence leading to at most two bad  match of $\ALG$, which in turn incurs at most two bad rounds. \qed
\end{proof}

For any $v\in V$ and $t\in \mathbb{R}^+$, define $z_{v,t}$ to be the value of $z_v$ at time $t$. Then we begin to handle the cost of $\OPT$ when $z_{v,t}$ is too big, namely, on the time set $\Gamma=\{t:  \mathcal{A} \textrm{ has a pending request } \rho \textrm{ at time } t\textrm{ and } z_{\Location(\rho),t}>2\Distance\}.$ More notation is needed.

For any $v\in V$ and $t\in \mathbb{R}^+$, we say that $t$ is critical time of $v$ if $t$ is the time when the the last match involving $v$ occurs or when an external match involving $v$ occurs. Let $\Lambda=\{(v,t)\in V\times \mathbb{R}^+: t \textrm{ is critical time of } v\}$. For any $(v,t)\in \Lambda$, define $\Gamma_{v,t}=\emptyset$ be the emptyset if $z_{v,t}<2\Distance$, otherwise $\Gamma_{v,t}=\{t'<t: v\textrm{ has a pending request at time }t' \textrm{ and } z_{v,s}>2\Distance \textrm{ for }t'\le s\le t\}$.

We have the following easy observations:
\begin{itemize}
\item All the $\Gamma_{v,t}$'s are pairwise disjoint and $\Gamma=\bigcup_{(v,t)\in \Lambda}\Gamma_{v,t}$.
\item The size of $\Lambda$ is at most $k$ plus twice the number of external matches of $\ALG$.
\item Since the timers $z_v$ are reset to $0$ if and only external matches involving $v$ occurs, $\tCost_{\A}(R)=\sum_{(v,t)\in \Lambda} z_{v,t}$.
\end{itemize}

For any $x\in \mathbb{R}^+$, define its \emph{truncated value} with respect to $y\ge 0$ to be $$trun(x;y)=
\begin{cases}
0& \textrm{if } x\le y\\
f(f^{-1}(x)-f^{-1}(y))& \textrm{otherwise}
\end{cases}.$$ 

Now we derive a lower bound of the time cost of $\OPT$ on every $\Gamma_{v,t}$. 

\begin{lemma}\label{phasecostlowerbound}
For any $(v,s)\in \Lambda$, $$\sum_{u\in V}C_{time}(u,\Gamma_{v,s},\OPT)\ge trun(z_{v,s};2\Distance).$$ 
\end{lemma}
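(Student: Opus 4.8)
The plan is to show that whenever $\ALG$ keeps a pending request at $v$ with $z_v > 2\Distance$, the offline algorithm $\OPT$ must also keep a request pending at $v$, and then to invoke Lemma \ref{lemma:alignedlowerbound} to convert that fact into the desired truncated-value lower bound. First, fix $(v,s)\in\Lambda$; if $z_{v,s} < 2\Distance$ then $\Gamma_{v,s}=\emptyset$ and $trun(z_{v,s};2\Distance)=0$, so the inequality holds trivially. Assume therefore $z_{v,s}\ge 2\Distance$. Let $t^-<s$ be the earliest time in the current "timer epoch" at which $z_v$ first reaches $2\Distance$ (this is well defined since $z_v$ is reset to $0$ exactly at matches involving $v$ and increases continuously in between), so that $z_{v,t^-}=2\Distance$ and $z_v>2\Distance$ on all of $(t^-,s)$ while $v$ has pending requests; by definition $\Gamma_{v,s}=(t^-,s)\cap\{t: v \text{ has a pending request}\}$, and in fact $v$ has a pending request throughout $(t^-,s]$ because $z_v$ only grows when there is an active request at $v$.

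The key structural claim is: \emph{for every $t'\in\Gamma_{v,s}$, $\OPT$ has a request at $v$ that is pending at time $t'$.} To see this, note that once $z_v>2\Distance$, Operation \ref{externalmatch}(b) fires whenever there is \emph{any} other active request in the metric space, regardless of $\Psi$; so if $v$ has a pending request $\rho$ at time $t'$ and no match is made, it must be that $v$ is the only point with an active request at time $t'$ — in particular $\OPT$ cannot have matched $\rho$ either internally or externally by time $t'$, and moreover no request located off $v$ is available for $\OPT$ to pair $\rho$ with. A cleaner way to argue: if $\OPT$ had \emph{no} pending request at $v$ at time $t'$, then the parity of pending requests at $v$ differs between $\ALG$ (which has $\ge 1$) and $\OPT$ only if $\ALG$'s count is odd; but since $\ALG$ makes internal matches exhaustively (Operation \ref{line2}), $\ALG$ has exactly one pending request at $v$, so $v$ is misaligned. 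A misaligned $v$ forces, by parity, another misaligned point $u\ne v$, which has an $\ALG$-pending request; but then Operation \ref{externalmatch}(b) would pair the request at $v$ with the one at $u$, a contradiction to $\rho$ being pending. Hence $\OPT$ has a pending request at $v$ at every $t'\in\Gamma_{v,s}$, and since $\Gamma_{v,s}$ is an interval-minus-finitely-many-points contained in $(t^-,s]$, the $\OPT$-requests at $v$ covering it form a chain $\rho'_1,\dots,\rho'_m$ of successive pending requests.

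With this in hand, apply Lemma \ref{lemma:alignedlowerbound} (or rather its integral core): the $\ALG$-side requests at $v$ during $(t^-,s]$ are successive pending requests $\rho_1,\dots,\rho_n$, $z_v$ equals $2\Distance$ at a time in the first of their intervals and equals $z_{v,s}$ at a time in the last, so $\sum_i C_{time}(\rho_i,\cdot,\OPT)$-type reasoning via $\OPT$'s covering requests gives $\sum_{u}C_{time}(u,\Gamma_{v,s},\OPT)\ge \sum_i f(b_i-a_i)\ge f(f^{-1}(z_{v,s})-f^{-1}(2\Distance)) = trun(z_{v,s};2\Distance)$, exactly as in the proof of Lemma \ref{lemma:phase included in round} but with thresholds $2\Distance$ and $z_{v,s}$ in place of $\Distance$ and $2\Distance$. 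The main obstacle I anticipate is the bookkeeping in the structural claim: one must be careful that "$\OPT$ has a pending request at $v$ throughout $\Gamma_{v,s}$" genuinely yields a single monotone chain of $\OPT$-requests whose pending intervals cover $\Gamma_{v,s}$ and respect the hypotheses of Lemma \ref{lemma:alignedlowerbound} (in particular $t(\rho'_i)\le$ the left endpoint of the relevant $\ALG$-interval and $T^*(\rho'_i)\ge$ its right endpoint), rather than a fragmented collection; handling the endpoints where timers are reset, and the fact that $\Gamma_{v,s}$ omits the instants when $v$ has no pending request, is where the care is needed. The use of Operation \ref{externalmatch}(b) being $\Psi$-independent is what makes the parity argument go through cleanly and is the crucial point distinguishing this lemma from the good-round analysis.
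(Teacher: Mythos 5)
There is a genuine gap: your central structural claim --- that for every $t'\in\Gamma_{v,s}$ the algorithm $\OPT$ has a request pending \emph{at $v$ itself} --- is false, and the parity argument you give for it draws the wrong conclusion. If $\OPT$ has no pending request at $v$ while $\ALG$ has exactly one there, then indeed $v$ is misaligned and there must be another misaligned point $u\neq v$; but misalignment at $u$ does \emph{not} mean $u$ hosts an $\ALG$-pending request. In the situation at hand the opposite holds: since $z_v>2\Distance$, Operation \ref{externalmatch}(b) would fire against any $\ALG$-pending request off $v$, so $\ALG$ has none, and misalignment at $u$ can only mean that $\OPT$ has a pending request at $u$. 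Thus no contradiction arises, and in fact $\OPT$ may well have matched all requests at $v$ externally long ago (e.g.\ two early requests at $u$ and one at $v$: $\ALG$ matches the pair at $u$ internally and lets $v$'s request age past $2\Distance$, while $\OPT$ pairs $v$'s request with one of the $u$-requests and leaves the other $u$-request pending). Consequently the ``chain of $\OPT$-requests at $v$'' to which you want to apply Lemma \ref{lemma:alignedlowerbound} need not exist. The paper's proof claims only what is actually available --- that during each maximal interval $I_i=(a_i,b_i]$ of $\Gamma_{v,s}$ an odd number of requests have arrived and none arrive inside $I_i$, so $\OPT$ has \emph{some} request $\rho_i'$ at \emph{some} point $u_i$ (not necessarily $v$) pending throughout $I_i$ with $t(\rho_i')\le a_i$ --- and this suffices precisely because the left-hand side of the lemma sums $C_{time}(u,\Gamma_{v,s},\OPT)$ over all $u\in V$; the bound $\sum_i f(b_i-a_i)\ge trun(z_{v,s};2\Distance)$ then follows from Lemma \ref{lemma:alignedlowerbound} exactly as you intend, since that step concerns only the $\ALG$-side timer at $v$.

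A secondary error compounds the first: you assert that $v$ has a pending request throughout $(t^-,s]$. This is not so, because internal matches at $v$ (Operation \ref{line2}) do not reset $z_v$: a newly arriving request at $v$ is matched internally, after which $v$ may have no pending request for a while even though $z_v>2\Distance$ persists. Hence $\Gamma_{v,s}$ is in general a union of several intervals $I_1,\dots,I_n$ carried by different requests $\rho_1,\dots,\rho_n$ at $v$, and the witnessing $\OPT$-request (and its location $u_i$) may change from one interval to the next; the argument must be run interval by interval, as the paper does. If you replace your structural claim by ``$\OPT$ has some pending request, located anywhere, throughout each $I_i$'' (proved by the odd-parity count of arrived requests, using that $\ALG$'s only pending request is at $v$ because \ref{externalmatch}(b) is $\Psi$-independent) and keep the rest of your computation, the proof becomes the paper's.
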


\begin{proof}
Basically, the proof is similar to that of Lemma \ref{lemma:phase included in round}. It is enough to consider the case where $\Gamma_{v,s}\neq \emptyset$.

Suppose that $\Gamma_{v,s}$ consists of disjoint intervals $I_i=(a_i,b_i]$ for $1\le i\le n$, and $b_i< a_{i+1}$ for $1\le i< n$. Then there are pending requests $\rho_1,\cdots,\rho_n$ at point $v$ such that 
\begin{itemize}
\item $T(\rho_i)=b_{i}$ for $1\le i\le n$, $t(\rho_1)\le a_1$, $t(\rho_i)=a_i$ for $1< i\le n$, 
\item $\sum_{i=1}^n c_i=z_{v,s}-2\Distance$, where $c_i=\int_{I_i}f'(t-t(\rho_i))dt$ for $1\le i\le n$.
\end{itemize}

At any time $t\in I_i$, $\ALG$ has no pending requests at points other than $v$, meaning that totally an odd number of requests have arrived by time $t$. Since a match consumes two requests, $\OPT$ must also have pending requests throughout each time interval $I_i$. Furthermore, note that no requests arrive during $I_i$. Hence, for each $1\le i\le n$, $\OPT$ has a request $\rho'_i$ at some $u_i$ that is pending throughout $I_i$. Considering that $f'$ is increasing, we have 
$$C_{time}(u_i,I_i,\OPT)\ge\int_{I_i}f'(t-t(\rho'_i))dt\ge \int_{I_i}f'(t-a_i)dt=f(b_i-a_i).$$

Therefore, 
\begin{align*}
\sum_{u\in V}C_{time}(u,\Gamma_{v,j},\OPT)&\ge \sum_{i=1}^n C_{time}(u_i,I_i,\OPT)\\
									 &\ge \sum_{i=1}^n f(b_i-a_i)\\
									 &\ge trun(z_{v,s};2\Distance). \quad\textrm{(by Lemma \ref{lemma:alignedlowerbound})}
\end{align*}
The lemma thus holds. \qed
\end{proof}

We are ready to obtain a lower bound of the time cost of $\OPT$.

\begin{lemma}\label{totalcostlowerbound}
$\tCost_{\OPT}(R)\ge \sum_{\pi\in\Pi}round\_cost_{time}(\pi,\OPT)+\sum_{(v,s)\in\Lambda} trun(z_{v,s};2\Distance)$. 
\end{lemma}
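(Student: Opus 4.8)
The plan is to decompose the time cost of $\OPT$ by splitting each request's waiting interval into the part lying inside $\Gamma$ and the part lying outside $\Gamma$, and then to bound the two contributions separately using the machinery already established. Formally, for every request $\rho$ we have $C_{time}(\rho,\mathbb{R}^+,\OPT) = C_{time}(\rho,\mathbb{R}^+\setminus\Gamma,\OPT)+C_{time}(\rho,\Gamma,\OPT)$, and summing over all requests gives $\tCost_{\OPT}(R) = \sum_{v\in V}C_{time}(v,\mathbb{R}^+\setminus\Gamma,\OPT)+\sum_{v\in V}C_{time}(v,\Gamma,\OPT)$. The first task is to recognize the first sum as $\sum_{\pi\in\Pi}round\_cost_{time}(\pi,\OPT)$; this is essentially a matter of unfolding definitions, since the rounds $\pi\in\Pi$ partition the whole timeline $\mathbb{R}^+$ (each round being a half-open interval $(t_0,t_1]$, with consecutive rounds abutting and the rounds covering all of time up to the last match), and $round\_cost_{time}(\pi,\OPT)=\sum_{v\in V}C_{time}(v,\pi\setminus\Gamma,\OPT)$ by definition. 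So $\sum_{\pi\in\Pi}round\_cost_{time}(\pi,\OPT)=\sum_{v\in V}C_{time}(v,(\bigcup_{\pi}\pi)\setminus\Gamma,\OPT)=\sum_{v\in V}C_{time}(v,\mathbb{R}^+\setminus\Gamma,\OPT)$.

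For the second sum, the plan is to use the first of the ``easy observations'' stated just before Lemma~\ref{phasecostlowerbound}, namely that $\Gamma=\bigsqcup_{(v,t)\in\Lambda}\Gamma_{v,t}$ is a partition of $\Gamma$ into pairwise disjoint pieces indexed by critical-time pairs. Hence $\sum_{u\in V}C_{time}(u,\Gamma,\OPT)=\sum_{(v,s)\in\Lambda}\sum_{u\in V}C_{time}(u,\Gamma_{v,s},\OPT)$, and Lemma~\ref{phasecostlowerbound} bounds each inner sum below by $trun(z_{v,s};2\Distance)$. Adding this to the expression obtained in the previous paragraph yields exactly the claimed inequality $\tCost_{\OPT}(R)\ge\sum_{\pi\in\Pi}round\_cost_{time}(\pi,\OPT)+\sum_{(v,s)\in\Lambda}trun(z_{v,s};2\Distance)$.

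The only subtlety — and the step I would be most careful about — is making sure there is no double counting or undercounting in the decomposition: that the $\Gamma_{v,s}$ really do partition $\Gamma$ (so we are not summing the same portion of some request's waiting time twice on the $\Gamma$ side), and that the union of all rounds really is all of $\mathbb{R}^+$ up to the final match time (so nothing is dropped on the $\mathbb{R}^+\setminus\Gamma$ side). Both facts are asserted in the excerpt — the partition property of $\Gamma$ is the first bullet in the list of ``easy observations,'' and the round structure is fixed by the algorithm description — so the proof is really just: split the integral defining $C_{time}$ at the set $\Gamma$, regroup by rounds on the complement and by critical pairs on $\Gamma$, and invoke Lemma~\ref{phasecostlowerbound}. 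I would also note explicitly that dropping the space cost of $\OPT$ only weakens the bound, so no equality is claimed. I do not expect any genuine obstacle here; this lemma is a bookkeeping step that assembles the earlier per-round and per-phase estimates into a single global lower bound on $\tCost_{\OPT}(R)$.
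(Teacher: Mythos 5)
Your proposal is correct and follows essentially the same route as the paper: split the time cost of $\OPT$ into the parts inside and outside $\Gamma$, identify the outside part with the per-round quantities, decompose $\Gamma$ into the disjoint sets $\Gamma_{v,s}$, and apply Lemma~\ref{phasecostlowerbound} to each piece. The only cosmetic difference is the order of the decomposition (the paper groups by rounds first and then splits off $\Gamma$), and since all terms are nonnegative and the needed direction is an inequality, the coverage subtlety you flag causes no issue either way.
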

\begin{proof}
It is easy to see that
\begin{align*}
\tCost_{\OPT}(R)&\ge \sum_{\pi\in\Pi}\sum_{u\in V}C_{time}(u,\pi,\OPT)\\
&\ge \sum_{\pi\in\Pi}\sum_{u\in V}\left(C_{time}(u,\pi\setminus\Gamma,\OPT)+C_{time}(u,\pi\bigcap\Gamma,\OPT)\right)\\
&\ge \sum_{\pi\in\Pi}round\_cost_{time}(\pi,\OPT)+\sum_{u\in V}C_{time}(u,\Gamma,\OPT)
\end{align*}
Furthermore, 
\begin{align*}
\sum_{u\in V}(C_{time}(u,\Gamma,\OPT)&= \sum_{u\in V}C_{time}(u,\bigcup_{(v,s)\in\Lambda}\Gamma_{v,s},\OPT)\\
&= \sum_{u\in V}\sum_{(v,s)\in\Lambda}C_{time}(u,\Gamma_{v,s},\OPT)\\
&=\sum_{(v,s)\in\Lambda}\sum_{u\in V}C_{time}(u,\Gamma_{v,s},\OPT)\\
&\ge \sum_{(v,s)\in\Lambda}trun(z_{v,s};2\Distance)\qquad\textrm{(by Lemma \ref{phasecostlowerbound})}
\end{align*} \qed
%
%
%
\end{proof}

The following technical lemmas will be needed.

\begin{lemma}\label{ratioupperbound}
Given $c,c_0,c_1,\cdots,c_n\in \mathbb{R}^+$ with $c_i\ge c_0>c$ for any $1\le i\le n$, we have $$\frac{\sum_{j=1}^n (c_j-c)}{\sum_{j=1}^n trun(c_j;c)}\le \frac{c_0-c}{trun(c_0;c)}.$$
\end{lemma}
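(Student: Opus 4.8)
The plan is to reduce the ratio inequality to a single-variable monotonicity fact. Define, for $x > c$, the function $g(x) = \dfrac{x - c}{trun(x;c)} = \dfrac{x-c}{f(f^{-1}(x)-f^{-1}(c))}$. The claim is equivalent to showing that $g$ is nonincreasing on $(c,\infty)$: once we know that, since each $c_j \ge c_0$ we get $g(c_j) \le g(c_0)$, i.e. $c_j - c \le g(c_0)\cdot trun(c_j;c)$ for every $j$; summing over $j$ and dividing by $\sum_j trun(c_j;c) > 0$ yields the stated bound. (One should note in passing that $trun(c_j;c) > 0$ because $c_j > c$ and $f$ is strictly increasing, so the denominators are all positive and the division is legitimate.)

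So the real content is the monotonicity of $g$. First I would substitute $u = f^{-1}(x)$ and $a = f^{-1}(c)$, so that $x = f(u)$, $u > a$, and $g(x) = \dfrac{f(u) - f(a)}{f(u-a)} =: G(u)$. I want to show $G$ is nonincreasing in $u$ on $(a,\infty)$. Differentiating (or, to avoid smoothness worries, comparing $G(u_2)$ and $G(u_1)$ directly for $a < u_1 < u_2$), the inequality $G(u_2) \le G(u_1)$ rearranges to
\begin{equation*}
\bigl(f(u_2) - f(a)\bigr) f(u_1 - a) \le \bigl(f(u_1) - f(a)\bigr) f(u_2 - a).
\end{equation*}
This is a super/sub-additivity-type statement about the convex function $f$ with $f(0)=0$, and it is exactly the kind of fact Lemma \ref{lemma:convexproperty} is built to deliver. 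Indeed, Lemma \ref{lemma:convexproperty} (applied with $h=f$) gives $f(u-a) + c \ge f\bigl(f^{-1}(f(u-a)+c) - f^{-1}(c)\bigr)$ style inequalities; more directly, the needed estimate follows from the fact that for a convex $f$ with $f(0)=0$ the map $s \mapsto f(s)/s$ is nondecreasing, together with $trun(x;c)$ being, as a function of $x$, a "shifted" copy of $f$ whose growth dominates that of $x-c$ near $x=c$ and is dominated by it far away — which is precisely the content of Lemma \ref{lemma:convexproperty} rewritten.

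The main obstacle I anticipate is packaging the convexity argument cleanly: $trun(x;c) = f(f^{-1}(x) - f^{-1}(c))$ is an awkward composite, and the temptation is to differentiate $g$ and wrestle with $f''$ and $(f^{-1})'$, which is messy and needs extra regularity. The cleaner route — and the one I would actually write — is to avoid calculus entirely: set $p = f^{-1}(c_j)-f^{-1}(c)$ and $q = f^{-1}(c_0)-f^{-1}(c)$ (so $p \ge q \ge 0$ since $c_j \ge c_0$), observe $c_j - c = f(f^{-1}(c_j)) - f(f^{-1}(c))$, and use convexity of $f$ in the "chord-slope is monotone" form to bound this difference against $f(p) = trun(c_j;c)$ and $f(q) = trun(c_0;c)$, reproducing the single-step estimate $(c_j-c)/f(p) \le (c_0-c)/f(q)$ that Lemma \ref{lemma:convexproperty} already encapsulates. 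Then sum. I expect the whole proof to be three or four lines once the right substitution is in place.
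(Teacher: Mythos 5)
Your reduction is the same as the paper's: it suffices to show that $g(x)=\frac{x-c}{trun(x;c)}$ is nonincreasing on $(c,\infty)$, and then the sum bound follows term by term. The gap is that you never actually prove this monotonicity, and the tools you point to cannot prove it. Lemma \ref{lemma:convexproperty} says $trun(\xi+\zeta;\eta)\le trun(\xi;\eta)+\zeta$, i.e.\ it bounds the growth of $trun(\cdot\,;c)$ \emph{from above} by slope $1$, whereas the monotonicity of $g$ needs a \emph{lower} bound, namely that $trun(x;c)/(x-c)$ be nondecreasing in $x$; similarly, the chord-slope form of convexity makes the numerator $f(u)-f(a)$ grow at least proportionally to $u-a$, which pushes the ratio the wrong way. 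In fact the statement you try to extract from ``convexity plus $f(0)=0$'' is false in that generality: take $a=1$ and a convex, increasing $f$ with $f(0)=f'(0)=0$ whose derivative equals $s$ on $[0,\delta]$, equals $\delta$ on $[\delta,2]$, and equals $\delta+K(s-2)$ beyond $2$ with $K$ huge; then $\frac{f(u)-f(1)}{f(u-1)}$ is roughly $1$ at $u=2$ but of order $K/\delta$ at $u=3$, so with $c=f(1)$, $c_0=f(2)$, $c_1=f(3)$ the per-term inequality (and hence the lemma itself, at $n=1$) fails. So no argument relying only on generic convexity facts such as Lemma \ref{lemma:convexproperty} or ``$f(s)/s$ is nondecreasing'' can close this step.

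What closes it in the paper is the standing assumption of this subsection that $f(t)=t^\alpha$ with $\alpha>1$: after your substitution $u=f^{-1}(x)$, $y=f^{-1}(c)$, one must show that $g(u)=\frac{u^\alpha-y^\alpha}{(u-y)^\alpha}$ is nonincreasing for $u>y$, and this is a one-line derivative computation, $g'(u)=\alpha\,\frac{y\,(y^{\alpha-1}-u^{\alpha-1})}{(u-y)^{\alpha+1}}\le 0$ (equivalently, $u^{\alpha-1}(u-y)\le u^\alpha-y^\alpha$ since $u>y$ and $\alpha>1$). Your outline is fine up to and including the substitution and the summation step, but the final monotonicity must genuinely use the monomial structure of $f$ (or an additional log-concavity-type hypothesis, as the paper's conclusion indicates when discussing generalizations), not convexity alone.
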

\begin{proof}
Recall that $f(t)=t^\alpha$ and $\alpha>1$. It suffices to prove that $\frac{a-b}{(\sqrt[\alpha]{a}-\sqrt[\alpha]{b})^{\alpha}}$ decreases with $a$ when $a>b$. This is equivalent to showing $g(x)= \frac{x^{\alpha}-y^{\alpha}}{(x-y)^{\alpha}}$ decrease with $x$ when $x>y$. The claim holds since $g'(x) = \alpha \cdot \frac{y(y^{\alpha-1}-x^{\alpha-1})}{(x-y)^{\alpha+1}} \le 0$. \qed
\end{proof}
\begin{lemma}\label{singleround}
If $\ALG$ has only one round on input instance $R$, $\frac{\Cost_{\A}(R)}{\Cost_{\adv{\A}}(R)}=O(k)$.
\end{lemma}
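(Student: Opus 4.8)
The plan is to bound both the cost of $\ALG$ and the cost of $\OPT$ when there is exactly one round, and then compare. Since there is only one round $\pi$, the round is not complete (otherwise a new round would begin), so $\ALG$ produces fewer than $2k$ external matches; call this set $\mathfrak{M}$ with $|\mathfrak{M}| < 2k$. First I would write $\Cost_{\A}(R) = \sCost_{\A}(R) + \tCost_{\A}(R)$. The space cost is at most $|\mathfrak{M}|\cdot\Distance < 2k\Distance$, since internal matches cost nothing. For the time cost, I would use the observation already recorded in the excerpt that $\tCost_{\A}(R) = \sum_{(v,t)\in\Lambda} z_{v,t}$, where $\Lambda$ has size at most $k$ plus twice the number of external matches of $\ALG$, hence $|\Lambda| = O(k)$. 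So the whole task reduces to controlling $\sum_{(v,t)\in\Lambda} z_{v,t}$ from above by $O(k)\cdot\Cost_{\OPT}(R) + O(k)$.

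For that, I would split each timer value as $z_{v,t} = \min\{z_{v,t}, 2\Distance\} + (z_{v,t} - 2\Distance)^+$. The first part contributes at most $2\Distance\cdot|\Lambda| = O(k\Distance)$ overall, which is fine because $\Cost_{\OPT}(R) \ge \Distance$ whenever there is at least one external match somewhere (and if there is no external match in the whole instance, a separate trivial argument handles it, or one absorbs it into the additive constant $b$ that is allowed to depend on $\Space$). For the excess part $\sum_{(v,t)\in\Lambda}(z_{v,t}-2\Distance)^+$, I would invoke Lemma \ref{totalcostlowerbound}, which gives $\tCost_{\OPT}(R) \ge \sum_{(v,s)\in\Lambda} trun(z_{v,s};2\Distance)$, together with Lemma \ref{ratioupperbound} applied with $c = 2\Distance$ and $c_0 = $ some fixed threshold (say the smallest timer value exceeding $2\Distance$ that actually occurs, or simply a constant multiple of $\Distance$ coming from the mechanics of the algorithm). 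Lemma \ref{ratioupperbound} tells us precisely that $\frac{\sum (z_{v,s} - 2\Distance)}{\sum trun(z_{v,s};2\Distance)} \le \frac{c_0 - 2\Distance}{trun(c_0;2\Distance)}$, a constant depending only on $\alpha$; hence $\sum_{(v,s)\in\Lambda}(z_{v,s}-2\Distance)^+ = O(1)\cdot\tCost_{\OPT}(R)$.

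Separately, I would bound $\sum_{\pi\in\Pi} round\_cost_{time}(\pi,\OPT)$ — which here is just the single term for $\pi$ — but actually the cleaner route for the single-round case is to lower-bound $\Cost_{\OPT}(R)$ directly using Lemma \ref{lemma: at most n exchanges} if the round were complete; since it is not, I instead note that the number of external matches of $\ALG$ in this one round is less than $2k$, so the "$O(k)$ external matches, each contributing bounded timer value plus $\Distance$ space" accounting gives $\Cost_{\A}(R) = O(k\Distance) + O(1)\cdot\Cost_{\OPT}(R)$. Dividing by $\Cost_{\OPT}(R) \ge \Distance$ (again treating the degenerate zero-cost case via the additive constant) yields $\Cost_{\A}(R)/\Cost_{\OPT}(R) = O(k)$, as desired.

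The main obstacle I anticipate is the bookkeeping around timers that are small but not negligible: one must argue that a timer value $z_{v,t}$ in the range, say, $[\Distance, 2\Distance)$ (where an external match can be initiated under rule (a)) really is charged correctly and does not accumulate to something larger than $O(k\Distance)$ across all of $\Lambda$, and that the degenerate case $\Cost_{\OPT}(R)$ near zero (few requests, all co-located) is safely absorbed into the additive constant $b$ allowed in the definition of competitiveness. A secondary subtlety is pinning down the constant $c_0$ in the application of Lemma \ref{ratioupperbound} so that $\frac{c_0-2\Distance}{trun(c_0;2\Distance)}$ is genuinely a constant (independent of $\Distance$), which follows by scaling since $f(t)=t^\alpha$ is homogeneous, but should be stated carefully.
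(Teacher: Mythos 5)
Your decomposition is the same as the paper's (bound $\Cost_{\A}(R)$ by $O(k\Distance)$ plus excess timer values, lower-bound $\Cost_{\adv{\A}}(R)$ by $\Distance$ plus $\sum trun(z_{v,s};2\Distance)$ via Lemma \ref{totalcostlowerbound}, then divide using Lemma \ref{ratioupperbound}), but the step you treat as obvious --- ``$\Cost_{\adv{\A}}(R)\ge\Distance$ whenever there is at least one external match somewhere'' --- is precisely the nontrivial part in the case where $\ALG$ makes an external match but $\OPT$ makes none. There $\sCost_{\adv{\A}}(R)=0$, so you must show $\OPT$ pays \emph{time} cost $\Omega(\Distance)$. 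The paper does this by noting that the first external match of $\ALG$ is a good native match initiated by an aligned point (before it, neither algorithm has made an external match, and only external matches change alignment), and then invoking Lemma \ref{lemma:aligned initiation} to get $round\_cost_{time}(\pi,\OPT)\ge\min\{\sigma(\Match),2\Distance\}\ge\Distance$, since the initiating timer is at least $\Distance$. Your proposal never supplies this argument, and you cannot push the case into the additive constant $b$: the lemma is stated as a pure ratio. The only case you may dismiss trivially is when neither algorithm makes any external match, where both produce the same matching and the ratio is $1$.

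The second gap is the application of Lemma \ref{ratioupperbound} with $c=2\Distance$ to \emph{all} excess terms $(z_{v,s}-2\Distance)^+$. The lemma requires all relevant $z_{v,s}\ge c_0$ for some $c_0$ bounded away from $2\Distance$, and for $f(t)=t^\alpha$ with $\alpha>1$ the bound $\frac{c_0-2\Distance}{trun(c_0;2\Distance)}$ blows up as $c_0\to 2\Distance^+$ (the denominator vanishes to order $\alpha$). Neither of your candidate choices of $c_0$ works: the smallest timer value above $2\Distance$ that actually occurs can be arbitrarily close to $2\Distance$ (rule (b) fires as soon as $z_x\ge 2\Distance$, and the timer of the non-initiating endpoint is reset at an arbitrary value), and the algorithm's mechanics give no lower bound of the form ``constant multiple of $\Distance$ strictly above $2\Distance$''; homogeneity only says the ratio depends on $c_0/\Distance$, it does not make it bounded. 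The paper's repair is to restrict the excess sum to $\Lambda'=\{(v,s)\in\Lambda: z_{v,s}>4\Distance\}$, absorb all timers of value at most $4\Distance$ into the $O(k\Distance)$ term using $|\Lambda|\le 5k$ (hence the $22k\Distance$), and then apply Lemma \ref{ratioupperbound} with $c=2\Distance$, $c_0=4\Distance$, whose bound $(2^{1/\alpha}-1)^{-\alpha}$ is a true constant. With these two repairs your argument becomes the paper's proof; without them, both the lower bound on $\Cost_{\adv{\A}}(R)$ and the constant-factor charge of the excess timers are unjustified.
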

\begin{proof}
Denote the round by $\Round$. We proceed case by case.

\textbf{Case 1}: Both $\ALG$ and $\OPT$ have no external matches. Then they must behave in the same way on $R$. Hence $\Cost_{\A}(R)/\Cost_{\adv{\A}}(R)=1$.

\textbf{Case 2}: Either $\ALG$ or $\OPT$ has an external match. If $\OPT$ has no external matches, the first external match of $\ALG$ must be a good native match, so $round\_cost_{time}(\pi,\OPT)\ge \Distance$ by Lemma \ref{lemma:aligned initiation}. If $\OPT$ has an external match, $\sCost_{\adv{\A}}(R)\ge \Distance$. So, we always have $\sCost_{\adv{\A}}(R)+round\_cost_{time}(\pi,\OPT)\ge \Distance$.

On the one hand, $\A$ has at most $2k$ external matches in a round, so $\Cost_{\A}(R)\le 2k\Distance+\sum_{(v,s)\in\Lambda}z_{v,s}$. Let $\Lambda'=\{(v,s)\in\Lambda: z_{v,s}>4\Distance\}$. Because $|\Phi|\le 5k$, it holds that $\Cost_{\A}(R)\le 22k\Distance+\sum_{(v,s)\in\Lambda'}(z_{v,s}-2\Distance)$.

On the other hand, as to the cost of $\OPT$, we have 
\begin{align*}
\Cost_{\adv{\A}}(R)&=\sCost_{\adv{\A}}(R)+\tCost_{\adv{\A}}(R)\\
			     &\ge \sCost_{\adv{\A}}(R)+round\_cost_{time}(\pi,\OPT)+\sum_{(v,s)\in\Lambda} trun(z_{v,s};2\Distance)\\
			     &\ge \Distance+\sum_{(v,s)\in\Lambda'} trun(z_{v,s};2\Distance),
\end{align*}
where the first inequality follows from Lemma \ref{totalcostlowerbound}.

By Lemma \ref{ratioupperbound}, $\frac{\Cost_{\A}(R)}{\Cost_{\adv{\A}}(R)}\le \frac{22k\Distance+\sum_{(v,s)\in\Lambda'}(z_{v,s}-2\Distance)}{\Distance+\sum_{(v,s)\in\Lambda'} trun(z_{v,s};2\Distance)}=O(k)$. \qed
\end{proof}

Now we are ready to prove the first main result.
\CROfTrivialMetric*

\begin{proof}
Suppose that $\ALG$ has $m$ rounds on the online input instance $R$, namely $|\Pi|=m$. By Lemma \ref{singleround}, we assume that $m>1$ without loss of generality. 

Because $\ALG$ has at most $2mk$ external matches, we have $|\Lambda|\le k+2\times 2mk<5mk$. Let $\Lambda'=\{(v,s)\in\Lambda: z_{v,s}>4\Distance\}$. Then it holds that 
\begin{align*}
\Cost_{\A}(R)&= \sCost_{\A}(R)+\tCost_{\A}(R)\\
			     &\le 2km\Distance+\sum_{(v,s)\in\Lambda}z_{v,s}\\
			     &\le 22km\Distance+\sum_{(v,s)\in\Lambda'}(z_{v,s}-4\Distance)\\
			     &\le 22km\Distance+\sum_{(v,s)\in\Lambda'}(z_{v,s}-2\Distance).
\end{align*}

On the other hand, as for the cost of $\OPT$, we have  by Lemma \ref{totalcostlowerbound} that 
\begin{align*}
\Cost_{\adv{\A}}(R)&=\sCost_{\adv{\A}}(R)+\tCost_{\adv{\A}}(R)\\
			     &\ge \sCost_{\adv{\A}}(R)+\sum_{\Round\in\Pi} round\_cost_{time}(\pi,\OPT)+\sum_{(v,s)\in\Lambda} trun(z_{v,s};2\Distance)\\
			     &=\sum_{\Round\in\Pi}(round\_cost_{space}(\pi,\OPT)+round\_cost_{time}(\pi,\OPT))\\
			     &\quad +\sum_{(v,s)\in\Lambda'} trun(z_{v,s};2\Distance).
\end{align*}
Let $\Pi'$ be the set of good complete rounds and $m'=|\Pi'|$. Let $m''$ be the number of bad rounds. An easy observation is that $m'+m''\ge m-1$. By Lemma \ref{lemma: bad round's bound}, $\OPT$ has at least $\frac{m''}{2}$ external matches, meaning that $\Cost_{\adv{\A}}(R)\ge \sCost_{\adv{\A}}(R)\ge \frac{m''}{2}\Distance$. As a result,
\begin{align*}
2\Cost_{\adv{\A}}(R)&\ge\sum_{\Round\in\Pi}(round\_cost_{space}(\pi,\OPT)+round\_cost_{time}(\pi,\OPT))\\
			     &\quad +\sum_{(v,s)\in\Lambda'} trun(z_{v,s})+\frac{m''}{2}\Distance\\
			     &\ge f(f^{-1}(2\Distance)-f^{-1}(\Distance)) m'+\frac{m''}{2}\Distance+\sum_{(v,s)\in\Lambda'}  trun(z_{v,s};2\Distance)\\
			     &\ge\frac{m-1}{2}(\sqrt[\alpha]{2}-1)^\alpha \Distance+\sum_{(v,s)\in\Lambda'} trun(z_{v,s};2\Distance)\
\end{align*}
where the second inequality is due to Lemma \ref{lemma: at most n exchanges}.

By Lemma \ref{ratioupperbound}, $\frac{\Cost_{\A}(R)}{\Cost_{\adv{\A}}(R)}\le \frac{22km\Distance+\sum_{(v,s)\in\Lambda'}(z_{v,s}-2\Distance)}{\frac{m-1}{4}(\sqrt[\alpha]{2}-1)^\alpha \Distance+\frac{1}{2}\sum_{(v,s)\in\Lambda'} trun(z_{v,s};2\Distance)}=O(k)$. \qed
%
%
%
%

\end{proof}

\section{Lower Bound for Deterministic Algorithms}\label{deterministiclowerbound}

This section is devoted to showing that any deterministic algorithm for the convex-MPMD problem on $k$-point uniform metric space must have competitive ratio $\Omega(k)$, meaning that our algorithm is optimum up to a constant factor.
Let's begin with a convention of notation. Let $f:\mathbb{R}^+\mapsto\mathbb{R}^+$ be a nondecreasing, unbounded, continuous function satisfying $f(0)=f'(0)=0$.  Let $\Space = (V,\Distance)$ be a uniform metric space with $V=\{v_0,v_1,...v_k\}$. 
Suppose that $\ALG$ is an arbitrary deterministic online algorithm for the $f$-MPMD problem. Let $T\in\mathbb{R}^+$ be such that $f(T)=k\Distance$. Arbitrarily fix a real number $\tau>0$ such that $n=\frac{T}{\tau}$ is an even number. 

We construct an instance $R$ of online input to $\ALG$ and show that the competitive ratio of $\ALG$ is at least $\Omega(k)$. The instance $R$ is determined in an online fashion: Roughly speaking, based on the up-to-now behavior of $\ALG$, we decide the time and locations of the following requests so that $\ALG$ has to take many external matches. 

Specifically, $R$ is constructed in $m$ rounds, where $m$ is an arbitrary positive integer. The first round begins at time $T_1=0$. Some requests arrive in the manner as described in the next two paragraphs. At arbitrary time $T_2$ after these requests are all matched, finish the first round and start the second round. Repeat this process until we have finished $m$ rounds. All the requests form the instance $R$.

Now fix $1\le r\le m$, and we describe the requests that arrive during the $r$th round, namely in the interval $[T_r,T_{r+1})$. Define $G_0=(V,\emptyset)$ to be the edgeless graph on $V$, and let $C_0=\{v_0\}, h=1$. At time $T_r$, a request $\rho_{00}$ arrives at $v_0$. The other requests are specified as follows.

For any integers $(h-1)n< j\le hn$ and $1\le i\le k$ such that $v_i\in V\setminus C_{h-1}$, a request $\rho_{ij}$ arrives at point $v_i$ at time $T_r+j\tau$. When $\rho_{i,hn}$ arrives, namely, at time $T_r+hT$, 
construct an undirected graph $G_h$ on $V$. It has an edge between any pair of vertices $v_i\neq v_{i'}$ if and only if by time $T_r+hT$, $\ALG$ has matched one request at $v_i$ and another at $v_{i'}$ both arriving in the period $[T_r, T_r+hT]$. Let $C_h$ be the set of the vertices in the connected component of $G_h$ that contains $v_0$. If $C_{h-1}\neq C_h\neq V$, increase $h$ by 1 and iterate the process described in this paragraph. Otherwise, arbitrarily choose $i$ such that $v_i\in V\setminus C_{h-1}$, let one more request $\rho_{i,hn+1}$ arrive at point $v_i$ at time $T_r+hT+\tau$, and finish the construction of round $r$; denote the final $h$ by $h_r$.

Arbitrarily fix $1\le r\le m$ in the rest of this section. 
Let  $R_r$ be the set of requests that arrive in the first $r$ rounds, and $N_r$ be the number of requests in $R_r\setminus R_{r-1}$, where $R_0=\emptyset$. Let $R=R_m$. It is easy to see four facts:
\begin{description}
\item[Fact 1:] $N_r\le k^2 n+2$.
\item[Fact 2:] $R_r\setminus R_{r-1}$ has exactly one request at $v_0$, and has an odd number of requests at the point where the last request in round $r$ arrives.
\item[Fact 3:] $R_r\setminus R_{r-1}$ has an even number of requests at any other point.
\item[Fact 4:] No match occurs between requests of different rounds.\label{localmatches}
\end{description}

Some lemmas are needed for proving the main result of this section.

\begin{lemma}\label{optupperbound} 
$\Cost_{\adv{\A}}(R_r)\le (\Distance+\frac{k^2 n}{2}f(\tau)+f(\tau))r$. 
\end{lemma}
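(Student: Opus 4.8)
The plan is to bound $\Cost_{\adv{\A}}(R_r)$ from above by exhibiting one concrete offline matching $\mathrm{SOL}$ of $R_r$ and estimating its cost; since $\adv{\A}$ is an optimum offline algorithm, $\Cost_{\adv{\A}}(R_r)\le\Cost_{\mathrm{SOL}}(R_r)$. By Fact 4 there is no loss in taking $\mathrm{SOL}$ to match requests only within a single round, so it suffices to build, for each round $r'$, a matching of $R_{r'}\setminus R_{r'-1}$ of cost at most $\Distance+\frac{k^2n}{2}f(\tau)+f(\tau)$ and then sum over $r'=1,\dots,r$.

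Fix a round $r'$ and let $v_{i^\star}$ be the point at which the last request of round $r'$ arrives. By Fact 2, $R_{r'}\setminus R_{r'-1}$ contains exactly one request at $v_0$, namely $\rho_{00}$ arriving at time $T_{r'}$, and an odd number of requests at $v_{i^\star}$; by Fact 3 every other point receives an even number of requests. Also $v_{i^\star}\ne v_0$, because $v_0\in C_0\subseteq C_{h-1}$ for every $h$, so $v_0$ never lies in $V\setminus C_{h-1}$ and never receives a $\rho_{ij}$. Finally, since $T=n\tau$, the requests landing at any fixed point during round $r'$ arrive at consecutive times spaced exactly $\tau$ apart; in particular $v_{i^\star}$ receives requests at times $T_{r'}+\tau,T_{r'}+2\tau,\dots,T_{r'}+(h_{r'}n+1)\tau$, the first being $\rho_{i^\star,1}$ at time $T_{r'}+\tau$.

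I would then define $\mathrm{SOL}$ on $R_{r'}\setminus R_{r'-1}$ as follows: match $\rho_{00}$ with $\rho_{i^\star,1}$ at time $T_{r'}+\tau$ (an external match of space cost $\Distance$ and, using $f(0)=0$, time cost $f(\tau)+f(0)=f(\tau)$); then at $v_{i^\star}$ an even number of consecutively spaced requests remain, and at every other point $v_i\ne v_0$ an even number of consecutively spaced requests are present, so pair each up greedily, matching each pair at the arrival time of its later request. Every such pair is an internal match of space cost $0$ and time cost $f(\tau)+f(0)=f(\tau)$. With $N_{r'}$ the number of requests of round $r'$, there are $(N_{r'}-2)/2$ internal matches, hence
\begin{align*}
\Cost_{\mathrm{SOL}}(R_{r'}\setminus R_{r'-1})
&=\Distance+f(\tau)+\frac{N_{r'}-2}{2}f(\tau)
=\Distance+\frac{N_{r'}}{2}f(\tau)
\le\Distance+\frac{k^2n}{2}f(\tau)+f(\tau),
\end{align*}
the last inequality by Fact 1. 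Summing over $r'=1,\dots,r$ yields $\Cost_{\adv{\A}}(R_r)\le\Cost_{\mathrm{SOL}}(R_r)\le\bigl(\Distance+\frac{k^2n}{2}f(\tau)+f(\tau)\bigr)r$.

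There is no deep obstacle in this argument; the only point demanding care is the parity bookkeeping: one must verify through Facts 2 and 3 that $v_0$ and $v_{i^\star}$ are the only points of odd occupancy, so that a single external match restores even parity at every point, and that $v_{i^\star}\ne v_0$ so this external match is between distinct points. The hypothesis $f(0)=0$ is used to make the time cost of every match in $\mathrm{SOL}$ equal to exactly $f(\tau)$, since the later request of each pair is matched immediately upon arrival.
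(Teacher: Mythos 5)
Your proposal is correct and follows essentially the same route as the paper: exhibit the explicit offline matching that pairs $\rho_{00}$ externally with the first request at the point receiving the last request, pairs all remaining requests at each point consecutively, and bounds its cost by $\Distance+\frac{N_{r'}}{2}f(\tau)\le\Distance+\frac{k^2n}{2}f(\tau)+f(\tau)$ per round using Facts 1--3. Your version merely spells out the cost accounting and the parity check ($v_{i^\star}\ne v_0$) that the paper leaves implicit; the appeal to Fact 4 is unnecessary (any feasible matching upper-bounds $\Cost_{\adv{\A}}$) but harmless.
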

\begin{proof}
It suffices to show that the cost that $\OPT$ pays for any round is at most $\Distance+\frac{k^2 n}{2}f(\tau)+f(\tau)$. Without loss of generality, we prove this for the first round and assume that the last request of this round is located at $v_k$. By Facts 2 and 3, the requests of this round can be paired up in this way: $\langle\rho_{00},\rho_{k1}\rangle$, $\langle\rho_{ij},\rho_{i,j+1}\rangle$ for $1\le i\le k-1$ and odd number $j\ge 1$, and $\langle\rho_{kj},\rho_{k,j+1}\rangle$ for even numbers $j\ge 2$. Since $\adv{\A}$ is an optimum offline algorithm, its cost is at most the cost of this matching. \qed
\end{proof}

\begin{lemma}\label{alglowerbound}
$\Cost_{\A}(R_r)\ge k\Distance r$.
\end{lemma}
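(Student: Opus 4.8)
The plan is to bound $\Cost_{\A}(R_r)$ round by round. By Fact 4 every match $\ALG$ makes joins two requests of the same round, and by Facts 2 and 3 each round contributes an even number of requests (one at $v_0$, an odd number at the point of its last request, even numbers elsewhere), so $\ALG$ perfectly matches the requests of each round among themselves. Hence $\Cost_{\A}(R_r)=\sum_{q=1}^{r}c_q$, where $c_q$ denotes the total space-plus-time cost of the matches $\ALG$ makes among the requests of round $q$, and it suffices to show $c_q\ge k\Distance$ for every $q$. Fix such a $q$ and let $h=h_q$ be its number of sub-phases. I would split on the number of external matches $\ALG$ makes inside round $q$.

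If this number is at least $k$, then since $\Space$ is $\Distance$-uniform each such match costs $\Distance$ in space, so $c_q\ge k\Distance$ immediately. So assume $\ALG$ makes fewer than $k$ external matches among the requests of round $q$. Every edge of $G_h$ is witnessed by a distinct such external match (the two endpoints of an edge arrive in $[T_q,T_q+hT]$, hence belong to round $q$), so $G_h$ has fewer than $k$ edges; a graph on the $|V|=k+1$ vertices with fewer than $k$ edges is disconnected, so $C_h\neq V$, and therefore the construction of round $q$ must have terminated through the alternative $C_{h-1}=C_h$. Write $C:=C_{h-1}=C_h\subsetneq V$.

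The heart of the argument is then a parity count on the round-$q$ requests located in $C$. A point $v_i\in C\setminus\{v_0\}$ joined the $v_0$-component at some sub-phase $g_i$ with $1\le g_i\le h-1$, hence received exactly $g_i n$ requests, all arriving by time $T_q+g_iT\le T_q+(h-1)T$; together with $\rho_{00}$, the number of round-$q$ requests in $C$ equals $1+n\sum_{v_i\in C\setminus\{v_0\}}g_i$, which is odd since $n$ is even, and every one of them has arrived by time $T_q+(h-1)T$. Next I would show that any such request matched by $\ALG$ by time $T_q+hT$ is matched with another request located in $C$: by Fact 4 the partner belongs to round $q$; it cannot be the extra request $\rho_{i,hn+1}$, which arrives only at $T_q+hT+\tau$; so both requests arrived inside $[T_q,T_q+hT]$, and hence if they were matched externally they would create an edge of $G_h$ incident to $C$, forcing the partner's location into $C_h=C$. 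An odd number of requests cannot be matched among themselves, so some request $\rho^{\dagger}$ in $C$ is still pending at time $T_q+hT$; as it arrived no later than $T_q+(h-1)T$, it waits at least $T$ under $\ALG$, so the time cost $\ALG$ pays for $\rho^{\dagger}$ alone is at least $f(T)=k\Distance$ (using that $f$ is nondecreasing). Thus $c_q\ge k\Distance$ in this case as well, and summing over $q$ gives $\Cost_{\A}(R_r)\ge k\Distance r$.

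The easy parts are the finiteness of the construction (the $v_0$-component grows by at least one vertex at each sub-phase until it is stuck or equals $V$, so $h_q\le k$) and the decomposition of the cost over rounds via Fact 4. I expect the main obstacle to be the second case, and within it the timing bookkeeping: one has to check that the cutoff $T_q+hT$ is late enough that $G_h$ faithfully records every match among $C$-requests — so that being ``stuck'' genuinely traps a nonzero surplus of them — while still being early enough that the trapped request has already accumulated a full delay $T$. Making these two windows line up with the exact definition of $G_h$ and with Fact 4 is the delicate point; everything else is routine.
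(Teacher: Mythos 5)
Your proposal is correct and takes essentially the same route as the paper: the paper splits on the round's termination condition ($C_{h_r}=V$, forcing at least $k$ external matches and hence space cost $k\Distance$, versus $C_{h_r}=C_{h_r-1}$, where the same parity count on the frozen component traps a request that stays pending for at least $T$, giving time cost $f(T)=k\Distance$), which is just the contrapositive of your split on the number of external matches. Your treatment of the stuck case is simply a more detailed rendering of the paper's Case 1, so there is nothing substantively different to flag.
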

\begin{proof}
By Fact 4, it is equivalent to show that the cost that $\ALG$ pays for requests in $R_r\setminus R_{r-1}$ is at least $k\Distance$. We proceed case by case.

\textbf{Case 1:} $C_{h_r}=C_{h_r -1}$.
On the one hand, assume Case 2 in this round does happen. We have three observations:
\begin{itemize}
\item After time $T_r+(h_r-1)T$, no request arrives at any $v\in C_{h_r}$.
\item The total number of requests that have arrived at points in $C_{h_r}$ is an odd number. Hence, there must be a request $\rho$ such that (1) $\Location(\rho)\in C_{h_r}$ and (2) $\ALG$ eventually matches $\rho$ with another request $\rho'$ satisfying $\Location(\rho')\notin C_{h_r}$. 
\item Since no external match occurs between $C_{h_r}$ and $V\setminus C_{h_r}$ in period $(T_r+(h_r-1)T, T_r+hT]$, the request $\rho$ must be pending throughout the period, incurring time cost at least $f(T)=k\Distance$.
\end{itemize}

\textbf{Case 2:} $C_{h_r}=V$. Then $\ALG$ has at least $k$ external matches in this round. 

Altogether, the cost $\ALG$ pays for this round is at least  $k\Distance$. \qed
\end{proof}

\UniformMetricLowerBound*

\begin{proof}
Suppose there are $a=a(k,\Distance)$ and $b=b(k,\Distance)$ such that for any $m\ge 1$, 
$\Cost_{\A}(R)\le a\cdot \Cost_{\adv{\A}}(R)+b.$
Fix $k$ and $\Distance$. Dividing both sides of inequality by $m$ and letting $m$ approach infinity, by Lemmas \ref{optupperbound} and \ref{alglowerbound}, we get $f(n\tau)\le (\Distance+\frac{k^2 n}{2}f(\tau)+f(\tau))a$, which means that $a\ge \frac{f(n\tau)}{\Distance+\frac{k^2 n}{2}f(\tau)+f(\tau)}=\frac{\frac{k\Distance}{2}+\frac{1}{2}f(n\tau)}{\Distance+\frac{k^2 n}{2}f(\tau)+f(\tau)}$.

Let $\tau$ approach zero. One has $\lim_{\tau\rightarrow 0}f(\tau)=0$, and 
\begin{align*}
\lim_{\tau\rightarrow 0}\frac{f(n\tau)}{k^2 nf(\tau)}=\lim_{\tau\rightarrow 0}\frac{1}{k^2}\frac{f(n\tau)}{n\tau}\frac{\tau}{f(\tau)}
 =\lim_{\tau\rightarrow 0}\frac{1}{k^2}\frac{f(T)}{T}\frac{\tau}{f(\tau)}
 =+\infty \quad \textrm{since }f'(0)=0
\end{align*} 
This means $\lim_{\tau\rightarrow 0}k^2 nf(\tau)=0$, since $f(n\tau)=k\Distance$ is a constant when $k$ and $\Distance$ are fixed. As a result, $a=\lim_{\tau\rightarrow 0}a\ge \lim_{\tau\rightarrow 0}\frac{\frac{k\Distance}{2}+\frac{1}{2}f(n\tau)}{\Distance+\frac{k^2 n}{2}f(\tau)+f(\tau)}=\frac{k\Distance}{\Distance}=k$. \qed
\end{proof}

\section{Lower Bound for Randomized Algorithms}
As in Section \ref{deterministiclowerbound}, arbitrarily fix a nondecreasing, unbounded, continuous function  $f:\mathbb{R}^+\mapsto\mathbb{R}^+$ which satisfies $f(0)=f'(0)=0$.  Let $\Space = (V,\Distance)$ be a uniform metric space with $V=\{v_0,v_1,...v_k\}$. This section is devoted to show that against oblivious adversaries, the competitive ratio of any randomized algorithm for $f$-MPMD is lower bounded by $\Omega(\ln k)$. 

By Yao's principle, it is enough to show that there is a randomized input instance on $V$ such that any \emph{deterministic} $f$-MPMD algorithm must have expected competitive ratio $\Omega(\ln k)$. 

To construct the random instance, let $T\in\mathbb{R}^+$ be such that $f(T)= k!\Distance\ln k$. Arbitrarily choose real number $\tau>0$ such that $n=\frac{T}{\tau}$ is an even number. 
Let $R$ be an instance of online input over $\Space$  such that
\begin{itemize}
\item Point $v_0$ has one request, which arrives at time $t=0$;
\item For $1\le i\le k-1$, each point $v_i$ has $ni$ requests, which arrive at time $j\tau$, $1\le j\le ni$, respectively;
\item Point $v_k$ has $kn+1$ requests, which arrive at time $j\tau$, $1\le j\le kn+1$, respectively.
\end{itemize}

An example of $R$ with $k=3$ is illustrated in Figure \ref{fig:randomlowerbound}(a).
\begin{figure}
  \centering
  \includegraphics[width=\textwidth]{./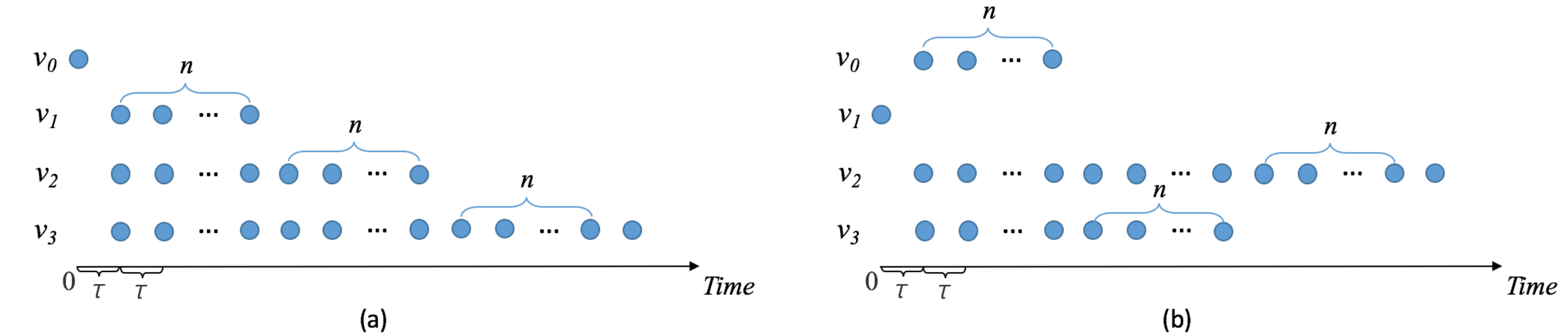}
  \caption{(a) An input instance $R$ on a 4-point uniform metric space. (b) $R_{\sigma}$ for the permutation $\sigma: 0\mapsto 1,1\mapsto 0,2\mapsto 3,3\mapsto 2$.}\label{fig:randomlowerbound}
\end{figure}

Let $\Xi_k$ be the set of permutations on $\{0,1,...,k\}$. For any permutation $\sigma\in \Xi_k$, define $R_{\sigma}$ to be the online input instance in which each point $v_{\sigma(i)}$ has the same requests as $v_i$ does in $R$, $0\le i\le k$. An example of $R_{\sigma}$ is illustrated in Figure \ref{fig:randomlowerbound}(b).

Let $\bm{\sigma}$ be a random permutation uniformly distributed on $\Xi_k$. Then $R_{\bm{\sigma}}$ is uniformly distributed on the set $\{R_\sigma: \sigma\in \Xi_k\}$. 

Arbitrarily fix a deterministic algorithm $\ALG$ solving $f$-MPMD problem. Let $\OPT$ be the optimum offline algorithm. The main task is to compare the expected cost of $\ALG$ with that of $\OPT$ on the input instance $R_{\bm{\sigma}}$. Let's begin with some lemmas, where $\mathbb{E}[\cdot]$ means expectation. 

\begin{lemma}\label{optlowerboundsingle}
When $k$ is fixed, $\lim_{\tau\rightarrow 0}\mathbb{E}[\Cost_{\adv{\A}}(R_{\bm{\sigma}})]\le \Distance$.
\end{lemma}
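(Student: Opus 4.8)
\textbf{Proof proposal for Lemma \ref{optlowerboundsingle}.}
The plan is to exhibit a single offline solution whose cost does not depend on which permutation $\sigma$ is drawn, and whose cost tends to $\Distance$ as $\tau\to 0$; since $\OPT$ is optimal, its cost is bounded by that of this solution for every $\sigma$, so the expectation is bounded as well. The key observation is that $R_\sigma$ is just $R$ with its points relabelled, and both the space cost (on a $\Distance$-uniform metric) and the time cost are invariant under relabelling of points. Hence it suffices to describe a good offline matching for $R$ itself.

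Concretely, I would match requests \emph{within each point} as much as possible, using exactly one external edge to fix up the parity. Recall the request counts in $R$: $v_0$ has $1$ request (at time $0$), each $v_i$ for $1\le i\le k-1$ has $ni$ requests (at times $\tau,2\tau,\dots,ni\tau$), and $v_k$ has $kn+1$ requests (at times $\tau,\dots,(kn+1)\tau$). The only points with an odd number of requests are $v_0$ (one request) and $v_k$ ($kn+1$ requests). So the offline algorithm: pairs $\rho_{00}$ at $v_0$ with the first request at $v_k$ via one external match of space cost $\Distance$, incurring time cost $f(\tau)$ on the $v_k$ side and $f(\tau)$ on the $v_0$ side (both matched at time $\tau$, the arrival time of the $v_k$ request); then pairs up the remaining requests at each point internally with consecutive arrivals, so that each internal pair $\langle$arrival at $j\tau$, arrival at $(j+1)\tau\rangle$ is matched at time $(j+1)\tau$ and contributes time cost at most $f((j+1)\tau-j\tau)+0=f(\tau)$ — indeed the second request of the pair has zero delay and the first has delay exactly $\tau$. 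The number of internal pairs is at most $\tfrac12\sum_{i} (\text{count at } v_i) \le \tfrac12 (k n + 1 + \sum_{i=1}^{k-1} n i) = O(k^2 n)$, so the total time cost is $O(k^2 n\, f(\tau)) + 2f(\tau)$ and the total space cost is exactly $\Distance$.

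It remains to check that $\lim_{\tau\to 0} k^2 n\, f(\tau) = 0$ with $k$ fixed. This is where the hypothesis $f'(0)=0$ does the work, exactly as in the proof of Theorem \ref{UniformMetricLowerBound}: writing $n\tau = T$ a constant, we have $k^2 n f(\tau) = k^2 \tfrac{T}{\tau} f(\tau) = k^2 T \cdot \tfrac{f(\tau)}{\tau}$, and $\tfrac{f(\tau)}{\tau}\to f'(0)=0$ as $\tau\to 0$ (using $f(0)=0$ and differentiability at $0$). Hence the cost of our offline solution tends to $\Distance$, so $\Cost_{\adv{\A}}(R_\sigma)\le \Distance + o(1)$ uniformly in $\sigma$, giving $\mathbb{E}[\Cost_{\adv{\A}}(R_{\bm\sigma})]\le \Distance + o(1)$ and therefore $\lim_{\tau\to 0}\mathbb{E}[\Cost_{\adv{\A}}(R_{\bm\sigma})]\le \Distance$.

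The only mildly delicate point — really the single obstacle — is being careful that the permutation genuinely leaves the cost unchanged: the time-cost of a match depends only on arrival times and matching time, and the space cost of any pair is $\Distance$ (external) or $0$ (internal) regardless of \emph{which} points are involved, since the metric is uniform. So the offline cost of $R_\sigma$ equals the offline cost of $R$ for every $\sigma$, and no averaging over $\bm\sigma$ is actually needed beyond this invariance. Everything else is the routine limit computation above.
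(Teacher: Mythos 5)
Your proposal is correct and takes essentially the same route as the paper: the paper likewise exhibits the offline matching that pairs the $v_{\bm{\sigma}(0)}$ request externally with the first request at $v_{\bm{\sigma}(k)}$ and matches consecutive arrivals internally, bounds its cost by $\Distance + O(k^2 n)f(\tau)$, and concludes via $\lim_{\tau\to 0} n f(\tau) = \lim_{\tau\to 0} n\tau\cdot\frac{f(\tau)}{\tau} = T\cdot f'(0)=0$. Your explicit relabelling-invariance remark is a harmless addition that the paper leaves implicit.
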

\begin{proof}
Consider the following offline matching of $R_{\bm{\sigma}}$: one external match occurs between the request at point $v_{\bm{\sigma}(0)}$ and the first request at $v_{\bm{\sigma}(k)}$, and the other matches are internal and occur between requests that arrive consecutively. The cost is $\Distance+(\frac{nk(k+1)}{4}+1)f(\tau)\le \Distance+n(k+1)^2 f(\tau)$. Because $\OPT$ is an optimum offline algorithm, it must hold that $\Cost_{\adv{\A}}(R_{\bm{\sigma}})]\le \Distance+n(k+1)^2 f(\tau)$.

Furthermore, since $k$ is fixed, $$\lim_{\tau\rightarrow 0}nf(\tau)=\lim_{\tau\rightarrow 0}n\tau\frac{f(\tau)}{\tau}=f^{-1}(k!\Distance\ln k)f'(0)=0,$$which immediately leads to the lemma. \qed
\end{proof}

Now we consider the behavior of $\ALG$ on the random input instance $R_{\bm{\sigma}}$. 

\begin{lemma}\label{alglowerboundsinglelongdelay}
If with nonzero probability $\ALG$ keeps some request pending no shorter than  $T$, $\mathbb{E}[\Cost_{\A}(R_{\bm{\sigma}})]\ge \Distance\ln k$.
\end{lemma}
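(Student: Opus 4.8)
The plan is to show that if $\ALG$ keeps some request pending for a duration of at least $T$ on a positive-probability set of permutations, then its expected time cost alone is already $\Omega(\Distance \ln k)$. First I would fix a permutation $\sigma$ and a run of $\ALG$ on $R_\sigma$ in which some request $\rho$ stays pending for time at least $T$. By definition of the time cost, this single request contributes at least $f(T) = k!\,\Distance\ln k$ to $\Cost_{\A}(R_\sigma)$, since the time cost incurred by $\rho$ is $f\big(T(\rho)-t(\rho)\big) \ge f(T)$ (using that $f$ is nondecreasing). Hence on this run $\Cost_{\A}(R_\sigma) \ge k!\,\Distance\ln k$.

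Next I would lower-bound the probability of the event in question. The hypothesis is only that with \emph{nonzero} probability $\ALG$ keeps some request pending at least $T$; since $\bm\sigma$ is uniform on the finite set $\Xi_k$ of size $(k+1)!$, any nonempty event has probability at least $1/(k+1)!$. Actually, the cleaner route is to observe that the instance $R$ and hence the family $\{R_\sigma\}$ is symmetric, and $\ALG$ is deterministic, so whether $\ALG$ keeps a request pending for time $\ge T$ depends on $R_\sigma$; if this happens for at least one $\sigma$ it happens for a set of permutations of probability at least $\frac{1}{(k+1)!}$. Combining with the previous paragraph,
\begin{align*}
\mathbb{E}\big[\Cost_{\A}(R_{\bm\sigma})\big]
&\ge \Pr\big[\ALG \text{ keeps some request pending} \ge T\big]\cdot k!\,\Distance\ln k \\
&\ge \frac{1}{(k+1)!}\cdot k!\,\Distance\ln k
= \frac{\Distance\ln k}{k+1}.
\end{align*}
This gives $\Omega\big(\frac{\Distance\ln k}{k}\big)$ rather than the claimed $\Distance\ln k$, so the naive counting bound is too weak; the real work is to argue that the bad event has probability bounded below by an absolute constant (or that the value $f(T)$ should be chosen even larger, or the event localized to a single point whose image under $\bm\sigma$ is uniform among $k+1$ choices, absorbing only a $1/(k+1)$ factor — which still loses a factor $k$).

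The main obstacle, therefore, is getting the probability factor right: one wants the statement to read $\mathbb{E}[\Cost_{\A}(R_{\bm\sigma})] \ge \Distance\ln k$ with no lost factor of $k$. I expect the intended argument is that the event ``$\ALG$ keeps some request pending $\ge T$'' is actually invariant under the relabeling $\bm\sigma$ — the multiset of arrival times is the same for every $\sigma$, only the point labels change, and ``some request is pending $\ge T$'' does not reference labels at all once we note the metric is uniform, so if it happens for one $\sigma$ it happens for \emph{all} $\sigma$, i.e.\ with probability $1$. Then $\mathbb{E}[\Cost_{\A}(R_{\bm\sigma})] \ge f(T) = k!\,\Distance\ln k \ge \Distance\ln k$ immediately. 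So the key step to nail down is this symmetry observation: because distances are uniform, $\ALG$'s decisions about \emph{when} to match (as opposed to which specific pair) cannot distinguish $R_\sigma$ from $R_{\sigma'}$ in a way that affects the maximum pending time, and thus the hypothesis ``with nonzero probability'' upgrades to ``with probability one.'' Once that is established, the bound $\mathbb{E}[\Cost_{\A}(R_{\bm\sigma})] \ge \Distance \ln k$ follows from the single-request time-cost estimate above together with the choice $f(T) = k!\,\Distance\ln k \ge \Distance\ln k$.
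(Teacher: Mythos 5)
Your first half matches the paper's actual proof: the paper simply fixes one permutation $\sigma'$ on which some request waits at least $T$, notes $\Cost_{\A}(R_{\sigma'})\ge f(T)=k!\,\Distance\ln k$, and multiplies by $\Pr(\bm{\sigma}=\sigma')$, taken there to be $1/k!$, to get $\Distance\ln k$. In other words, the ``naive counting bound'' you dismiss as too weak \emph{is} the intended argument; the apparent loss of a factor $k$ is meant to be absorbed by the factorially large choice of $f(T)$, not by any probabilistic cleverness. (Your count $|\Xi_k|=(k+1)!$ is in fact the accurate one for permutations of $\{0,\dots,k\}$, so strictly the paper's constant is off by a factor $k+1$; the harmless repair is exactly your parenthetical remark, namely define $T$ by $f(T)=(k+1)!\,\Distance\ln k$, which changes nothing in the companion lemmas since they only use $T$ as a fixed finite threshold with $n\tau=T$ and $\tau\to 0$.)

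The genuine gap is the route you actually commit to: the claim that the event ``some request stays pending at least $T$'' is invariant under the relabeling $\bm{\sigma}$ and hence occurs with probability $1$. This is false for an arbitrary deterministic $\ALG$. The instances $R_\sigma$ are genuinely different inputs (which labeled point receives the single request, which receives $ni$ requests, which receives $kn+1$), and nothing forces $\ALG$ to be label-symmetric: for example, an algorithm could hold its requests for a long time only when the lone time-$0$ request lands at $v_0$ and match aggressively otherwise, making the long-delay event have probability $1/(k+1)$ or smaller. Indeed, the entire point of randomizing over $\bm{\sigma}$ in this Yao-principle construction is that the algorithm may exploit labels; if the symmetry upgrade were valid, the randomization would be unnecessary here. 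So you should drop the ``probability one'' step and instead finish exactly as you began: one bad permutation, probability $1/|\Xi_k|$, cost at least $f(T)$, with $T$ chosen so that $f(T)/|\Xi_k|\ge \Distance\ln k$.
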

\begin{proof}
Let $\sigma'\in \Xi_k$ be such that on the input instance $R_{\sigma'}$, $\ALG$ keeps a request pending for a period at least $T$. The time cost incurred by this delay is at least $f(T)=k!\Distance\ln k$, so $\Cost_{\A}(R_{\sigma'})\ge k!\Distance\ln k$. Since $\bm{\sigma}$ is uniformly distributed on $\Xi_k$, we have $\Pr(\bm{\sigma}=\sigma')=\frac{1}{k!}$, which implies that $\mathbb{E}[\Cost_{\A}(R_{\bm{\sigma}})]\ge \Pr(\bm{\sigma}=\sigma')\Cost_{\A}(R_{\sigma'})\ge \Distance\ln k$. \qed
\end{proof}

Actually Lemma \ref{alglowerboundsinglelongdelay} remains true even if no request is delayed too long, as claimed in the next lemma.
\begin{lemma}\label{alglowerboundsingleshortdelay}
If $\ALG$ doesn't keep any request pending longer than $T$, $\mathbb{E}[\Cost_{\A}(R_{\bm{\sigma}})]\ge \Distance\Theta(\ln k)$.
\end{lemma}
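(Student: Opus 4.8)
The plan is to show that even when $\ALG$ matches every request within time less than $T$, the expected cost of $\ALG$ on the random instance $R_{\bm\sigma}$ is $\Omega(\Distance\ln k)$. Intuitively, the instance $R$ consists of $k+1$ points whose ``arrival rates'' differ: $v_i$ has $ni$ requests over the window $[0,T]$ (and $v_k$ has one extra). The point $v_0$ (after relabelling, $v_{\bm\sigma(0)}$) has a single request that will never get an internal partner, so it must eventually be matched externally with some request arriving at a higher-numbered point; similarly the total parity at $v_k$ is odd, so one request there must also leave its point. The key is that $\ALG$ cannot tell the permutation apart from the arrival pattern early on, so at the moment it is forced to commit to an external match involving $v_{\bm\sigma(0)}$ it cannot know which point is ``light'' and which is ``heavy''. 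I would formalize this via a potential/adversary-counting argument on the connected components of the graph of external matches performed so far.

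Concretely, first I would fix the deterministic $\ALG$ and, for each permutation $\sigma$, track the partition of $V$ into components induced by $\ALG$'s external matches, exactly as in the construction of Section~\ref{deterministiclowerbound} (here there is only one ``round''). Since $\ALG$ never delays past $T$, by time $T$ the request originally at $v_0$ must have been matched, hence merged into a component of size $\ge 2$; iterating, within time $T$ the whole set $V$ must become a single component, so $\ALG$ performs at least $k$ external matches on $R_\sigma$ for \emph{every} $\sigma$ — giving a baseline space cost $k\Distance$, but that is only $\Omega(k\Distance)$ deterministically and does not yet yield the $\ln k$ factor against $\OPT$ whose cost tends to $\Distance$. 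The refinement: look at \emph{when} these external matches happen. Group time into the epochs $[(h-1)\,T/k, h\,T/k)$ for $h=1,\dots,k$ (or a similar geometric/arithmetic schedule). Because the instance $R$ is symmetric under $\bm\sigma$, the point $v_{\bm\sigma(0)}$ is, from $\ALG$'s viewpoint, equally likely to be any of the points that still look ``unexhausted'' at the start of epoch $h$; when $\ALG$ finally externally matches a request sitting at $v_{\bm\sigma(0)}$ (which it must, by parity), with probability bounded below it has chosen a partner point that in fact still has many future requests, so that request — or its partner's point — was forced to wait through roughly that epoch, contributing time cost $\Omega(f(T/k))$ or forcing an extra external match. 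Summing a harmonic-type series over the $k$ epochs yields $\mathbb{E}[\Cost_{\A}(R_{\bm\sigma})]\ge \Distance\,\Theta(\ln k)$.

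The main technical step — and the part I expect to be the real obstacle — is making the ``$\ALG$ cannot distinguish the permutation'' statement quantitative: one must argue that, conditioned on the prefix of arrivals $\ALG$ has seen up to a given time, the posterior distribution of $\bm\sigma$ restricted to the ``still-active'' points is still (close to) uniform, so that the probability $\ALG$'s forced external partner is a heavy point is $\Omega(1)$ at each of the $\Omega(\ln k)$ critical moments, and that these contributions are to \emph{distinct} requests so they genuinely add up. I would handle this by an exchange/coupling argument on pairs of permutations that agree on all points $\ALG$ has already ``seen enough of'' and differ by a transposition on two still-indistinguishable points; $\ALG$ behaves identically on both until the critical decision, and on at least one of the two the decision is costly. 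Once that indistinguishability lemma is in place, the final bound is just the harmonic sum $\sum_{h=1}^{k} \frac{1}{h}\cdot\Distance = \Theta(\Distance\ln k)$, together with Lemma~\ref{alglowerboundsinglelongdelay} (to absorb the easy case) and Lemma~\ref{optlowerboundsingle} (to turn the cost bound into a competitive-ratio bound) and Yao's principle to conclude Theorem~\ref{ranlowerbound}.
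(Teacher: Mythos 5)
You have the right skeleton in mind (parity forces external matches out of the ``revealed'' prefix of points; the randomness of $\bm{\sigma}$ makes the forced partner's index uniform among the still\mbox{-}indistinguishable points; a harmonic sum then gives $\ln k$), and this is indeed the paper's route. But your reading of the instance is wrong in a way that breaks your concrete plan: in $R$ every active point receives requests at the \emph{same} rate (one per $\tau$); what distinguishes $v_i$ is that its stream lasts until time $iT$, not that it has a larger ``arrival rate'' inside a common window $[0,T]$. Consequently the natural epochs are $[jT,(j+1)T)$ for $j=0,\dots,k-1$ --- a point's $\bm{\sigma}$-index is revealed exactly when its stream stops --- not your subdivision $[(h-1)T/k,hT/k)$ of $[0,T]$; and with genuinely different rates the points would be distinguishable from the outset, so the exchangeability you yourself flag as ``the real obstacle'' would fail in your version of the instance. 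Your baseline claim that $V$ must merge into a single component, so that $\ALG$ makes at least $k$ external matches for \emph{every} $\sigma$, is also false (and had it been true it would already imply a bound stronger than the lemma, so the remark that it ``does not yet yield the $\ln k$ factor'' is backwards): the parity forcing stops as soon as the growing odd-parity prefix absorbs $v_{\bm{\sigma}(k)}$, and for a lucky permutation the algorithm can get away with a single external match.

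The quantitative core is also missing. The paper's argument needs no time-cost terms of order $f(T/k)$ and no per-moment ``costly with probability $\Omega(1)$'' claim (over $k$ epochs that would give $k$, not $\ln k$, and it presupposes there are $\Omega(\ln k)$ critical moments --- the very thing to be proven). Instead it runs a chain: with $j_0=0$, all requests at points of index $\le j_{h-1}$ have arrived by time $j_{h-1}T$ and, by the no-long-delay hypothesis, are matched by time $(j_{h-1}+1)T$; their total number is odd, so one of them is matched externally to a point of index $j_h>j_{h-1}$; by the symmetry of $R_{\bm{\sigma}}$, conditioned on $j_{h-1}$ the index $j_h$ is uniform on $\{j_{h-1}+1,\dots,k\}$, and the chain stops upon reaching $k$. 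The expected number of steps is exactly $\sum_{i=1}^{k}1/i=\Theta(\ln k)$, each step certifies a distinct external match, and each external match costs $\Distance$ in space --- that alone yields $\mathbb{E}[\Cost_{\A}(R_{\bm{\sigma}})]\ge\Distance\,\Theta(\ln k)$. Your sketch gestures at the uniformity via a transposition coupling but never derives the recursion producing the $1/h$ terms, so as written the harmonic bound is asserted rather than proved; to repair the proposal you would need to state and prove the conditional-uniformity fact and the resulting expected chain length, at which point you have reproduced the paper's proof.
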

\begin{proof}
 The basic idea of the proof is to show that in expectation, $\ALG$ has $\ln k$ external matches. The number of external matches can be estimated by sampling the points where an externally matched request is located. This sampling is done sequentially while $\ALG$ is running. The detail is described below.

Suppose that we have already sampled $u_1,\cdots,u_{h-1}\in V$ for some $h\ge 1$. If $h=1$, let $j_0=0$. If $h>0$, wait until no more requests arrive at $u_{h-1}$, and by definition of $R_{\bm{\sigma}}$, we can determine both $j_{h-1}$ such that $u_{h-1}=v_{\bm{\sigma}(j_{h-1})}$ and $\bm{\sigma}(j)$ for $j\le j_{h-1}$. Then we proceed case by case.

%
\textbf{Case 1: } $j_{h-1}=k$. The sampling process ends and output $\{u_1,...u_{h-1}\}$ as the final sample. Let $h_{\bm{\sigma}}=h-1$.

\textbf{Case 2: } $j_{h-1}<k$. At time $(j_{h-1}+1)T$, let $$V_h=\{v_{\bm{\sigma}(j)}: j\le j_{h-1} \} \textrm{ and}$$  $$U_h=\{u\in V\setminus V_h: \textrm{there is  a match } \langle\rho,\rho'\rangle \textrm{ with } l(\rho)=u,l(\rho')\in V_h\}.$$
Define $\mathcal{R}_h=\{\rho\in R_{\bm{\sigma}}: \Location(\rho)\in V_h \}$. Note that all the requests in $\mathcal{R}_h$ have arrived by time $j_{h-1}T$, hence being matched by time $(j_{h-1}+1)T$ according to the assumption that $\ALG$ doesn't keep any request pending longer than $T$. Because $j_{h-1}<k$, $\mathcal{R}_h$ has an odd number of requests, implying that by time $(j_{h-1}+1)T$, $\ALG$ must have matched one request in $\mathcal{R}_h$ with another located outside $V_h$. As a result, $U_h\neq\emptyset$. Arbitrarily choose $u_h\in U_h$, and iterate the process with already-sampled points $u_1,\cdots,u_{h}$. 


The above sampling process finally produce a sample $\{u_1,\cdots, u_{h_{\bm{\sigma}}}\}$. Recall the indices $j_i$ satisfies $v_{\bm{\sigma}(j_i)}=u_i$, $1\le i\le h_{\bm{\sigma}}$. Due to the randomness of $\bm{\sigma}$, all the indices $j_i$ and $h_{\bm{\sigma}}$ are random variables. It is easy to observe two facts.

\begin{description}

\item[\textbf{Fact 1}:] for any $1\le h\le h_{\bm{\sigma}}$, given $j_{h-1}$, $j_h$ is uniformly distributed on the set $\{j_{h-1}+1,...,k\}$. This follows from the uniformly distribution of $R_{\bm{\sigma}}$. 

\item[\textbf{Fact 2}:] $\ALG$ has at least $h_{\bm{\sigma}}$ external matches. This obviously holds.
\end{description}

Now we estimate $\mathbb{E}[h_{\bm{\sigma}}]$. For any integer $1\le i\le k$, let $S_i$ stands for the set $\{i,\cdots,k\}$, and define a random variable $X_i$ by induction as follows: 
\begin{itemize}
\item Uniformly randomly sample $c_1$ from $S_i$, 
\item Suppose $c_{h-1}$ has been sampled. If $c_{h-1}<k$, uniformly randomly sample $c_h$ from $S_{c_{h-1}+1}$, and iterate this step with $h$ increased  by 1. Otherwise, the induction ends and let $X_i=h-1$.
\end{itemize}

We can see that for any $1\le i\le k$, $\mathbb{E}[X_i]=1+\frac{1}{k-i+1}\sum_{j=i+1}^k \mathbb{E}[X_j]$. Considering that $\mathbb{E}[X_k]=1$, we have $\mathbb{E}[X_1]=\sum_{i=1}^k \frac{1}{i}$.  

By Fact 1 mentioned above, $h_{\bm{\sigma}}$ and $X_1$ are identically distributed. As a result, $\mathbb{E}[h_{\bm{\sigma}}]=\mathbb{E}[X_1]=\Theta(\ln k)$.

Since an external match incurs space cost $\Distance$, the proof ends due to Fact 2. \qed
\end{proof}
%
%
%
%
%

For any $t\in \mathbb{R}^+$, define an online input instance $R^{(t)}$ which is derived from $R$ by deferring the arrival of each request for a period of length $t$. Likewise define $R^{(t)}_\sigma$ for any $\sigma\in \Xi_k$. Arbitrarily fix a positive integer $m$. For any $1\le r\le m$, Let $\bm{\sigma}_1,...,\bm{\sigma}_m$ be independent random variables that are uniformly distributed on $\Xi_k$. Our ultimate random online input instance is $\bm{R}=\bigcup_{r=1}^m R^{(T_r)}_{\bm{\sigma}_r}$, where $T_1=0$ and  for any $1\le r\le m$, $T_r$ is arbitrary time before which all requests in $\bigcup_{i=1}^{r-1} R^{(T_i)}_{\bm{\sigma}_i}$ have been matched. 
The instance $\bm{R}$ is illustrated in Figure \ref{fig:obliviousadverary}.

\begin{figure}
  \centering
  \includegraphics[width=\textwidth]{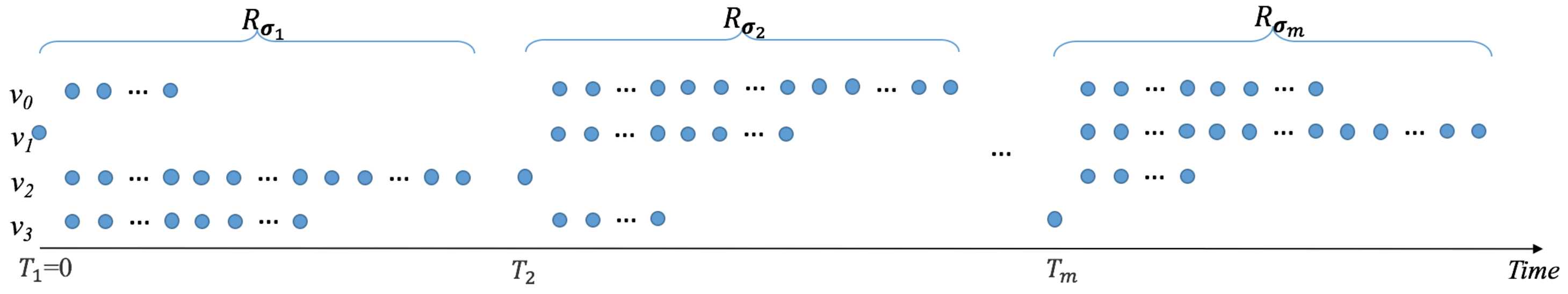}
  \caption{An illustration of $\bm{R}$ on a 4-point uniform metric space.}\label{fig:obliviousadverary}
\end{figure}
 
By the definition of $\bm{R}$, we know that any pair of requests paired up by $\ALG$ must belongs to the same $R^{(T_r)}_{\bm{\sigma}_r}$, for some $1\le r\le m$. Hence, $\Cost_{\adv{\A}}(\bm{R})\le \sum_{r=1}^m \Cost_{\adv{\A}}(R^{(T_r)}_{\bm{\sigma}_r})$ and $\Cost_{\ALG}(\bm{R})= \sum_{r=1}^m \Cost_{\ALG}(R^{(T_r)}_{\bm{\sigma}_r})$. This, together with the trivial observation that Lemmas \ref{optlowerboundsingle}-\ref{alglowerboundsingleshortdelay} remain true for any instance $R^{(T_r)}_{\bm{\sigma}_r}$, immediately leads to the following lemma.

\begin{lemma}\label{algoptbounds}
$\mathbb{E}[\Cost_{\A}(\bm{R})]\ge m\Distance\Theta(\ln k)$, and $\lim_{\tau\rightarrow 0}\mathbb{E}[\Cost_{\adv{\A}}(\bm{R})]\le m\Distance$ when $k$ is fixed.
\end{lemma}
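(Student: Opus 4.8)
\textbf{Proof proposal for Lemma \ref{algoptbounds}.}

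The plan is to combine the decomposition of $\bm{R}$ into the $m$ independent ``blocks'' $R^{(T_r)}_{\bm{\sigma}_r}$ with the single-block estimates already established in Lemmas \ref{optlowerboundsingle}, \ref{alglowerboundsinglelongdelay}, and \ref{alglowerboundsingleshortdelay}. The two facts we are handed for free are that $\ALG$ never pairs requests from different blocks, so $\Cost_{\A}(\bm{R}) = \sum_{r=1}^m \Cost_{\A}(R^{(T_r)}_{\bm{\sigma}_r})$ exactly, and that for the offline optimum we have the one-sided bound $\Cost_{\adv{\A}}(\bm{R}) \le \sum_{r=1}^m \Cost_{\adv{\A}}(R^{(T_r)}_{\bm{\sigma}_r})$, since any block-respecting matching (in particular the one built in Lemma \ref{optlowerboundsingle} applied block by block) is a feasible offline solution.

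For the upper bound on the adversary's cost, I would apply Lemma \ref{optlowerboundsingle} to each shifted instance $R^{(T_r)}_{\bm{\sigma}_r}$ — time-shifting by $T_r$ does not change the internal structure, so the same argument gives $\mathbb{E}[\Cost_{\adv{\A}}(R^{(T_r)}_{\bm{\sigma}_r})] \le \Distance + n(k+1)^2 f(\tau)$ for each $r$. Summing over $r$, taking expectations through the inequality $\Cost_{\adv{\A}}(\bm{R}) \le \sum_r \Cost_{\adv{\A}}(R^{(T_r)}_{\bm{\sigma}_r})$, and then letting $\tau \to 0$ (with $k$ fixed, so $n f(\tau) \to 0$ as in Lemma \ref{optlowerboundsingle}) yields $\lim_{\tau\to 0}\mathbb{E}[\Cost_{\adv{\A}}(\bm{R})] \le m\Distance$.

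For the lower bound on $\ALG$'s cost, I would argue block by block as well: for each fixed $r$, restrict attention to how $\ALG$ behaves on the requests of block $r$. Either $\ALG$ keeps some request of that block pending at least $T$ with positive probability — in which case Lemma \ref{alglowerboundsinglelongdelay} gives $\mathbb{E}[\Cost_{\A}(R^{(T_r)}_{\bm{\sigma}_r})] \ge \Distance \ln k$ — or it does not, in which case Lemma \ref{alglowerboundsingleshortdelay} gives $\mathbb{E}[\Cost_{\A}(R^{(T_r)}_{\bm{\sigma}_r})] \ge \Distance\,\Theta(\ln k)$. Either way each block contributes $\Distance\,\Theta(\ln k)$ in expectation; since $\Cost_{\A}(\bm{R})$ is the exact sum of the per-block costs, linearity of expectation gives $\mathbb{E}[\Cost_{\A}(\bm{R})] \ge m\Distance\,\Theta(\ln k)$.

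The one subtlety — and the step I would be most careful about — is that the dichotomy (delays $\ge T$ with positive probability, or no such delay ever) must be applied to $\ALG$'s restriction to block $r$, and this restriction depends on everything that happened in blocks $1,\dots,r-1$. Because $T_r$ is only required to be ``after all earlier requests are matched,'' the state of $\ALG$ at the start of block $r$ is some fixed deterministic function of $\bm{\sigma}_1,\dots,\bm{\sigma}_{r-1}$, while $\bm{\sigma}_r$ is independent of those; conditioning on $\bm{\sigma}_1,\dots,\bm{\sigma}_{r-1}$, the behavior on block $r$ is exactly that of \emph{some} deterministic $f$-MPMD algorithm run on the single random instance $R^{(T_r)}_{\bm{\sigma}_r}$, so Lemmas \ref{alglowerboundsinglelongdelay}--\ref{alglowerboundsingleshortdelay} apply conditionally and the bound survives taking the outer expectation. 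This is precisely the ``trivial observation that Lemmas \ref{optlowerboundsingle}--\ref{alglowerboundsingleshortdelay} remain true for any instance $R^{(T_r)}_{\bm{\sigma}_r}$'' referred to in the text, but it is worth stating the conditioning explicitly rather than leaving it implicit.
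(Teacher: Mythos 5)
Your proposal is correct and follows essentially the same route as the paper, which likewise combines the block decomposition $\Cost_{\A}(\bm{R})=\sum_r \Cost_{\A}(R^{(T_r)}_{\bm{\sigma}_r})$, $\Cost_{\adv{\A}}(\bm{R})\le\sum_r \Cost_{\adv{\A}}(R^{(T_r)}_{\bm{\sigma}_r})$ with Lemmas \ref{optlowerboundsingle}--\ref{alglowerboundsingleshortdelay} applied per block. Your explicit conditioning on $\bm{\sigma}_1,\dots,\bm{\sigma}_{r-1}$ (so that the per-block behavior is that of some deterministic algorithm on an independent $R^{(T_r)}_{\bm{\sigma}_r}$) is exactly the ``trivial observation'' the paper leaves implicit, so it is a welcome clarification rather than a deviation.
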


Now we are ready to prove the main result of this section.

\begin{theorem}
Suppose that the time cost function $f$ is nondecreasing, unbounded, continuous and satisfies $f(0)=f'(0)=0$. 
Then any randomized algorithm for $f$-MPMD on $k$-point uniform metric space has competitive ratio $\Omega(\ln k)$ against oblivious adversaries. 
\end{theorem}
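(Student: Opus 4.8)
The plan is to invoke Yao's principle, which the section already sets up: it suffices to exhibit a single randomized input instance on which every deterministic algorithm has expected cost $\Omega(\ln k)$ times the expected optimal cost. The random instance $\bm{R}=\bigcup_{r=1}^m R^{(T_r)}_{\bm{\sigma}_r}$ and all the supporting work are already in place, so the proof is essentially a bookkeeping assembly of Lemma \ref{algoptbounds}.

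First I would fix an arbitrary randomized algorithm $\mathcal{B}$ against an oblivious adversary. By Yao's principle, its competitive ratio is at least $\inf_{\ALG} \mathbb{E}[\Cost_{\ALG}(\bm{R})] / \mathbb{E}[\Cost_{\adv{\A}}(\bm{R})]$ over deterministic $\ALG$, minus the additive slack; more precisely, if $\mathcal{B}$ were $a$-competitive with additive constant $b$ on the metric space $\Space$, then for every deterministic $\ALG$ we would have $\mathbb{E}[\Cost_{\ALG}(\bm{R})] \le a\cdot \mathbb{E}[\Cost_{\adv{\A}}(\bm{R})] + b$. Then I would apply Lemma \ref{algoptbounds}: the left side is at least $m\Distance\,\Theta(\ln k)$ and, taking $\tau\to 0$ with $k$ fixed (which is legitimate since $\ALG$ is fixed and the inequality holds for every $\tau$), the right side is at most $a\cdot m\Distance + b$. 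This gives $m\Distance\,\Theta(\ln k) \le a\cdot m\Distance + b$ for every positive integer $m$.

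Next I would divide through by $m\Distance$ and let $m\to\infty$ so the additive term $b/(m\Distance)$ vanishes, yielding $a \ge \Theta(\ln k)$, i.e. $a = \Omega(\ln k)$. Since $\mathcal{B}$ was arbitrary, every randomized algorithm for $f$-MPMD on the $k$-point uniform metric space has competitive ratio $\Omega(\ln k)$ against oblivious adversaries, which is the claim.

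The only subtlety — and the one place care is needed — is the order of limits and the fact that the competitive-ratio constants $a,b$ may depend on $\Space$ (hence on $k$ and $\Distance$) but not on the particular instance; one must fix $k$ and $\Distance$ first, then take $\tau\to 0$ to push the $\OPT$ upper bound down to $m\Distance$, and only afterwards take $m\to\infty$ to kill the additive constant. I do not expect a genuine obstacle here: all the combinatorial content (the sampling argument producing $\Theta(\ln k)$ external matches in expectation, and the offline matching of cost $\to m\Distance$) has already been established in Lemmas \ref{optlowerboundsingle}--\ref{alglowerboundsingleshortdelay} and packaged in Lemma \ref{algoptbounds}, so this final theorem is just the standard Yao-principle wrapper.
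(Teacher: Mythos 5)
Your proposal is correct and follows essentially the same route as the paper: invoke Yao's principle on the random instance $\bm{R}$, apply Lemma \ref{algoptbounds} after fixing $k$ and $\Distance$ and letting $\tau\to 0$, then divide by $m$ and let $m\to\infty$ to eliminate the additive constant, yielding $a=\Omega(\ln k)$. The ordering of limits you flag as the only subtlety is handled in exactly the same way in the paper's proof.
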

\begin{proof}
Arbitrarily choose a deterministic online algorithm $\ALG$ for the $f$-MPMD problem. Let $\OPT$ be an optimum offline algorithm for the $f$-MPMD problem. Consider the random online input instance $\bm{R}$ constructed above.

Suppose there are $a=a(k,\Distance)$ and $b=b(k,\Distance)$ such that for any $m\ge 1$, 
$$\mathbb{E}[\Cost_{\A}(\bm{R})]\le a\cdot \mathbb{E}[\Cost_{\adv{\A}}(\bm{R})]+b.$$
Fix $k$ and $\Distance$. Letting $\tau$ approach zero, by Lemma \ref{algoptbounds}, we have $m\Distance\Omega(\ln k)\le am\Distance+b$. Dividing both sides of the inequality by $m$ and letting $m$ approach infinity, we get $a= \Omega(\ln k)$. The lemma is proven due to Yao's principle. \qed
\end{proof}

\section{Conclusion} 
\label{Sec:Disscution}
We have formulate the convex-MPMD problem and have designed a deterministic online algorithm to solve this problem. For any monomial function $f(t)=t^\alpha$ with $\alpha>1$, this algorithm has competitive ratio $O(k)$ if requests are from a $k$-point uniform metric space. Actually, this result can be generalized to any convex function that satisfies (a) $f(0)=f'(0)=0$, (b) $\ln f(t)$ is a concave function, and (c) for any $\lambda_1<\lambda_2\in \mathbb{R^+}$, $\sup_{x\in \mathbb{R^+}}\frac{x}{f(f^{-1}(\lambda_2 x)-f^{-1}(\lambda_1 x))}<\infty$. Under these conditions, all the lemmas remain true and the proof of Theorem \ref{CROfTrivialMetric} remains valid.

Our algorithm can be adapted to work on general metric spaces. Given a metric space $\Space$, let $d_{\max}$ (respectively, $d_{\min}$) stand for the maximum (respectively, minimum) pairwise distance among $\Space$. To handle requests from $\Space$, replaced $\Distance$ in the algorithm by $d_{\max}$. All the lemmas in Section \ref{sec:alg} remain true up to the slight modifications: $\Distance$ in the lower bounds of the space cost of $\OPT$ is replaced by $d_{\min}$, and $\Distance$ in the upper bounds of of the space cost $\ALG$ is replaced by $d_{\max}$. As a result, Theorem \ref{CROfTrivialMetric} still holds, if the upper bound $O(k)$ is relaxed to $O(k\Delta)$, where $\Delta\triangleq \frac{d_{\max}}{d_{\min}}$.

We have also proved lower bound $\Omega( k)$ (respectively, $\Omega(\log k)$) of the competitive ratio of any deterministic (respectively, randomized) algorithm solving convex-MPMD. This means that our deterministic algorithm is the optimum while optimum randomized algorithms remain open.

Hence, an interesting future direction is to design a randomized algorithm for convex-MPMD. A randomized algorithm is usually more competitive than a deterministic one when considering oblivious adversaries. We conjecture that there is a randomized algorithm for convex-MPMD with competitive ratio $O(\log k)$. If this turns out true, there is still a clear separation between linear-MPMD and convex-MPMD in the context of randomized algorithms.

In contrast to convex functions, concave functions may model the fact that in some applications the delay cost grows slower and slower, which encourages matching two new requests instead of matching old requests. It seems not difficult to design an algorithm with bounded competitive ratio for concave cost functions, but to design a good one, i.e., with a very small competitive ratio, seems still challenging. 


\begin{acknowledgements}
This work is partially supported by the National Key Research and Development Program of China (Grant No. 2016YFB1000200), the National Natural Science Foundation of China (61420106013), State Key Laboratory of Computer Architecture Open Fund (CARCH3410).
\end{acknowledgements}

\bibliographystyle{abbrv}
\bibliography{ref}

\end{document}